\documentclass[12pt]{article}

\usepackage[utf8]{inputenc}
\usepackage[T1]{fontenc}

\usepackage{a4wide}
\usepackage{amsfonts}
\usepackage{amssymb,amsmath,amsthm}
\usepackage{xspace}
\usepackage{tikz}
\usepackage{todo}
\usepackage{wrapfig}
\usepackage[shortcuts]{extdash}
\usepackage{hyperref}


\bibliographystyle{alpha}

\title{Unambiguous languages exhaust the index hierarchy\footnote{The author has been supported by Poland's National Science Centre grant no.~2016/21/D/ST6/00491.}}

\author{Micha{\l} Skrzypczak}

\newcommand{\eqdef}{\stackrel{\text{def}}=}

\newcommand{\comment}[1]{}

\newcommand{\lex}{<_{\mathrm{lex}}}
\newcommand{\lexeq}{\leq_{\mathrm{lex}}}
\newcommand{\lexge}{>_{\mathrm{lex}}}
\newcommand{\lexgeq}{\geq_{\mathrm{lex}}}

\newcommand{\defPlayer}[1]{\ensuremath{{\pmb{#1}}}\xspace}

\newcommand{\eve}{\defPlayer{\exists}}
\newcommand{\adam}{\defPlayer{\forall}}

\newcommand{\restr}{{\upharpoonright}}

\newcommand{\ident}[2]{\newcommand{#1}{\ensuremath{\texorpdfstring{\mathrm{#2}}{#2}}\xspace}}

\newcommand{\dL}{\mathtt{\scriptstyle L}}
\newcommand{\dR}{\mathtt{\scriptstyle R}}

\renewcommand{\mod}[1]{\allowbreak\mkern10mu({\operator@font mod}\,\,#1)}

\newcommand{\fun}[3]{\ensuremath{#1\colon #2 \to #3}}
\newcommand{\parfun}[3]{\ensuremath{#1\colon #2 \rightharpoonup #3}}

\ident{\dom}{dom}
\ident{\rg}{rg}
\ident{\id}{id}
\ident{\sgn}{sgn}

\newcommand{\mathcalsym}[1]{\ensuremath{\mathcal{#1}}\xspace}

\newcommand{\w}{\omega}

\newcommand{\Aa}{\mathcalsym{A}}

\newcommand{\Cc}{\mathcalsym{C}}

\newcommand{\Gg}{\mathcalsym{G}}

\newcommand{\Rr}{\mathcalsym{R}}

\newcommand{\Uu}{\mathcalsym{U}}

\newcommand{\Ww}{\mathcalsym{W}}

\ident{\lang}{L}

\ident{\init}{I}

\ident{\trees}{Tr}
\ident{\der}{der}
\ident{\rank}{rank}
\ident{\shave}{shave}

\newcommand{\dD}{\mathtt{\scriptstyle D}}

\newcommand{\pI}{{\mathtt{\scriptstyle \mathbf{1}}}}
\newcommand{\pII}{{\mathtt{\scriptstyle \mathbf{2}}}}

\newcommand{\iKnow}[1]{{\color{black}{#1}}}

\newcommand{\rmin}{\iKnow{i}}
\newcommand{\rmid}{\iKnow{j}}
\newcommand{\rpro}{\iKnow{\ell}}
\newcommand{\rmax}{\iKnow{k}}

\newcommand{\sigA}{\iKnow{\sigma}}

\newcommand{\psig}{\iKnow{s}}

\newcommand{\ordA}{\iKnow{\theta}}

\newcommand{\verA}{\iKnow{v}}

\newcommand{\finA}{\iKnow{u}}
\newcommand{\finB}{\iKnow{w}}

\newcommand{\infA}{\iKnow{\alpha}}
\newcommand{\infB}{\iKnow{\beta}}

\newcommand{\plyA}{\iKnow{\Pi}}

\newcommand{\tl}{\iKnow{p}}
\newcommand{\tr}{\iKnow{s}}

\newcommand{\boldclass}[3]{\ensuremath{\mathbf{#1}^{#2}_{#3}}}

\newcommand{\asigma}[1]{\boldclass{\Sigma}{1}{#1}}
\newcommand{\api}[1]{\boldclass{\Pi}{1}{#1}}

\newcommand{\spri}[1]{[#1]}
\newcommand{\spla}[1]{{\langle#1\rangle}}
\newcommand{\splp}[1]{\spla{#1{+}}}

\newcommand{\sneg}{{\sim}}

\newcommand{\Alp}[2]{A_{(#1,#2)}}
\newcommand{\Ang}[2]{A_{(#1,#2)}^{{\sim}}}

\newcommand{\Angp}[2]{A_{(#1,#2)}^{{+}{\sim}}}

\newcommand{\W}[2]{\ensuremath{W_{{#1},{#2}}}\xspace}
\newcommand{\Win}[3]{\mathrm{W}_{#1,(#2,#3)}}
\newcommand{\Wng}[3]{\mathrm{W}_{#1, (#2,#3)}^{{\sim}}}
\newcommand{\lan}[3]{L_{#1, (#2,#3)}}

\newcommand{\twoheaddownarrow}{\mathrel{\rotatebox[origin=c]{-90}{$\twoheadrightarrow$}}}

\newcommand{\tikzEvalInt}[2]{\pgfmathparse{int(#2)}{\global\edef#1{\pgfmathresult}}}

\usetikzlibrary{positioning}
\usetikzlibrary{decorations.pathreplacing}
\usetikzlibrary{automata,positioning}
\usetikzlibrary{decorations.pathmorphing}
\usetikzlibrary{decorations.markings}
\usetikzlibrary{decorations}
\usetikzlibrary{arrows}
\usetikzlibrary{patterns}
\usetikzlibrary{calc}
\usetikzlibrary{shapes}
\usetikzlibrary{fadings,	shadings}

\tikzstyle{ubrace} = [draw, thick, decoration={brace, mirror, raise=0.0cm}, decorate,
    every node/.style={anchor=north, yshift=-0.1cm}]
\tikzstyle{rbrace} = [draw, thick, decoration={brace, mirror, raise=0.0cm}, decorate,
    every node/.style={anchor=west, xshift= 0.1cm}]

\tikzstyle{obrace} = [draw, thick, decoration={brace, raise=0.0cm}, decorate,
    every node/.style={anchor=south, yshift= 0.1cm}]
\tikzstyle{lbrace} = [draw, thick, decoration={brace, raise=0.0cm}, decorate,
    every node/.style={anchor=east, xshift=-0.1cm}]

\tikzstyle{flow}  = [inner sep=0cm, node distance=0cm and 0cm]

\tikzstyle{toL} = [anchor=mid east, flow]
\tikzstyle{toR} = [anchor=mid west, flow]
\tikzstyle{toC} = [align=center, anchor=mid, flow]

\tikzstyle{letter} = [toC, scale=1.0]
\tikzstyle{ustate} = [toC, scale=1.0]
\tikzstyle{bstate} = [toC, scale=1.0]
\tikzstyle{transs} = [->, shorten <=2pt, shorten >=3pt]
\tikzstyle{transA} = [transs, bend left=30]

\tikzstyle{trdots}=[draw, thick, loosely dotted]

\newtheorem{theorem}{Theorem}[section]
\newtheorem{definition}[theorem]{Definition}
\newtheorem{lemma}[theorem]{Lemma}
\newtheorem{remark}[theorem]{Remark}
\newtheorem{example}[theorem]{Example}
\newtheorem{corollary}[theorem]{Corollary}

\theoremstyle{plain}
\newtheorem*{rep@theorem}{\rep@title}
\newcommand{\newreptheorem}[2]{%
\newenvironment{rep#1}[1]{%
\def\rep@title{#2 \ref{##1}}%
\begin{rep@theorem}}%
{\end{rep@theorem}}}

\newreptheorem{theorem}{Theorem}
\newreptheorem{lemma}{Lemma}
\newreptheorem{proposition}{Proposition}
\newreptheorem{conjecture}{Conjecture}
\newreptheorem{claim}{Claim}
\newreptheorem{fact}{Fact}

\newtheorem{fact}[theorem]{Fact}
\newtheorem{proposition}[theorem]{Proposition}
\newtheorem{claim}[theorem]{Claim}

\newcommand{\quas}[1]{\Sigma^{{\scriptscriptstyle \bigstar}}_{#1}}

\newcommand{\EL}{(EL)\xspace}
\newcommand{\AL}{(AL)\xspace}
\newcommand{\EI}{(EI)\xspace}
\newcommand{\AI}{(AI)\xspace}
\newcommand{\Db}{(\pmb{$\twoheaddownarrow$})\xspace}
\newcommand{\Dl}{(\pmb{$\downarrow$}$\tl$)\xspace}
\newcommand{\Dr}{(\pmb{$\downarrow$}$\tr$)\xspace}

\newcommand{\downl}{Step\ensuremath{\exists}{}(\tl)}
\newcommand{\downr}{Step\ensuremath{\forall}{}(\tr)}

\ident{\act}{Active}
\ident{\suk}{Succ}

\newcommand{\myPar}[1]{\medskip\noindent {\bf #1$\ $}}

\hyphenation{stra-te-gy}

\begin{document}

\maketitle

\begin{abstract}
This work is a~study of the expressive power of unambiguity in the case of automata over infinite trees. An~automaton is called unambiguous if it has at most one accepting run on every input, the language of such an automaton is called an unambiguous language. It is known that not every regular language of infinite trees is unambiguous. Except that, very little is known about which regular tree languages are unambiguous.

This paper answers the question whether unambiguous languages are of bounded complexity among all regular tree languages. The notion of complexity is the canonical one, called the (parity or Rabin\=/Mostowski) index hierarchy. The answer is negative, as exhibited by a~family of examples of unambiguous languages that cannot be recognised by any alternating parity tree automata of bounded range of priorities.

Hardness of the given examples is based on the theory of signatures in parity games, previously studied by Walukiewicz. This theory is further developed here to construct canonical signatures. The technical core of the article is a~parity game that compares signatures of a~given pair of parity games (without an increase in the index).
\end{abstract}

\section{Introduction}
\label{sec:intro}

Non\=/determinism provides a~machine with a~very powerful ability to \emph{guess} its choices. Depending on the actual model, it might enhance the expressive power of the considered machines or, while preserving the class of recognised languages, make the machines more succinct or effective. All these benefits come at the cost of algorithmic difficulties when handling non\=/deterministic devices. This complexity motivates a~search of ways of restricting the power of non\=/determinism. One of the most natural among these restrictions is a~semantic notion called \emph{unambiguity}: a~non\=/deterministic machine is called \emph{unambiguous} if it has at most one accepting run on every input.

Unambiguity turns out to be very intriguing in the context of automata theory~\cite{colcombet_determinism}. In the classical case of finite words it does not enhance the expressive power of the automata, still it simplifies some decision problems~\cite{stearns_unambiguous}. The situation is even more interesting in the case of infinite trees: the language of infinite trees labelled $\{a,b\}$ containing a~letter $a$ cannot be recognised by any unambiguous parity automaton~\cite{niwinski_unambiguous,walukiewicz_choice}. This example makes the impression that very few regular languages of infinite trees are in fact unambiguous (i.e.~can be recognised by an unambiguous automaton). However, there is only a~couple of distinct examples of ambiguous regular tree languages~\cite{bilkowski_unambiguity}. Our understanding of how many (or which) regular tree languages are unambiguous is far from being complete, in particular it is not known how to decide if a~given regular tree language is unambiguous.

Another way of understanding the power of unambiguous tree languages is aimed at estimating their descriptive complexity. The complexity can be measured either in terms of the topological complexity or of the parity index, i.e.~the range of priorities needed for an alternating parity tree automaton to recognise a~given language. Initially, it was considered plausible that all unambiguous tree languages are co-analytic ($\api{1}$); that is topologically not more complex than deterministic ones. Hummel in~\cite{hummel_gandalf} gave an example of an unambiguous language that is $\asigma{1}$-complete, in particular not $\api{1}$. Further improvements~\cite{duparc_unambiguous_index,hummel_phd} showed that unambiguous languages reach high into the second level of the index hierarchy. However, the question whether this is an upper bound on their index complexity was left open. In this paper we prove that it is not the case, as expressed by the following theorem.

\begin{theorem}
\label{thm:main}
For every $\rmin<\rmax$ there exists an unambiguous tree language $L$ that cannot be recognised by any alternating parity tree automaton (\emph{ATA}) that uses priorities $\{\rmin,\ldots,\rmax\}$. In other words, $L$ does not belong to the level $(\rmin,\rmax)$ of the index hierarchy.
\end{theorem}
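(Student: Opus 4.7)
The plan is to exhibit, for each pair $\rmin<\rmax$, an explicit tree language $L=L_{(\rmin,\rmax)}$ that is unambiguous yet strictly above level $(\rmin,\rmax)$ of the index hierarchy. A natural candidate is a language that, on an input tree, inspects two parity games encoded in the tree and accepts when a chosen canonical signature of the first game is dominated by a canonical signature of the second. The games embedded in the tree will be chosen with priority range strictly larger than $\{\rmin,\ldots,\rmax\}$, so that $L_{(\rmin,\rmax)}$ inherently requires reasoning at a strictly higher index than the one we are trying to preclude.

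For the unambiguity part, I would construct an automaton that reads the tree and reconstructs, on the fly, the canonical signatures of the two embedded games, checking the domination locally. The crucial property is that canonical signatures are \emph{unique}: at each node the correct signature value is forced, so an accepting run has no freedom and at most one such run exists. This reduces unambiguity to the canonicity statement developed in the technical core of the paper, rather than to an ad\=/hoc nondeterministic construction.

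For the hardness part, I would reduce from the canonical hard language $\W{\rmin}{\rmax}$, coding parity games with priorities in $\{\rmin,\ldots,\rmax\}$ won by $\eve$, which by the Bradfield/Arnold hierarchy theorem is not ATA\=/recognisable with priorities $\{\rmin,\ldots,\rmax\}$. Given an instance of $\W{\rmin}{\rmax}$, I would embed it in a tree together with a fixed reference game of known extremal signature, arranging that signature domination in $L_{(\rmin,\rmax)}$ coincides with the input game being won by $\eve$. Because the signature\=/comparison game announced in the abstract is \emph{index\=/preserving}, a hypothetical ATA for $L_{(\rmin,\rmax)}$ at level $(\rmin,\rmax)$ would translate into an ATA for $\W{\rmin}{\rmax}$ at the same level, contradicting the hierarchy.

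The principal obstacle, and the technical heart of the paper, is to build the canonical signature theory together with a signature\=/comparison parity game that does not increase the index. Both ingredients are essential: canonicity is what gives unambiguity, while index preservation is what lets hardness transfer across the reduction without slack in the priority count. Walukiewicz's signature theory must therefore be upgraded to produce \emph{unique} witnesses, and the comparison game must track both sets of signatures simultaneously in a finite\=/memory simulator whose acceptance condition is expressible with the priorities inherited from the two inputs. Once these tools are in place, the unambiguity and hardness steps above combine into Theorem~\ref{thm:main} uniformly in $\rmin<\rmax$.
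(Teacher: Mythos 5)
Your sketch correctly isolates the two technical tools the paper builds---a signature theory with canonical witnesses and an index-preserving signature-comparison game---but the way you wire them into a proof of the theorem has gaps at both the unambiguity and the hardness step, and the language you propose is not the one that works.

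\textbf{On unambiguity.} You argue that an automaton can ``reconstruct, on the fly, the canonical signatures of the two embedded games'' and that uniqueness of signatures forces the run. This is not viable: $P$\=/signatures are tuples of arbitrary countable ordinals, so a finite-state automaton cannot compute or store them; there is no way to make the state ``be'' the signature. The paper's automaton $\Uu$ never computes a signature. Its states only carry a priority, a \emph{declared winner}, and a \emph{declared strategy direction}. Unambiguity comes from two far more elementary facts: the declared winner is forced by determinacy of parity games (Fact~\ref{ft:full-to-win}), and once the winner coordinates are fixed the remaining coordinates are determined by the transition relation and the auxiliary data under the third children (Lemma~\ref{lem:second-to-other}). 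In particular, the point\=/wise minimality of the signatures (Subsection~\ref{ssec:sig-optimal}) is explicitly flagged in the paper as \emph{not} needed for the main argument; what matters is that the signature invariants hold with equalities and that $\sigA$\=/optimal strategies are winning (Lemma~\ref{lem:sig-opt-win}), which is used to show the unique run is \emph{accepting}, not that it is \emph{unique}.

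\textbf{On the language and on hardness.} You define $L$ as ``signature of the first game dominated by signature of the second'' and then reduce $\W{\rmin}{\rmax}$ to it by comparing against a ``fixed reference game of known extremal signature.'' No such reference exists: non\=/$\infty$ $P$\=/signatures are a well\=/order with no maximum, so there is no tree whose signature dominates all others yet is still finite. (The paper in fact uses a fixed\=/reference comparison in Proposition~\ref{pro:no-neg-no-red} precisely to derive a \emph{contradiction}, not a reduction.) The paper's actual language $\lan{P}{\rmin}{\rmax}$ is over the enriched alphabet $\Angp{\rmin}{\rmax}$: every choice node carries a \emph{third} subtree that pre\=/encodes, via the comparison game $\Cc_P$/$c_P$, which of the two main subtrees has the smaller signature. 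The hardness reduction $f$ is computed \emph{per node} from the input tree itself: $f\big(\spla{P}(t_\dL,t_\dR)\big)=\splp{P}\big(f(t_\dL),f(t_\dR),f(c_P(t_\dL,t_\dR))\big)$. Nothing is being compared against a fixed reference; the comparison is between the two children of each choice node, and that is exactly what lets the automaton's guess of the optimal move be forced and correct. Finally, a smaller slip: $\W{\rmin}{\rmax}$ \emph{is} recognisable by an ATA of index $(\rmin,\rmax)$; Corollary~\ref{cor:reduction} gives non\=/recognisability at the dual index $(\rmin{+}1,\rmax{+}1)$, which is the off\=/by\=/one the paper relies on. Your final paragraph conflates ``index preservation of $\Cc_P$'' with the transfer step: index preservation is what makes the \emph{trees} $c_P(t_\dL,t_\dR)$ live in the same alphabet $\Ang{\rmin}{\rmax}$; the transfer of hardness from $\W{\rmin}{\rmax}$ to $\lan{P}{\rmin}{\rmax}$ is supplied by the continuous reduction $f$ together with Corollary~\ref{cor:reduction}, not by a direct automaton translation.
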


The canonical examples of languages lying high in the index hierarchy~\cite{bradfield_simplifying,arnold_strict} are the languages $\W{\rmin}{\rmax}$ dating back to~\cite{jutla_determinacy,walukiewicz_signatures} (see e.g.~the formulae $W_n$ in~\cite{arnold_strict}). Unfortunately, the languages $\W{\rmin}{\rmax}$ are not unambiguous---one can interpret the choice problem~\cite{walukiewicz_choice} in such a~way that witnessing unambiguously that $t\in\W{1}{2}$ would indicate an MSO\=/definable choice function~\cite{shelah_choice,loding_choice}. Therefore, to prove Theorem~\ref{thm:main} we will use the following corollary of~\cite{niwinski_strict}.

\begin{corollary}[\cite{arnold_strict,niwinski_strict}]
\label{cor:reduction}
Let $L$ be a~set of trees. If there is a~continuous function $f$ s.t.~$\W{\rmin}{\rmax}=f^{-1}(L)$ then $L$ cannot be recognised by an ATA of index $(\rmin{+}1,\rmax{+}1)$.
\end{corollary}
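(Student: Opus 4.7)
The plan is to combine two well\-/known ingredients: the strictness theorem of Niwi\'nski and Arnold for the alternating index hierarchy, together with the closure of each level of that hierarchy under continuous reductions. In particular, the reason the languages $\W{\rmin}{\rmax}$ are chosen as the canonical hard examples is that their complexity propagates to any language that continuously reduces to them.

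First I would invoke the strictness theorem from~\cite{niwinski_strict,arnold_strict}: the language $\W{\rmin}{\rmax}$ itself is \emph{not} recognisable by any ATA using priorities in $\{\rmin{+}1,\ldots,\rmax{+}1\}$. Concretely, $\W{\rmin}{\rmax}$ occupies the level $(\rmin,\rmax)$ of the index hierarchy and witnesses its strictness.

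The second step is to observe that the class of tree languages recognised by ATAs of index $(\rmin{+}1,\rmax{+}1)$ is closed under continuous preimages. Suppose $\Aa$ is an ATA of that index recognising $L$ and let $f$ be continuous with $\W{\rmin}{\rmax}=f^{-1}(L)$. Membership of a tree $t$ in $f^{-1}(L)$ is equivalent to membership of $f(t)$ in $L$, and the latter is decided by the acceptance parity game of $\Aa$ on $f(t)$, with priorities ranging over $\{\rmin{+}1,\ldots,\rmax{+}1\}$. Continuity of $f$ guarantees that any fixed finite prefix of $f(t)$ is determined by some finite prefix of $t$, which lets one replay the acceptance game of $\Aa$ directly over $t$ without introducing any new priorities. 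Hence $f^{-1}(L)=\W{\rmin}{\rmax}$ would itself be ATA\-/recognisable at index $(\rmin{+}1,\rmax{+}1)$, contradicting the first step.

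The main obstacle, as usual with such reductions, is making the closure under continuous preimages fully rigorous; the cleanest route is the Wadge\-/style completeness phrasing from~\cite{arnold_strict}, where $\W{\rmin}{\rmax}$ is shown to be continuously complete for its level, so the existence of a continuous reduction from $\W{\rmin}{\rmax}$ to $L$ forces $L$ to inherit $\W{\rmin}{\rmax}$'s position in the hierarchy. Beyond that, the argument is a one\-/line combination of the two ingredients above.
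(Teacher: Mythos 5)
The paper itself gives no proof of this corollary---it is cited directly from \cite{arnold_strict,niwinski_strict}---so there is no in-paper argument to compare against; I will assess your proposal on its own terms.

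Your second ingredient is wrong as stated, and the error is not merely cosmetic. The class of tree languages recognised by ATAs of index $(\rmin{+}1,\rmax{+}1)$ is \emph{not} closed under arbitrary continuous preimages. One quick way to see this: over a fixed alphabet there are only countably many ATAs, hence only countably many regular tree languages, while the family $\{f^{-1}(L): f\text{ continuous}\}$ for even a single non-clopen regular $L$ already has the cardinality of the continuum (it contains, say, all boldface sets of the corresponding Borel level). So your deduction ``hence $f^{-1}(L)=\W{\rmin}{\rmax}$ would itself be ATA-recognisable at index $(\rmin{+}1,\rmax{+}1)$'' does not follow; the informal argument that one can ``replay the acceptance game of $\Aa$ directly over $t$'' produces a game, but not one given by a \emph{finite-state} automaton, since the finite prefix of $t$ determining a given node of $f(t)$ grows without bound and depends on $t$ in an arbitrary continuous (not finite-state) manner.

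The fix is precisely the thing you flag at the end as ``the cleanest route,'' so the correct idea is in your last paragraph but not in your main argument. What \cite{arnold_strict} (and \cite{niwinski_strict}) actually prove is a \emph{topological} strictness: there is a hierarchy of classes defined via game quantifiers (not via automata) that (i) contains, at each level, every language recognisable by an ATA of the corresponding index, (ii) \emph{is} closed under continuous preimages, and (iii) does not contain $\W{\rmin}{\rmax}$ at level $(\rmin{+}1,\rmax{+}1)$. With that stronger statement the corollary is immediate: if $L$ were recognisable by an ATA of index $(\rmin{+}1,\rmax{+}1)$ it would lie in the topological class of that level, hence so would $f^{-1}(L)=\W{\rmin}{\rmax}$, contradiction. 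You should replace the automaton-closure step by this passage through the game-quantifier classes; the rest of your outline then goes through.
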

Our aim is to enrich in a~continuous way a~given tree $t$ with some additional information denoted $f(t)$, such that an unambiguous automaton reading $f(t)$ can verify if $t\in\W{\rmin}{\rmax}$. Although this method is based on the topological concept of a~continuous mapping $f$, the construction provided in this paper is purely combinatorial; the core is a~definition of a~parity game $\Cc_P$ that compares the difficulty of a~given pair of parity games.


\section{Basic notions}
\label{sec:basic}

We use $\finA\cdot\finB$ to represent the concatenation of the two sequences. The symbol ${\preceq}$ stands for the prefix order. By $\w=\{0,1,\ldots\}$ we denote the set of natural numbers.

A~\emph{(ranked) alphabet} is a~non\=/empty finite set $A$ of \emph{letters} where each letter $a\in A$ comes with its own finite \emph{arity}. A~\emph{tree} over an alphabet $A$ is a~partial function $\parfun{t}{\w^\ast}{A}$ where the domain $\dom(t)$ is non\=/empty, prefix\=/closed, and if $\finA\in\dom(t)$ is a~\emph{node} with a~$k$ary letter $a= t(\finA)$ then $\finA\cdot l\in\dom(t)$ if and only if $l<k$, i.e.~$\finA$ (the \emph{father}) has \emph{children} $\finA\cdot 0$, $\finA\cdot 1$, \ldots, $\finA\cdot (k{-}1)$. The set of all trees over $A$ is denoted $\trees_A$. The element $\epsilon\in\dom(t)$ is called the \emph{root} of $t$. A~\emph{branch} of a~tree $t$ is a~sequence $\infA\in \w^\w$ such that for all $n\in\w$ the finite prefix $\infA\restr n$ is a~node of $t$. It is easy to encode ranked alphabets using alphabets of fixed arity (or even binary), however for the technical simplicity we will work with ranked ones here.

If $\finA\in\dom(t)$ is a~node of a~tree $t$ then by $t\restr \finA$ we denote the tree $\finB\mapsto t(\finA\cdot \finB)$ with the domain $\{\finB\mid \finA\cdot \finB\in\dom(t)\}$. A~tree of the form $t\restr \finA$ for $\finA\in\dom(t)$ is called a~\emph{subtree} of $t$ in~$\finA$. If $a\in A$ is a~$k$ary letter and $t_0,\ldots,t_{k-1}$ are trees then by $a\big(t_0,\ldots,t_{k-1}\big)$ we denote the unique tree $t$ with $t(\epsilon)=a$ and $t\restr(i)=t_i$ for $i=0,\ldots,k{-}1$.

If $X\subseteq \dom(t)$ is a~set of nodes of a~tree $t$ and $\finA\in\dom(t)$ then by $X\restr\finA$ we denote the set $\{\finB\mid\finA\cdot\finB\in X\}$, which is a~subset of $\dom(t\restr\finA)$.

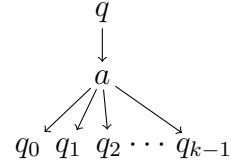
\begin{wrapfigure}{r}{5.5cm}
\centering
\begin{tikzpicture}[scale=0.9]
\node[ustate] (un) at ( 1.1, +0.0) {$q$};
\node[letter] (ln) at ( 1.1, -1.0) {$a$};

\draw[transs] (un) -- (ln);

\node[bstate] (bn) at (0.0, -2) {$q_0$};
\draw[transs] (ln) -- (bn);
\node[bstate] (bn) at (0.6, -2) {$q_1$};
\draw[transs] (ln) -- (bn);
\node[bstate] (bn) at (1.2, -2) {$q_2$};
\draw[transs] (ln) -- (bn);

\node[bstate] (bn) at (1.8, -2) {$\cdots$};

\node[bstate] (bn) at (2.6, -2) {$q_{k-1}$};
\draw[transs] (ln) -- (bn);
\end{tikzpicture}
\caption{A~representation of a~transition $(q,a,q_0,\ldots,q_{k-1})$, for a~$k$ary letter~$a$.}
\label{fig:transition}
\end{wrapfigure}
\myPar{Automata} A~\emph{non\=/deterministic parity tree automaton} is a~tuple $\Aa=\langle A, Q, \Delta, I, \Omega\rangle$, where $A$ is a~ranked alphabet; $Q$ is a~finite set of \emph{states}; $\Delta$ is a~finite set of \emph{transitions}---tuples of the form $(q,a,q_0,\ldots,q_{k-1})$ where $a\in A$ is a~$k$ary letter and $q,q_0,\ldots,q_{k-1}$ are states; $I\subseteq Q$ is a~set of \emph{initial states}; and $\fun{\Omega}{Q}{\w}$ is a~\emph{priority} mapping.

A~\emph{run} of an automaton $\Aa$ over a~tree $t$ over the alphabet~$A$ is a~function $\fun{\rho}{\dom(t)}{Q}$ such that $\rho(\epsilon)\in I$ and for every $\finA\in \dom(t)$ with a~$k$ary letter $a= t(\finA)$, the tuple $\big(\rho(\finA),a,\rho(\finA\cdot 0),\ldots,\rho(\finA\cdot (k{-}1))\big)$ is a~transition in $\Delta$. A~run~$\rho$ is \emph{accepting} if for every branch $\infA$ of $t$ the lowest priority of the states appearing infinitely many times along $\infA$ (i.e.~$\liminf_{n\to\infty}\Omega\big(\rho(\infA\restr n)\big)$) is even. An~automaton $\Aa$ \emph{accepts} a~tree $t$ if there exists an accepting run of $\Aa$ over $t$. The language of an automaton $\Aa$ (denoted $\lang(\Aa)$) is the set of trees accepted by $\Aa$. A~set of trees over an~alphabet $A$ is called \emph{regular} if it is recognised by a~non\=/deterministic parity tree automaton. For a~detailed introduction to the theory of automata over infinite trees, see~\cite{thomas_languages}.

An automaton $\Aa$ is \emph{unambiguous} if for every tree $t$ there exists at most one accepting run of $\Aa$ over $t$. An automaton $\Aa$ is \emph{deterministic} if $I=\{q_\init\}$ is a~singleton and for every $q\in Q$ and $k$ary letter $a\in A$ it has at most one transition of the form $(q,a,q_0,\ldots,q_{k-1})$ in $\Delta$. A~language is \emph{unambiguous} (resp. \emph{deterministic}) if it can be recognised by an unambiguous (resp. deterministic) automaton. Clearly each deterministic automaton is unambiguous but the converse is not true. Due to~\cite{walukiewicz_choice} we know that there are regular tree languages that are ambiguous (i.e.~not unambiguous).

\myPar{Games} A~\emph{game} with players $\pI$ and $\pII$ is a~tuple $\Gg=\langle V, E, v_\init, W\rangle$ where: $V=V_{\pI}\sqcup V_{\pII}$ is a~set of \emph{positions} split into the \emph{$\pI$\=/positions} $V_{\pI}$ and \emph{$\pII$\=/positions} $V_{\pII}$; $E\subseteq V\times V$ is a~set of \emph{edges}; $v_\init\in V$ is an \emph{initial position}; and $W\subseteq V^\w$ is a~\emph{winning condition}. We will denote by $P$ the players, i.e.~$P\in\{\pI,\pII\}$, $\bar{P}$ is the opponent of $P$. For $\verA\in V$ by $\verA\cdot E$ we denote the set of successors $\{\verA'\mid (\verA,\verA')\in E\}$. We assume that for each $\verA\in V$ the set $\verA\cdot E$ is non\=/empty. A~non\=/empty finite or infinite sequence $\plyA\in V^{\leq \w}$ is a~\emph{play} if $\plyA(0)=\verA_\init$ and for each $0<i<|\plyA|$ there is an edge $\big(\plyA(i{-}1),\plyA(i)\big)$. Notice that if $\langle V, E\rangle$ is a~tree then there is an equivalence between finite plays and positions $\verA\in V$. An infinite play $\plyA$ is \emph{winning for $\pI$} if $\plyA\in W$; otherwise $\plyA$ is winning for $\pII$.

A~non\=/empty and prefix\=/closed set of plays $\Sigma$ with no ${\preceq}$\=/maximal element (i.e.~no leaf) is called a~\emph{behaviour}. We call a~behaviour \emph{$P$\=/full} if for every play $(\verA_0,\ldots,\verA_n)\in\Sigma$ with $\verA_n\in V_{P}$ and all $\verA'\in \verA\cdot E$ we have $(\verA_0,\ldots,\verA_n,\verA')\in \Sigma$. We call a~behaviour \emph{$P$\=/deterministic} if for every play $(\verA_0,\ldots,\verA_n)\in\Sigma$ with $\verA_n\in V_{P}$ there is a~unique $\verA'\in \verA\cdot E$ such that $(\verA_0,\ldots,\verA_n,\verA')\in \Sigma$.

A~\emph{quasi\=/strategy} of a~player $P$ is a~behaviour that is $\bar{P}$\=/full. A~\emph{strategy} of $P$ is a~quasi\=/strategy of $P$ that is $P$\=/deterministic. A~quasi\=/strategy is \emph{positional} if the fact whether a~play $(\verA_0,\ldots,\verA_n, \verA_{n+1})$ belongs to $\Sigma$ depends only on $\verA_n$.

A~\emph{partial strategy} of $P$ is a~$P$\=/deterministic behaviour---it defines the unique choices of $P$ but may not respond to some choices of $\bar{P}$. We say that a~play $(\verA_0,\ldots,\verA_n, \verA_{n+1})\notin\Sigma$ is \emph{not reachable} by a~partial strategy $\Sigma$ if $(\verA_0,\ldots,\verA_n)\in \Sigma$ and $\verA_n\in V_{\bar{P}}$. If $\Sigma$ is a~(partial) strategy of $P$ and $(\verA_0,\ldots,\verA_n,\verA')\in\Sigma$ with $\verA_n\in V_P$ then we say that $\Sigma$ \emph{moves to} $\verA'$ in $(\verA_0,\ldots,\verA_n)$.

A~strategy $\Sigma$ of $P$ is \emph{winning} if every \emph{infinite play of $\Sigma$} (i.e.~$\plyA$ such that $\forall n\in\w.\ \plyA\restr n\in\Sigma$) is winning for $P$. A~game is (\emph{positionally}) \emph{determined} if one of the players has a~(positional) winning strategy. We say that a~position $\verA$ of a~game $\Gg$ is \emph{winning for $P$} (resp. \emph{losing for $P$}) if $P$ (resp. $\bar{P}$) has a~winning strategy in the game $\Gg$ with $\verA_\init:= \verA$.

\myPar{Topology} In this work we use only basic notions of descriptive set theory and topology, see~\cite{kechris_descriptive,thomas_topology} for a~broader introduction. The space $\trees_A$ with the product topology is homeomorphic to the Cantor space. One can take as the basis of this topology the sets of the form $\{t\in\trees_A\mid t(\finA_1){=}a_1, t(\finA_2){=}a_2,\ldots, t(\finA_n){=}a_n\}$ for finite sequences $(\finA_1$, $a_1$, $\finA_2$, $a_2, \ldots, \finA_n$, $a_n)$. The open sets in $\trees_A$ are obtained as unions of basic open sets. A~function $\fun{f}{X}{Y}$ is continuous if the pre\=/image of each basic open set in $Y$ is open in $X$.

\section{The languages}
\label{sec:language}

Let us fix a~pair of numbers $\rmin < \rmax$. Our aim is to encode a~general parity game with players $\pI$ and $\pII$ and priorities $\{\rmin,\ldots,\rmax\}$ as a~tree over a~fixed ranked alphabet $\Alp{\rmin}{\rmax}$. That alphabet consists of: unary symbols $\spri{\rmin},\spri{\rmin{+}1},\ldots, \spri{\rmax}$ indicating priorities of positions; and binary symbols $\spla{\pI}$ and $\spla{\pII}$ which leave the choice of the subtree to the respective player.

The game induced by a~tree $t\in\trees_{\Alp{\rmin}{\rmax}}$ is denoted $\Gg(t)$. The set of positions of $\Gg(t)$ is $\dom(t)$ and the edge relation contains pairs father---child. The initial position is $\epsilon$ and a~position $\verA\in\dom(t)$ is a~$\pI$\=/position iff $t(\verA)=\spla{\pI}$. An infinite play of that game is won by $\pI$ if and only if the minimal priority $\rmid$ that occurs infinitely often during the play is even\footnote{We restrict our attention to the trees in which every second symbol on each branch is a~unary symbol representing a~priority (i.e.~$\spri{\rmid}$ for $\rmid\in\{\rmin,\ldots,\rmax\}$); every tree can implicitly be transformed into that format by padding with the maximal priority $\rmax$ (such a~padding does not influence the winner of $\Gg(t)$).}. Since the graph of $\Gg(t)$ is a~tree, we identify finite plays in $\Gg(t)$ with positions $\verA\in\dom(t)$. Therefore, (quasi / partial) strategies in $\Gg(t)$ can be seen as specific subsets $\Sigma\subseteq\dom(t)$ and infinite plays of these strategies as branches of $t$. 

For a~player $P\in\{\pI,\pII\}$ the language $\Win{P}{\rmin}{\rmax}$ contains a~tree $t$ if $P$ has a~winning strategy in $\Gg(t)$. It is easy to see that $\Win{\pI}{\rmin}{\rmax}$ is homeomorphic (i.e.~topologically equivalent) to $\W{\rmin}{\rmax}$ from~\cite{walukiewicz_signatures} (the case of $P=\pII$ is dual, we put $\W{\rmin+1}{\rmax+1}$ then).

As it turns out, the languages $\Win{P}{\rmin}{\rmax}$ are not expressive enough to allow enrichment of a~tree $t$ into $f(t)$
. To enlarge their expressive power we will extend the alphabet with a~unary symbol $\sneg$ that will correspond to a~swap of the players in $\Gg(t)$. The enhanced alphabet will be denoted $\Ang{\rmin}{\rmax}$. We will say that a~tree $t$ over the alphabet $\Ang{\rmin}{\rmax}$ is \emph{well\=/formed} if there is no branch with infinitely many symbols~$\sneg$.

Consider a~node $\finA\in\dom(t)$ in a~well\=/formed tree $t$ over $\Ang{\rmin}{\rmax}$. We will say that $\finA$ is \emph{switched} if there is an odd number of nodes $\finB\prec \finA$ such that $t(\finB)=\sneg$. Otherwise $\finA$ is \emph{kept}. These notions represent the fact that each symbol $\sneg$ swaps the players in $\Gg(t)$.

If $t\in\trees_{\Ang{\rmin}{\rmax}}$ is well\=/formed then the game $\Gg(t)$ is well\=/defined in a similar way as before. Formally, the set of positions $V$, the initial position $v_\init$, and the edge relation $E$ are defined in the same way as in the case of $t$ over $\Alp{\rmin}{\rmax}$. It remains to define $V_{\pI}$, $V_{\pII}$, and $W$, taking into account the swapping symbol $\sneg$.

Let a~position $\verA\in V$ belong to $V_{\pI}$ if either $\verA$ is \emph{kept} and $t(\verA)=\spla{\pI}$ or $\verA$ is \emph{switched} and $t(\verA)=\spla{\pII}$. The priority of a~position $\verA$ such that $t(\verA)=\spri{\rmid}$ is $\rmid$ if $\verA$ is \emph{kept} and $\rmid+1$ otherwise. An infinite play of $\Gg(t)$ is winning for $\pI$ if the least priority that occurs infinitely many times is even. The language $\Wng{P}{\rmin}{\rmax}$ contains a~well\=/formed tree $t$ over $\Ang{\rmin}{\rmax}$ if $P$ has a~winning strategy in $\Gg(t)$.

\begin{remark}
By the assumption of well\=/formedness, if $\plyA$ is an infinite play of $\Gg(t)$ (i.e.~a branch of $t$) then for sufficiently big $n$ either all the nodes $\plyA\restr n$ are \emph{switched} or all the nodes $\plyA\restr n$ are \emph{kept}. Thus, we can call the play $\plyA$ accordingly as \emph{switched} or \emph{kept}.

The parity condition $W$ of $\Gg(t)$ contains an infinite play $\plyA$ if one of the cases holds:
\begin{itemize}
\item $\plyA$ is \emph{kept} and the minimal number $\rmid$ such that for infinitely many $n$ we have $t(\plyA\restr n)=\spri{\rmid}$ is even,
\item $\plyA$ is \emph{switched} and the minimal number $\rmid$ such that for infinitely many $n$ we have $t(\plyA\restr n)=\spri{\rmid}$ is odd.
\end{itemize}
\end{remark}

Since the games $\Gg(t)$ have a parity winning condition and therefore are determined (see~\cite{jutla_determinacy,mostowski_parity_games}), we obtain:

\begin{fact}
If $t\in\trees_{\Ang{\rmin}{\rmax}}$ is well\=/formed then the following equivalence holds:
\[t\in\Wng{P}{\rmin}{\rmax}\quad\text{ iff }\quad t\notin\Wng{\bar{P}}{\rmin}{\rmax}\quad\text{ iff }\quad\sneg(t)\in\Wng{\bar{P}}{\rmin}{\rmax}.\]
%
\end{fact}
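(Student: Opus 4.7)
The statement is a chain of two equivalences, and I would prove them in sequence.

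For the first equivalence, $t \in \Wng{P}{\rmin}{\rmax}$ iff $t \notin \Wng{\bar{P}}{\rmin}{\rmax}$, I would simply invoke positional determinacy of parity games, as cited immediately above the statement. One needs only to check that $\Gg(t)$ is a legitimate parity game: the arena $\dom(t)$ has no dead-ends because the alphabet $\Ang{\rmin}{\rmax}$ contains no nullary symbols, and the priorities assigned to positions lie in the finite range $\{\rmin, \ldots, \rmax{+}1\}$. Determinacy then yields that exactly one of $P$, $\bar{P}$ has a winning strategy.

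For the second equivalence, I would exhibit a natural isomorphism between $\Gg(t)$ and $\Gg(\sneg(t))$ that swaps the two players. The key observation is that the extra $\sneg$ at the root of $\sneg(t)$ flips the \emph{switched}/\emph{kept} status of every corresponding node: the node $0 \cdot \finA$ of $\sneg(t)$ is \emph{switched} if and only if $\finA$ is \emph{kept} in $t$. This single flip has exactly two effects on the induced game. First, every $\spla{Q}$-labeled position changes its owner, by the defining disjunction of $V_{\pI}$. Second, every $\spri{\rmid}$-labeled position has its priority shifted by one (from $\rmid$ to $\rmid{+}1$ or back), so the \emph{parity} of that priority flips. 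The fresh root $\epsilon$ of $\sneg(t)$ is itself unary, contributing neither a choice nor a priority to any play, and so plays a trivial role.

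These two effects compose into a perfect duality. A winning strategy for $P$ in $\Gg(t)$ transports verbatim into a winning strategy for $\bar{P}$ in $\Gg(\sneg(t))$: the ownership flip converts $P$-choices into $\bar{P}$-choices, and along any infinite play the minimum priority occurring infinitely often has its parity reversed, so the winning condition for $P$ becomes exactly the winning condition for $\bar{P}$. Thus $P$ wins $\Gg(t)$ iff $\bar{P}$ wins $\Gg(\sneg(t))$, which combined with the first equivalence closes the chain. I do not anticipate a genuine obstacle here; the only delicate point is keeping track of both flips simultaneously and confirming that the ownership flip and the parity shift reinforce each other into a single player-swap rather than canceling out. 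Once these are separated as independent consequences of the \emph{switched}/\emph{kept} flip, the rest is a straightforward unfolding of the definitions of $V_{\pI}$, the priority of a position, and the winning condition.
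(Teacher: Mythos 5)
Your proof is correct and follows exactly the route the paper implicitly takes: the paper states this Fact as an immediate consequence of parity-game determinacy (for the first equivalence), leaving the $\sneg$-as-player-swap observation (for the second) as a definitional unwinding, which you carry out carefully. Your key observation --- that the extra $\sneg$ at the root flips the \emph{switched}/\emph{kept} status of every node, which simultaneously flips position ownership and the parity of each priority, and that these two flips reinforce rather than cancel --- is precisely the content that justifies the second equivalence.
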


The additional information added by $f$ will be kept under third children of new ternary variants of the symbols $\spla{\pI}$ and $\spla{\pII}$, denoted $\splp{\pI}$ and $\splp{\pII}$. Let the alphabet $\Angp{\rmin}{\rmax}$ be $\Ang{\rmin}{\rmax}$ where instead of the symbols $\spla{\pI}$ and $\spla{\pII}$ we have $\splp{\pI}$ and $\splp{\pII}$ respectively.

Consider a~tree $r$ over the extended alphabet $\Angp{\rmin}{\rmax}$. By $\shave(r)$ we denote the tree over the non\=/extended alphabet $\Ang{\rmin}{\rmax}$, where instead of each subtree of the form $\splp{\pI}(t_\dL,t_\dR,t_2)$ one puts the subtree $\spla{\pI}(t_\dL,t_\dR)$; the same for $\splp{\pII}$ and $\spla{\pII}$. More formally, the function $\fun{\shave}{\trees_{\Angp{\rmin}{\rmax}}}{\trees_{\Ang{\rmin}{\rmax}}}$ is defined recursively as:
\begin{align*}
\shave\big(\splp{P}(t_\dL,t_\dR,t_2)\big)&=\spla{P}\Big(\shave(t_\dL),\shave(t_\dR)\Big)&\text{for $P\in\{\pI,\pII\}$,}\\
\shave\big(\sneg(t)\big)&=\sneg\big(\shave(t)\big),\\
\shave\big(\spri{\rmid}(t)\big)&=\spri{\rmid}\big(\shave(t)\big)&~\text{for $\rmid\in\{\rmin,\ldots,\rmax\}$.}
\end{align*}

Notice that $\dom(\shave(r))\subseteq\dom(r)$ and the labels of $\shave(r)$ correspond to the labels of $r$ in the respective nodes (up to the additional ${+}$ in $r$). We will say that a~tree $r$ over the alphabet $\Angp{\rmin}{\rmax}$ is \emph{well\=/formed} if for every its subtree $r'=r\restr \finA$ the tree $\shave(r')$ is well\=/formed in the standard sense. In other words, $r$ is well\=/formed if there is no branch of $r$ that contains infinitely many symbols $\sneg$ but only finitely many directions $2$ (the direction $2$ corresponds to moving outside $\shave(r')$).

We are now in position to define the witnesses proving Theorem~\ref{thm:main}. For that we will define an unambiguous automaton $\Uu$ recognising certain language of trees over the alphabet $\Angp{\rmin}{\rmax}$. In this section we will prove that $\Uu$ is unambiguous. In the rest of the article we show that $\Uu$ (with a~restricted set of initial states) recognises a~language high in the index hierarchy.

\begin{definition}
The set of states of $\Uu$ is $\{0,\ldots,\rmax{+}2\}\times\{\pI,\pII\}\times\{\dD,\dL,\dR\}$. Let $\Omega(\rmid,P,d)=\rmid$. The transitions of $\Uu$ are depicted in Figure~\ref{fig:full-strat}. The set of initial states of $\Uu$ contains all the states of the form $(0,\star,\star)$ (recall that $\star$ represents all the possible choices).
\end{definition}

Intuitively, the first coordinate of a~state $q$ of $\Uu$ is its priority; the second coordinate is the winner for $\Gg(\shave(r'))$ for the current subtree~$r'$; while the third coordinate indicates the actual strategy if there is ambiguity and $\dD$ otherwise. The transition over $\sneg$ represents that $\sneg$ swaps the players; the next two transitions correspond to positions that are not controlled by a~(claimed) winner $P$ over a~given subtree; and the last two transitions correspond to a~position that is controlled by $P$. In the lower two transitions the choice of a~direction $\dL$ or $\dR$ depends on the declared winner $P$ in the third child of the current node.


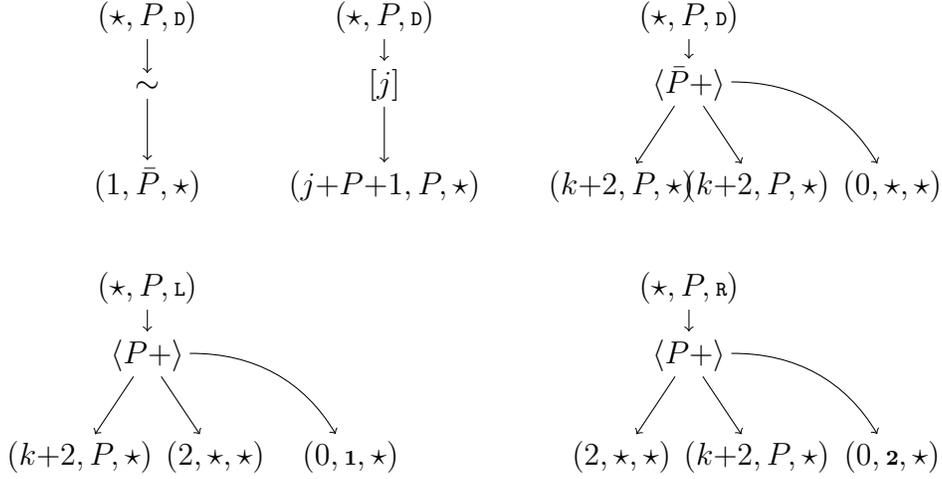
\begin{figure}
\centering
\begin{tikzpicture}[scale=0.9]
\node[ustate] (un) at ( 1.0, +1) {$(\star,P,\dD)$};
\node[letter] (ln) at ( 1.0, -0) {$\sneg$};
\node[bstate] (bn) at ( 1.0, -1.5) {$(1,\bar{P},\star)$};
\draw[transs] (un) -- (ln);
\draw[transs] (ln) -- (bn);

\node[ustate] (up) at ( 4.5, +1) {$(\star,P,\dD)$};
\node[letter] (lp) at ( 4.5, -0) {$\spri{\rmid}$};
\node[bstate] (bp) at ( 4.5, -1.5) {$(\rmid{+}P{+}1,P,\star)$};
\draw[transs] (up) -- (lp);
\draw[transs] (lp) -- (bp);

\node[ustate] (uO) at ( 9.0, +1) {$(\star,P,\dD)$};
\node[letter] (lO) at ( 9.0, -0) {$\splp{\bar{P}}$};
\node[bstate] (LO) at ( 8.0, -1.5) {$(\rmax{+}2,P,\star)$};
\node[bstate] (RO) at (10.0, -1.5) {$(\rmax{+}2,P,\star)$};
\node[bstate] (tO) at (12.0, -1.5) {$(0,\star,\star)$};
\draw[transs] (uO) -- (lO);
\draw[transs] (lO) -- (LO);
\draw[transs] (lO) -- (RO);
\draw[transA] (lO) edge[transA] (tO);

\node[ustate] (uO) at ( 1.0, -3.0) {$(\star,P,\dL)$};
\node[letter] (lO) at ( 1.0, -4.0) {$\splp{P}$};
\node[bstate] (LO) at ( 0.0, -5.5) {$(\rmax{+}2,P,\star)$};
\node[bstate] (RO) at ( 2.0, -5.5) {$(2,\star,\star)$};
\node[bstate] (tO) at ( 4.0, -5.5) {$(0,\pI,\star)$};
\draw[transs] (uO) -- (lO);
\draw[transs] (lO) -- (LO);
\draw[transs] (lO) -- (RO);
\draw[transA] (lO) edge[transA] (tO);

\node[ustate] (uO) at ( 9.0, -3.0) {$(\star,P,\dR)$};
\node[letter] (lO) at ( 9.0, -4.0) {$\splp{P}$};
\node[bstate] (LO) at ( 8.0, -5.5) {$(2,\star,\star)$};
\node[bstate] (RO) at (10.0, -5.5) {$(\rmax{+}2,P,\star)$};
\node[bstate] (tO) at (12.0, -5.5) {$(0,\pII,\star)$};
\draw[transs] (uO) -- (lO);
\draw[transs] (lO) -- (LO);
\draw[transs] (lO) -- (RO);
\draw[transA] (lO) edge[transA] (tO);
\end{tikzpicture}
\caption{The transitions of the automaton $\Uu$, where $P\in\{\pI,\pII\}$ stands for a~player; $\rmid\in\{\rmin,\ldots,\rmax\}$ is a~priority; and $\star$ represents all the possible choices on a~given coordinate.}
\label{fig:full-strat}
\end{figure}

Consider a~run $\rho$ of $\Uu$ over a~tree $r$, let $\finA\in\dom(r)$, and assume that $\rho(\finA)$ is of the form $(\star,P,\star)$. In that case one can extract from the third coordinates of $\rho$ a~strategy $\Sigma$ of $P$ in $\Gg(\shave(r\restr\finA))$ that will be called the \emph{$\rho$-strategy in $\finA$}. This strategy is defined inductively, preserving the invariant that for each $\finB\in\Sigma$ the node $\finB$ is \emph{kept} in $\shave(r\restr\finA)$ if and only if $\rho(\finA\cdot\finB)$ is of the form $(\star,P,\star)$. We start with $\Sigma$ containing the initial position $\epsilon$. Now consider a~position $\finB$ in $\Sigma$. If $\finB$ is controlled by $P$ (i.e.~$r(\finA\cdot\finB)=\splp{P}$ for $\finB$ \emph{kept} and $\splp{\bar{P}}$ for $\finB$ \emph{switched}) then the strategy $\Sigma$ moves to the position $\finB\cdot 0$ (resp. $\finB\cdot 1$) if the state $\rho(\finA\cdot\finB)=(\star,\star,d)$ satisfies $d=\dL$ (resp. $d=\dR$). In the positions $\finB\in\Sigma$ not controlled by $P$ the strategy $\Sigma$ has no choice and contains all the children of $\finB$ in $\shave(r\restr\finA)$. It is easy to check that the transitions of $\Uu$ guarantee that the invariant is preserved.

\newcommand{\lemCorrectStrategy}{
Let $\rho$ be a~run of $\Uu$ over $r$. Then $\rho$ is accepting if and only if $r$ is well\=/formed and for every $\finA\in\dom(r)$ the $\rho$-strategy in $\finA$ is winning in $\Gg\big(\shave(r\restr\finA)\big)$.
}

\begin{lemma}
\label{lem:correct-strategy}
\lemCorrectStrategy
\end{lemma}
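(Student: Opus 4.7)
The plan is to analyse each infinite branch $\alpha$ of $r$ and match the automaton priorities encountered by $\rho$ along $\alpha$ with the game priorities and strategy data. Three observations about Figure~\ref{fig:full-strat} drive the matching: (i) priority $0$ arises only at a child reached via direction~$2$; (ii) priority $1$ arises from the $\sneg$-transitions (and, when $\rmin=0$, additionally from $\spri{\rmin}$-transitions whose source state has second coordinate $\pII$); (iii) priority $2$ arises only at an off-strategy child of a $\splp{Q}$-node whose source state has second coordinate $Q$. Under the numerical convention $\pI=1$, $\pII=0$, the automaton priority $\rmid{+}P{+}1$ matches, modulo $2$, the game priority $\rmid$ (kept) or $\rmid{+}1$ (switched) in such a way that a winning strategy for the tracked player $P$ corresponds to an even $\liminf$ in the automaton; this parity dictionary is the workhorse of both directions.

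For the $(\Leftarrow)$ direction, assume $r$ is well-formed and every $\rho$-strategy is winning, and fix an infinite branch $\alpha$ of $r$. If $\alpha$ visits direction~$2$ infinitely often then (i) gives $\liminf=0$, so $\alpha$ is accepting. Otherwise pick $N$ past the last direction-$2$ step and set $\finA_0 := \alpha\restr N$; then $\alpha_{\geq N}$ is an infinite branch of $\shave(r\restr\finA_0)$ on which, by well-formedness, only finitely many $\sneg$-symbols appear. I would iteratively define reset nodes $\finA_0 \prec \finA_1 \prec \cdots$ along $\alpha$: at each $\finA_i$ either $\alpha$ follows the $\rho$-strategy $\Sigma^{\finA_i}$ forever, or $\alpha$ deviates at some $\splp{Q}$-node whose off-strategy child becomes $\finA_{i+1}$ and carries priority $2$ by (iii). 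If the chain is infinite then priority $2$ recurs while no strictly smaller priority does, so $\liminf=2$ and $\alpha$ is accepting. If the chain terminates at some $\finA_k$, then $\alpha$ beyond $\finA_k$ is an infinite play of the winning strategy $\Sigma^{\finA_k}$, and the parity dictionary converts ``winning for the stable second coordinate $P^*$'' into ``automaton $\liminf$ is even'' in each of the four kept/switched $\times$ $\pI/\pII$ combinations.

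For the $(\Rightarrow)$ direction, assume $\rho$ is accepting. Well-formedness comes first: if some $\shave(r\restr\finA)$ had a branch with infinitely many $\sneg$'s, the corresponding branch of $r$ would avoid direction~$2$ past $\finA$, so by (i) priority $0$ would not recur while priority $1$ from the $\sneg$-transitions would, forcing $\liminf=1$ and contradicting acceptance. For the winning property, fix $\finA$ with $P^* :=$ second coordinate of $\rho(\finA)$, and suppose toward contradiction that $\Sigma^\finA$ admits an infinite play $\beta$ losing for $P^*$ in $\Gg(\shave(r\restr\finA))$. Because $\beta$ follows $\Sigma^\finA$, by (iii) the branch $\finA\cdot\beta$ of $r$ never triggers a $(2,\star,\star)$-reset; combined with $\beta$ avoiding direction~$2$ and well-formedness, the second coordinate along $\finA\cdot\beta$ stabilises, and the parity dictionary then converts losing for $P^*$ directly into an odd automaton $\liminf$ on this branch, contradicting acceptance.

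I expect the main obstacle to be the reset-chasing argument in the $(\Leftarrow)$ direction, specifically ensuring that in the ``infinite deviations'' case the extra priority-$1$ occurrences noted in (ii) (only relevant when $\rmin=0$ and the second coordinate at a $\spri{\rmin}$-node equals $\pII$) cannot accumulate along $\alpha$; this ought to follow by invoking the winning hypothesis at the appropriate ancestor $\rho$-strategy for $\pII$, whose infinite plays visit game priority $0$ only finitely often, transferring the bound to the intervals of $\alpha$. Everything else reduces to the structural chase of the five transitions in Figure~\ref{fig:full-strat} together with the parity dictionary in each of the four $\pI/\pII$ and kept/switched combinations, which is a routine exercise.
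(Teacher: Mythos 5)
Your proof follows the paper's own architecture: classify each branch by whether direction~$2$ occurs infinitely often (priority~$0$), separate $\sneg$-occurrences (priority~$1$) and deviations from the $\rho$\=/strategy (priority~$2$), and finish with a parity translation on the stable tail; the numerical convention $\pI=1$, $\pII=0$ that you make explicit is exactly what the paper uses implicitly in the expression $\rmid{+}P{+}1$. Your observation~(ii) is in fact sharper than what the paper writes: when $\rmin=0$, automaton priority~$1$ can also arise from a $\spri{0}$\=/transition with second coordinate $\pII$, and the paper's sentence ``therefore, the priorities $0$ and $1$ appear only finitely many times on $\infA$'' silently ignores this case.

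The repair you propose does not close that hole, however. You want to bound these extra priority\=/$1$ occurrences by noting that some ancestor $\rho$\=/strategy for $\pII$ visits game priority~$0$ only finitely often along its infinite plays. But in the infinite\=/deviations case the branch never follows any single $\rho$\=/strategy for an infinite suffix: the intervals between consecutive priority\=/$2$ resets are \emph{finite} fragments of \emph{different} $\rho$\=/strategies, and a bound valid for infinite plays of one strategy says nothing about the concatenation of infinitely many one\=/step fragments of others. Concretely, with $\rmin=0$ take $t=\spla{\pII}\bigl(\spri{0}(t),\spri{1}(t_\pII)\bigr)$ where $t_\pII$ is any tree in which $\pII$ wins without ever seeing priority~$0$, and let $\rho$ be the run on $f(t)$ given by Lemma~\ref{lem:construct-run}. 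One computes $\sigA_{\pII}\bigl(\spri{0}(t)\bigr)=(1,0,\ldots)\lexge(0,\ldots)=\sigA_{\pII}\bigl(\spri{1}(t_\pII)\bigr)$, so $\rho(\epsilon)=(0,\pII,\dR)$; along the branch $0^\w$ the automaton priorities then cycle $0,2,1,2,1,\ldots$ with $\liminf=1$, even though every $\rho$\=/strategy is winning. So the obstacle you flag is real and your sketch (like the paper's text) does not actually clear it: as $\Uu$ is written, the lemma's ``if'' direction appears to fail for $\rmin=0$, and a fix requires modifying $\Uu$ itself (for instance shifting the target priority of the $\spri{\rmid}$\=/transitions to $\rmid{+}P{+}3$ and enlarging the priority range so that $1$ and $2$ are genuinely reserved for $\sneg$ and resets), not a more careful reading of the existing transitions.
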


\begin{proof}
First assume that $r$ is not well\=/formed and consider a~run $\rho$ of $\Uu$ over $r$. Since $r$ is not well\=/formed we know that $\shave(r\restr\finA)$ is not well\=/formed for some $\finA\in\dom(r)$. Let $\infA$ be a~branch of $\shave(r\restr\finA)$ that contains infinitely many occurrences of $\sneg$. Since $\infA$ is a~branch of $\shave(r\restr\finA)$, $\infA\in\{0,1\}^\w$ and therefore for all non\=/empty $\finB\prec \infA$ the priority of the state $\rho(\finA\cdot \finB)$ is non\=/zero. By the assumption about the occurrences of $\sneg$, for infinitely many $\finB\prec \infA$ the priority of the state $\rho(\finA\cdot \finB)$ is $1$. Therefore, the run $\rho$ is not accepting.

Now consider a~well\=/formed tree $r$ and a~run $\rho$ such that all the $\rho$-strategies are winning for the respective players. We need to show that $\rho$ is accepting. Consider a~branch $\infA$ of $r$.

First, consider the case that $\infA$ contains infinitely many directions $2$ (i.e.~infinitely many times it moves to the third child of a~ternary node). In that case the priority $0$ appears infinitely many times in $\rho$ on $\infA$ and therefore the parity condition is satisfied on that branch.

The opposite case is that $\infA$ contains only finitely many times $2$. Since $r$ is well\=/formed the symbol $\sneg$ must appear only finitely many times on $\infA$ in $r$. Therefore, the priorities $0$ and~$1$ appear only finitely many times in $\rho$ on $\infA$. If the priority $2$ appears infinitely many times in $\rho$ on $\infA$ then the parity condition is satisfied on that branch. Therefore, assume contrarily that the priority $2$ appears only finitely many times in $\rho$ on $\infA$.

Let $\finA\prec \infA$ be a~node such that all the occurrences of the direction $2$ in $\infA$, of the symbol~$\sneg$ in $r$, and of the priority $2$ in $\rho$ appear in the nodes that are prefixes of $\finA$. Let $\infA = \finA\cdot \infB$. Assume that $\rho(\finA)=(\star,P,\star)$. By the assumption about the $\rho$\=/strategies, we know that there exists a~strategy $\Sigma$ of $P$ that is winning in $\Gg\big(\shave(r\restr\finA)\big)$. Moreover, we know that $\infB$ is a~branch of $\shave(r\restr \finA)$.

Because the priority $2$ is not used by $\rho$ on $\infA$ below $\finA$, $\infB$ is an infinite play of $\Sigma$. As $\sneg$ does not appear in $\shave(r\restr\finA)$ all the positions along $\infB$ are \emph{kept}. Since $\Sigma$ is winning, the play following $\infB$ needs to be winning for $P$. Notice that the priorities of the states of $\rho$ on $\infA$ below $\finA$ are either $k+2$ (after reading $\splp{\pI}$ and $\splp{\pII}$) or $\rmid+P+1$ (after reading $\spri{\rmid}$). Thus, the assumption that $\infB$ is winning for $P$ in $\Gg\big(\shave(r\restr\finA)\big)$ implies that the minimal priority appearing infinitely many times in $\rho$ on $\infA$ is even.

What remains is to consider the case of $r$ with an accepting run $\rho$. As we have seen, $r$ must be well\=/formed in that case. By a~reasoning dual to the above one, each infinite play of a~$\rho$\=/strategy must be winning for the declared player because of the way the priorities of $\Uu$ are computed.
\end{proof}

The following fact follows directly from the above lemma.

\begin{fact}
\label{ft:full-to-win}
Assume that $\rho$ is an accepting run of $\Uu$ over a~tree $r$, $\finA\in\dom(r)$, and $\rho(\finA)$ is of the form $(\star,P,\star)$. Then $P$ wins $\Gg\big(\shave(r\restr \finA)\big)$. In particular, if $\rho$ and $\rho'$ are two accepting runs of $\Uu$ over the same tree $r$ then the second coordinates of $\rho$ and $\rho'$ are equal.
\end{fact}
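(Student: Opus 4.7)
The plan is to derive both assertions of the fact directly from Lemma~\ref{lem:correct-strategy}, with no additional combinatorial work beyond invoking determinacy of parity games.

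For the first assertion, I would invoke Lemma~\ref{lem:correct-strategy} with the same run $\rho$ and node $\finA$. Since $\rho$ is accepting, the lemma supplies us with the $\rho$\=/strategy in $\finA$ and guarantees that it is winning in $\Gg\big(\shave(r\restr\finA)\big)$. By the construction of the $\rho$\=/strategy (as described just above the lemma), a state of the form $(\star,P,\star)$ at the ``root'' node $\finA$ produces a strategy \emph{of the player $P$}, since the inductive definition only uses the third coordinate to pick moves in positions controlled by $P$. Hence $P$ has a winning strategy, i.e.~$P$ wins $\Gg\big(\shave(r\restr\finA)\big)$.

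For the in\=/particular clause, I would fix an arbitrary $\finA\in\dom(r)$ and suppose $\rho(\finA)=(\star,P,\star)$ while $\rho'(\finA)=(\star,P',\star)$. Applying the first assertion to each of the two accepting runs yields that both $P$ and $P'$ win the single parity game $\Gg\big(\shave(r\restr\finA)\big)$. Since parity games are determined (see Section~\ref{sec:language}), at most one player can have a winning strategy, so $P=P'$. Taking $\finA$ to range over $\dom(r)$ gives the claimed equality of second coordinates of $\rho$ and $\rho'$ at every node.

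There is no substantive obstacle to overcome: the technical content (the translation between runs, $\rho$\=/strategies, and winning strategies in the shaved game) has been absorbed into Lemma~\ref{lem:correct-strategy}, and the uniqueness of the declared winner is just determinacy. The only care point is a brief remark that at a node $\finA$ where the second coordinate is written $\star$, this only reflects the shorthand of Figure~\ref{fig:full-strat}; the concrete state $\rho(\finA)$ still carries a fixed element of $\{\pI,\pII\}$ on that coordinate, so the comparison between $P$ and $P'$ makes sense.
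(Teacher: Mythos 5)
Your proof is correct and takes essentially the same route as the paper, which simply states that the fact ``follows directly from the above lemma''; you have merely spelled out the details. One small slip worth noting: determinacy gives that \emph{at least} one player has a winning strategy, whereas what you need for the in\=/particular clause is that \emph{at most} one does---this is the elementary fact that both players cannot simultaneously have winning strategies in a game with complementary winning conditions (play them against each other), and it holds independently of determinacy.
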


\begin{lemma}
\label{lem:second-to-other}
If $\rho$ and $\rho'$ are two runs of $\Uu$ (possibly not accepting) over a~tree $r$ and the second coordinates of $\rho$ and $\rho'$ are equal then $\rho=\rho'$.
\end{lemma}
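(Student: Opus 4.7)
The plan is to show that once the tree $r$ and the second coordinates of a run of $\Uu$ over $r$ are fixed, the first and third coordinates are uniquely determined by inspection of Figure~\ref{fig:full-strat}. Applied to both $\rho$ and $\rho'$, this will immediately give $\rho = \rho'$.

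First I would handle the third coordinate at an arbitrary node $\finA \in \dom(r)$. Let $P$ denote the common second coordinate of $\rho(\finA)$ and $\rho'(\finA)$. Inspecting Figure~\ref{fig:full-strat}, every transition reading $\sneg$, $\spri{\rmid}$, or $\splp{\bar{P}}$ has source state of the form $(\star,P,\dD)$; in these three cases the third coordinate of $\rho(\finA)$ is forced to be $\dD$. The only remaining case is $r(\finA) = \splp{P}$, where two candidate transitions are available: their third children carry $(0,\pI,\star)$ and $(0,\pII,\star)$ respectively, so the $\dL$ and $\dR$ transitions are distinguished by the second coordinate at the child $\finA \cdot 2$. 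Since $\rho$ and $\rho'$ agree on that coordinate by hypothesis, the third coordinates of $\rho(\finA)$ and $\rho'(\finA)$ agree as well.

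Next I would prove by induction on $|\finA|$ that the first coordinate of $\rho(\finA)$ is determined. For the root the conclusion is immediate: every initial state of $\Uu$ has first coordinate $0$. For the inductive step, assume $\rho(\finA) = \rho'(\finA)$; a direct case analysis on the five transitions of Figure~\ref{fig:full-strat} then shows that the priority of each child $\finA \cdot d$ is a function of $r(\finA)$, the direction $d$, and the second and third coordinates of $\rho(\finA)$, with no remaining freedom. All of these quantities agree between $\rho$ and $\rho'$, so the first coordinates at $\finA \cdot d$ also agree, and combined with the already established agreement on the other two coordinates this gives $\rho(\finA \cdot d) = \rho'(\finA \cdot d)$.

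The whole argument is essentially a mechanical inspection of the transition table; the only mildly subtle point, and the one I would flag as the main obstacle, is that the third coordinate at $\finA$ has to be recovered from data strictly below $\finA$, namely the second coordinate at $\finA \cdot 2$. This causes no real difficulty precisely because of the design of the ternary symbols $\splp{\cdot}$: their auxiliary third child is used to expose, through the $\pI$ versus $\pII$ value in its second coordinate, the choice $\dL$ versus $\dR$ made by the strategy at the parent. Once this encoding is noticed, the two determinations decouple cleanly and the lemma follows.
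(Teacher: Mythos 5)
Your argument is correct and is essentially the same as the paper's: first observe that the third coordinate at any node is determined by the letter, the second coordinate there, and (when the letter is $\splp{P}$) the second coordinate at the third child, and then propagate the first coordinate downward from the root by inspection of the transition table. The two proofs are a mechanical inspection of Figure~\ref{fig:full-strat} in the same order, with the same key observation about the role of the third child.
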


\begin{proof}
The third coordinate of a~run in $\finA\in\dom(r)$ depends on $r(\finA)$, the second coordinate of the run in $\finA$, and (if one of the lower two transitions from Figure~\ref{fig:full-strat}
is used) the second coordinate of the run in $\finA\cdot 2$. Thus, the third coordinates of $\rho$ and $\rho'$ must agree.

The first coordinates of $\rho$ and $\rho'$ in the root are $0$. Consider a~node $\finA$ and its child $\finA'\in\dom(r)$. The first coordinate of a~run in $\finA'$ depends on $r(\finA)$ and the last two coordinates of the run in $\finA$. Therefore, also the first coordinates of $\rho$ and $\rho'$ must agree.
\end{proof}

\begin{definition}
\label{def:languages}
Take $P\in\{\pI,\pII\}$ and let $\lan{P}{\rmin}{\rmax}$ be the language recognised by the automaton~$\Uu$ with the set of initial states restricted to the states of the form $(0,P,\star)$.
\end{definition}

Fact~\ref{ft:full-to-win} together with Lemma~\ref{lem:second-to-other} imply the following corollary.

\begin{corollary}
The languages $\lan{P}{\rmin}{\rmax}$ are unambiguous.
\end{corollary}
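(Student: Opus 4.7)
The plan is to deduce the corollary as a direct combination of Fact~\ref{ft:full-to-win} and Lemma~\ref{lem:second-to-other}, with essentially no new work. Fix $P\in\{\pI,\pII\}$ and suppose $\rho$ and $\rho'$ are two accepting runs of $\Uu$ with the restricted initial set $\{(0,P,\star)\}$ over the same tree $r$; I want to conclude $\rho=\rho'$. Since any accepting run of the restricted automaton is, in particular, an accepting run of the full $\Uu$, both of the previously proved statements apply.

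First I would invoke Fact~\ref{ft:full-to-win} at every node $\finA\in\dom(r)$: its second conclusion says that for any two accepting runs of $\Uu$ over $r$, the second coordinates of $\rho$ and $\rho'$ coincide everywhere. Intuitively this is because at each node $\finA$ the second coordinate is pinned to the unique winner $P_\finA$ of $\Gg(\shave(r\restr \finA))$, which depends only on $r$ and not on the run. Then I would feed this agreement into Lemma~\ref{lem:second-to-other}, which upgrades the equality of second coordinates to full equality $\rho=\rho'$. Hence $\Uu$ restricted to initial states $(0,P,\star)$ has at most one accepting run on every input, i.e., the automaton witnessing $\lan{P}{\rmin}{\rmax}$ is unambiguous.

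I do not foresee any real obstacle: both ingredients have already been established, and the restriction of the initial state set only strengthens the hypothesis on accepting runs, so nothing additional needs to be checked. The one sanity check worth performing is that restricting the initial states does not break the hypotheses of Fact~\ref{ft:full-to-win} and Lemma~\ref{lem:second-to-other}, but this is immediate since both statements refer to arbitrary (pairs of) runs of $\Uu$, and every run of the restricted automaton is a run of $\Uu$.
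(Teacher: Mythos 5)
Your proof is correct and takes essentially the same route as the paper, which simply states that Fact~\ref{ft:full-to-win} together with Lemma~\ref{lem:second-to-other} imply the corollary; you have merely spelled out the (trivial) chaining and the observation that restricting initial states cannot weaken the hypotheses.
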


Thus, to complete the proof of Theorem~\ref{thm:main}, one needs to prove that the languages $\lan{P}{\rmin}{\rmax}$ climb up the index hierarchy. This will be done using Corollary~\ref{cor:reduction} in Section~\ref{sec:hardness}. However, before we move to that section, we need to develop the theory of \emph{signatures}.

\section{Signatures}
\label{sec:signatures}

The technical core of this work is based on a concept allowing to compare how much a~player $P$ \emph{prefers} one tree over another. This is achieved by assigning to each tree its \emph{$P$\=/signature} and proving that moving to a~subtree with an optimal signature guarantees that the player wins whenever possible. The theory of signatures comes from~\cite{streett_signatures} and results of Walukiewicz, e.g.~\cite{walukiewicz_signatures_pushdown}. The notion of signatures used here is more demanding than the classical ones, as we require equalities instead of inequalities in the invariants from Lemma~\ref{lem:sig-exist}.

In this section we define the signatures and prove their fundamental properties, further sections of the article use these notions to prove hardness of the languages $\lan{P}{\rmin}{\rmax}$.

We say that a~number $\rmid\in\w$ is \emph{$P$\=/losing} if $\rmin\leq\rmid\leq\rmax$ and $\rmid$ is odd (resp. even) if $P=\pI$ (resp. $P=\pII$). A~number $\rmid\in\{\rmin,\ldots,\rmax\}$ that is not $P$\=/losing is called \emph{$P$\=/winning}. A~$P$\=/signature is either $\infty$ or a~tuple of countable ordinals $(\ordA_{\rmin'},\ordA_{\rmin'+2},\ldots,\ordA_{\rmax'})$, indexed by $P$\=/losing numbers---$\rmin'$ is the minimal and $\rmax'$ is the maximal $P$\=/losing number. $P$\=/signatures that are not $\infty$ are well\=/ordered by the lexicographic order ${\lexeq}$ in which the ordinals with smaller indices are more important. Let $\infty$ be the maximal element of~${\lexeq}$.

A~$\pI$\=/signature $\sigA_{\pI}(t)=(42)$ with $\rmin=0$ and $\rmax=2$ means that the best what player $\pI$ can hope for is to visit at most $42$ times a~$\spri{1$}\=/node (possibly interleaved with nodes of priority~$2$) before the first $\spri{0}$\=/node is visited (if ever). After visiting a~$\spri{0}$\=/node, the $\pI$\=/signature of the subtree may grow, starting again a~counter of nodes of priority $1$ to be visited. The $\pI$\=/signature $(\omega)$ means that $\pII$ can choose a~finite number of $\spri{1}$-nodes that will be visited; however the choice needs to be done before the first such node is seen.

The following two lemmas express the crucial properties of the signatures.

\newcommand{\lemSigExist}[1]{
There exists a~unique point\=/wise minimal pair of assignments $t\mapsto\sigA_P(t)$ for $P\in\{\pI,\pII\}$ that assign to each well\=/formed tree $t$ over $\Ang{\rmin}{\rmax}$ a~$P$\=/signature $\sigA_P(t)$ such that:
\begin{enumerate}
\item $\sigA_P(t)=\infty$ if and only if $P$ loses $\Gg(t)$;
#1{it:sig-inf}
\item $\sigA_P\big(\sneg(t)\big)=(0,\ldots,0)$ if $P$ wins $\Gg\big(\sneg(t)\big)$ (i.e.~$\bar{P}$ wins $\Gg(t)$);
#1{it:sig-neg}
\item $\sigA_P\big(\spri{\rmid}(t)\big)=(\ordA_{\rmin'},\ldots,\ordA_{\rmid-1},0,0 \ldots, 0)\qquad$ if $\sigA_P(t)=(\ordA_{\rmin'},\ldots,\ordA_{\rmax'})$ and $\rmid$ is $P$\=/winning;
#1{it:sig-j-win}
\item $\sigA_P\big(\spri{\rmid}(t)\big)=(\ordA_{\rmin'},\ldots,\ordA_{\rmid-2},\ordA_\rmid{+}1,0, \ldots, 0)$ if $\sigA_P(t)=(\ordA_{\rmin'},\ldots,\ordA_{\rmax'})$ and $\rmid$ is $P$\=/losing;
#1{it:sig-j-los}
\item $\sigA_P\big(\spla{P}(t_{\dL},t_{\dR})\big)=\min\big\{\sigA_P(t_{\dL}), \sigA_P(t_{\dR})\big\}$;
#1{it:sig-min}
\item $\sigA_P\big(\spla{\bar{P}}(t_{\dL},t_{\dR})\big)=\max\big\{\sigA_P(t_{\dL}), \sigA_P(t_{\dR})\big\}$.
#1{it:sig-max}
\end{enumerate}
}

\begin{lemma}
\label{lem:sig-exist}
\lemSigExist{\label}
\end{lemma}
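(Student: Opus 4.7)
The plan is to construct $\sigA_P$ explicitly using the classical theory of signatures for parity games, and then to check the six conditions one by one; minimality will follow because conditions~(1)--(6) locally determine the value at each node from its subtrees, so the only real issue is existence of a~consistent assignment.

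\textbf{Construction.} Given a~well\=/formed $t\in\trees_{\Ang{\rmin}{\rmax}}$, the induced game $\Gg(t)$ is a~parity game on a~countable tree, hence positionally determined. Put $\sigA_P(t)\eqdef\infty$ whenever $P$ loses $\Gg(t)$. Otherwise I~would fix a~positional winning strategy $\Sigma$ of~$P$ and define a~candidate tuple by the Walukiewicz counting: the $\rmid$\=/th coordinate (for $P$\=/losing $\rmid$) bounds, over all $\Sigma$\=/plays starting at the root, the number of priority\=/$\rmid$ visits that can occur before either a~strictly smaller priority or an $\sneg$\=/node is seen. Well\=/formedness ensures that every branch has only finitely many $\sneg$, so after the last $\sneg$ the standard parity\=/game argument guarantees these counts are ordinals and not $\infty$. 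Finally set $\sigA_P(t)$ to be the $\lexeq$\=/infimum of this tuple over all positional winning strategies of $P$.

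\textbf{Verifying (1)--(6).} Condition~(1) holds by construction. For condition~(2) one uses that reading $\sneg$ swaps the players in $\Gg$, so $P$ wins $\Gg(\sneg(t))$ iff $\bar{P}$ wins $\Gg(t)$; since no priority has yet been encountered at the root of $\sneg(t)$, the optimal strategy for $P$ produces all coordinates equal to $0$. Conditions~(3)--(4) reflect that prepending a~$\spri{\rmid}$\=/node to $t'$ adds one visit to priority~$\rmid$ at the root: if $\rmid$ is $P$\=/winning this resets every higher $P$\=/losing counter to $0$; if $\rmid$ is $P$\=/losing it increments the $\rmid$\=/th coordinate by one and zeroes out the coordinates above. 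Conditions~(5)--(6) are the choice rules: at a~$P$\=/position $P$ picks the subtree minimising the signature, while at a~$\bar{P}$\=/position $\bar{P}$ picks the maximising one. Each case matches the Walukiewicz computation for the corresponding subtree, completing the verification.

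\textbf{Uniqueness and the main obstacle.} Let $\sigA_P'$ be any other pair satisfying (1)--(6). By~(1) the two assignments agree on the $\infty$\=/support. On nodes with finite signature I~would induct on $\sigA_P(t)$ in the well\=/order $\lexeq$: conditions~(2)--(6) give identical local recipes computing $\sigA_P(t)$ and $\sigA_P'(t)$ from values at strictly smaller\=/signature positions in the subtrees, so the equalities propagate upward and $\sigA_P=\sigA_P'$ throughout. In particular, the pointwise minimum is unique. The principal obstacle sits inside the construction itself---guaranteeing that the Walukiewicz counting yields an ordinal (not $\infty$) at every winning node simultaneously. This is the classical content of positional determinacy combined with the finite\=/resets\=/per\=/branch consequence of well\=/formedness; once granted, the remaining work reduces to routine bookkeeping across the six cases.
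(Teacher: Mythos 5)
Your proposal has two substantial gaps, one in the construction and one in the minimality argument.

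\textbf{The construction does not achieve the equality in Item~\ref{it:sig-max}.} You describe the signature as an uncoupled Walukiewicz count: for each $P$\=/losing $\rmid$ separately, the supremum over $\Sigma$\=/plays of the number of priority\=/$\rmid$ visits before a smaller priority or $\sneg$. But this tuple is not the lexicographic maximum that Item~\ref{it:sig-max} demands. Take $\rmin=0$, $\rmax=3$, $P=\pI$ and consider $t=\spla{\pII}(t_\dL,t_\dR)$ with $t_\dL=\spri{1}(\spri{3}(\spri{0}(t_\pI)))$ and $t_\dR=\spri{3}(\spri{3}(\spri{3}(\spri{0}(t_\pI))))$ where $t_\pI=\spri{0}(t_\pI)$. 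Here $\sigA_P(t_\dL)=(1,0)$ and $\sigA_P(t_\dR)=(0,3)$, so Item~\ref{it:sig-max} forces $\sigA_P(t)=(1,0)$. But the uncoupled count yields $1$ for priority~$1$ (branch $t_\dL$) and $3$ for priority~$3$ (branch $t_\dR$), giving $(1,3)\lexge(1,0)$. Classical Walukiewicz signatures satisfy only $\geq$ at opponent positions, which is exactly the inequality the paper warns is insufficient. The paper circumvents this by a~genuinely recursive definition of $\psig(\cdot,\Sigma)$ over a~well\=/founded relation on active positions, where the coordinates are coupled via a~lexicographic supremum over successor active nodes (Definition~\ref{def:signatures}); the supremum is taken \emph{of whole tuples}, not coordinate\=/wise. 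Your sketch does not reproduce that and consequently fails the invariant it must prove.

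\textbf{The minimality argument is circular and the uniqueness claim is false.} You argue that conditions~(1)--(6) \emph{determine} $\sigA_P(t)$ from the subtrees and hence uniquely determine the assignment, so minimality is automatic. But the tree is infinite, so this ``recursion'' has no base case and admits many fixed points. Concretely, with $\rmin=1$, $\rmax=2$, $P=\pI$ and the single\=/branch tree $t$ satisfying $t=\spri{2}(t)$, Item~\ref{it:sig-j-win} reduces to $\sigA_P(t)=\sigA_P(t)$, which constrains nothing: \emph{any} ordinal value is consistent with (1)--(6) on that tree. Hence there is no uniqueness, only a~least solution, and it must be proved least. Your proposed $\lexeq$\=/induction ``from values at strictly smaller\=/signature positions in the subtrees'' does not get off the ground either: a~subtree's signature is often $\lexgeq$ that of the whole tree (e.g.~under a~$P$\=/losing $\spri{\rmid}$ the parent's leading coordinates equal the child's and the $\rmid$\=/th coordinate is strictly larger, while the child may have arbitrary trailing coordinates). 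The paper's actual minimality proof (Lemma~\ref{lem:sig-opt-strat} and its corollary) instead uses well\=/founded induction over the relation ${\gg}$ determined by active positions inside a~$\sigA'$\=/optimal winning strategy, together with Lemma~\ref{lem:strat-opt-monotone}; this is a~different and non\=/trivial induction that you would need to reproduce.
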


Let us fix the functions $\sigA_P$ for $P\in\{\pI,\pII\}$ as above. Consider a~well\=/formed tree $t$ over the alphabet $\Ang{\rmin}{\rmax}$. We say that a~strategy $\Sigma$ of a~player~$P$ in $\Gg(t)$ is \emph{optimal} (or $\sigA$\=/optimal) if:
\begin{itemize}
\item In a~position $\finA\in\dom(t)$ that is \emph{kept} in $t$ and $t\restr \finA = \spla{P}\big(t_\dL,t_\dR\big)$ the strategy $\Sigma$ moves to a~subtree of a~minimal value of~$\sigA_P$; i.e.~$\Sigma$ can move to $\finA\cdot 0$ if $\sigA_P(t_\dL)\lexeq \sigA_P(t_\dR)$ and to $\finA\cdot 1$ if $\sigA_P(t_\dL)\lexgeq \sigA_P(t_\dR)$. If the values $\sigA_P(t_\dL)$ and $\sigA_P(t_\dR)$ are equal then $\Sigma$ can move in any of the two directions.
\item In a~position $\finA$ that is \emph{switched} in $t$ and $t\restr \finA = \spla{\bar{P}}\big(t_\dL,t_\dR\big)$ the strategy $\Sigma$ uses the same rule as above but uses the function $\sigA_{\bar{P}}$ to compare the $\bar{P}$\=/signatures of the subtrees.
\end{itemize}
Notice that according to the above definition there might be more than one optimal strategy.

\newcommand{\lemSigOptWin}{
If $t\in\Wng{P}{\rmin}{\rmax}$ and $\Sigma$ is an~optimal strategy of $P$ in $\Gg(t)$ then $\Sigma$ is winning.
}

\begin{lemma}
\label{lem:sig-opt-win}
\lemSigOptWin
\end{lemma}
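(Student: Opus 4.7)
The plan is to define a progress\=/measure along any infinite play $\plyA$ of $\Sigma$ by tracking the signatures $\sigA^n \eqdef \sigA_{Q_n}(t\restr (\plyA\restr n))$, where $Q_n = P$ if $\plyA\restr n$ is \emph{kept} and $Q_n = \bar{P}$ if \emph{switched}. The argument proceeds in three stages: establish $\sigA^n \neq \infty$ throughout, read off from the clauses of Lemma~\ref{lem:sig-exist} how $\sigA^n$ changes along the $\sneg$\=/free tail of $\plyA$, and conclude by well\=/foundedness of ordinals that the least priority visited infinitely often is $Q$\=/winning, which is exactly the parity condition for $P$ in $\Gg(t)$.

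For the first stage I would induct on $n$. The base $n = 0$ follows from $t \in \Wng{P}{\rmin}{\rmax}$ and clause~\ref{it:sig-inf}. The inductive step splits on $t(\plyA\restr n)$: at $\sneg$, clause~\ref{it:sig-neg} together with the Fact that $\sneg$ swaps the winner matches the change $Q_{n+1} = \bar Q_n$; at $\spri{\rmid}$, clauses~\ref{it:sig-j-win}--\ref{it:sig-j-los} define the parent's signature in terms of the child's, so finiteness propagates; at a $\spla{}$-position, either $\plyA\restr n \in V_P$, in which case the $\sigA$-optimal $\Sigma$ moves to a child realising the minimum from clause~\ref{it:sig-min}, or $\plyA\restr n \in V_{\bar P}$, in which case clause~\ref{it:sig-max} forces both children to have finite signature regardless of the opponent's choice. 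By well\=/formedness the play $\plyA$ contains only finitely many $\sneg$'s, so $Q_n$ stabilises to some value $Q$ past a node $\plyA\restr N$.

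For the second stage, the same clauses give the evolution of $\sigA^n$ on the tail $n \geq N$: at $\spla{}$-steps $\sigA^{n+1} \lexeq \sigA^n$ (with equality at $P$\=/controlled positions by optimality and clause~\ref{it:sig-min}, and $\lexeq$ allowed at $\bar P$\=/controlled ones by clause~\ref{it:sig-max}); at $\spri{\rmid}$-steps with $\rmid$ $Q$\=/winning, entries of $\sigA^n$ at $Q$\=/losing indices below $\rmid$ are inherited unchanged; and at $\spri{\rmid}$-steps with $\rmid$ $Q$\=/losing, the entries below $\rmid$ are inherited unchanged and $\sigA^{n+1}_{\rmid}$ is the strict predecessor of $\sigA^n_{\rmid}$. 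Letting $\rmid^\star$ be the least priority visited infinitely often past $N$, assume towards contradiction that $\rmid^\star$ is $Q$\=/losing. Past some $N' \geq N$ no priority below $\rmid^\star$ occurs, so the lex\=/non\=/increasing prefix of $\sigA^n$ restricted to indices $< \rmid^\star$ must stabilise; beyond that moment $\sigA^n_{\rmid^\star}$ is non\=/increasing and strictly drops at each visit of priority $\rmid^\star$, which is infinitely many, contradicting well\=/foundedness of ordinals. Hence $\rmid^\star$ is $Q$\=/winning. Unpacking: if the tail is \emph{kept} then $Q=P$ and $\rmid^\star$ is even, satisfying $P$'s parity condition in $\Gg(t)$; if \emph{switched} then $Q=\bar P$ and $\rmid^\star$ is odd, so the $\Gg(t)$\=/priority $\rmid^\star{+}1$ on the tail is even, again satisfying $P$'s condition.

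The main obstacle I anticipate is the book\=/keeping between two frames: $\sigA_Q(t\restr v)$ is computed in the \emph{subtree} $t\restr v$ treated as a fresh game, whereas $\Gg(t)$ assigns players and priority parities according to the number of $\sneg$'s on the path to~$v$. Carrying the auxiliary datum $Q_n$ through the induction, invoking clause~\ref{it:sig-neg} to bridge the swap at each $\sneg$-transition, and correctly identifying when a $\spla{}$-node is $P$-controlled or $\bar P$-controlled in the ambient game are the only non\=/routine ingredients; once these are in place the remainder is a standard ordinal\=/valued progress\=/measure argument.
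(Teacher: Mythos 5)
Your argument is correct and follows essentially the same route as the paper's, which isolates the monotonicity-plus-strict-decrease step as Lemma~\ref{lem:strat-opt-monotone} and then derives a contradiction to well-foundedness along a tail of a hypothetical losing play in Lemma~\ref{lem:sig-strat-witness}. Your Stage~1 (finiteness propagation) and the explicit $Q_n$-bookkeeping across $\sneg$ are treated more tersely in the paper (``the cases from Items~\ref{it:sig-inf} and~\ref{it:sig-neg} cannot happen'' and ``it might require swapping $P$ with $\bar P$''), but the underlying argument is the same.
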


The rest of this section is devoted to a construction of the functions $\sigA_P$ and to proving Lemmas~\ref{lem:sig-exist}, \ref{lem:sig-opt-win}. We start with a~construction of the functions $\sigA_P$ for $P\in\{\pI,\pII\}$ in Subsection~\ref{ssec:sig-construction}. In Subsection~\ref{ssec:sig-properties} we prove that the constructed functions satisfy the invariants in Lemma~\ref{lem:sig-exist} and we provide a~proof of an important Lemma~\ref{lem:sig-subopt}. Subsection~\ref{ssec:sig-consequences} proves a~statement stronger than Lemma~\ref{lem:sig-opt-win}. Finally, Subsection~\ref{ssec:sig-optimal} is devoted to a~proof that the constructed functions $\sigA_P$ are point\=/wise minimal, what concludes the proof of Lemma~\ref{lem:sig-exist}.

The theory follows the ideas in~\cite{walukiewicz_signatures}, however as we insist on having equalities instead of inequalities in Lemma~\ref{lem:sig-exist} (particularly in Item~\ref{it:sig-max}) the construction here is more complex than the standard constructions. As a~side effect, we get the point\=/wise minimality of the defined $P$\=/signatures (proved in Subsection~\ref{ssec:sig-optimal}), which is not needed for the rest of the theory but shows that the definitions given here are canonical.

\subsection{Construction of \texorpdfstring{$\sigA_P$}{sigP}}
\label{ssec:sig-construction}

We start with a~construction of concrete functions satisfying the invariants from Lemma~\ref{lem:sig-exist}.

Consider a~tree $t$ over $\Ang{\rmin}{\rmax}$ and a~winning strategy $\Sigma$ of a~player $P$ in $\Gg(t)$. A~position $\finA\in\Sigma$ is \emph{active} if $t(\finA)=\spri{\rmid}$ for a~$P$\=/losing number $\rmid$ and for every $\finB\prec \finA$ it is not the case that $t(\finB)=\spri{\rmid'}$ with $\rmid'<\rmid$ nor $t(\finB)=\sneg$. In other words, $\finA$ is active if a~$P$\=/losing priority $\rmid$ occurs in $\finA$ and no lower priority nor the swapping symbol~$\sneg$ occurs on the path from the root to $\finA$. The set of active positions in $\Sigma$ is denoted $\act(\Sigma)$ (it might be empty). Notice that the set $\act(\Sigma)$ is a~set of nodes of $t$, each labelled by a~$P$\=/losing number. Moreover, the priorities of $t$ in $\act(\Sigma)$ are monotone:
\begin{equation}
\label{eq:active-monotone}
\text{if $\finA\preceq \finB$ with $\finA,\finB\in\act(\Sigma)$ then $t(\finA)\geq t(\finB)$},
\end{equation}
where the latter order is the order of priorities---natural numbers.

\begin{figure}
\centering
\begin{tikzpicture}[scale=0.99]
\tikzstyle{actt}=[draw, circle, inner sep=1pt]

\newcommand{\vvm}{0}
\newcommand{\vvu}{1.5}
\newcommand{\vvl}{-1}

\newcommand{\vvU}{2.0}
\newcommand{\vvM}{1.0}
\newcommand{\vvL}{-1}

\node[toL] at (-0.4,0) {$t=$};

\node[letter, actt] (m0) at ( 0, \vvm) {$\spri{5}$};
\node[letter      ] (m1) at ( 1, \vvm) {$\spri{4}$};
\node[letter      ] (m2) at ( 2, \vvm) {$\spri{5}$};
\node[letter, actt] (m3) at ( 3, \vvm) {$\spri{3}$};

\node[letter] at (4, \vvm+0.5) {$\finB$};
\node[letter      ] (m4) at ( 4, \vvm) {$\spla{\pII}$};

\node[letter      ] (u5) at ( 5, \vvu) {$\spla{\pII}$};
\node[letter      ] (l5) at ( 5, \vvl) {$\spri{4}$};

\node[letter] at (6, \vvU+0.5) {$\finA_1$};
\node[letter] at (6, \vvM-0.5) {$\finA_2$};
\node[letter] at (6, \vvL-0.5) {$\finA_3$};

\node[letter, actt] (u6) at ( 6, \vvU) {$\spri{3}$};
\node[letter      ] (m6) at ( 6, \vvM) {$\spri{2}$};
\node[letter, actt] (l6) at ( 6, \vvL) {$\spri{3}$};

\node[letter, actt] (u7) at ( 7, \vvU) {$\spri{1}$};
\node[letter      ] (m7) at ( 7, \vvM) {$\spri{3}$};
\node[letter, actt] (l7) at ( 7, \vvL) {$\spri{3}$};

\node[letter      ] (u8) at ( 8, \vvU) {$\spri{3}$};
\node[letter, actt] (m8) at ( 8, \vvM) {$\spri{1}$};
\node[letter, actt] (l8) at ( 8, \vvL) {$\spri{1}$};

\node[letter      ] (u9) at ( 9, \vvU) {$\spri{0}$};
\node[letter, actt] (m9) at ( 9, \vvM) {$\spri{1}$};
\node[letter      ] (l9) at ( 9, \vvL) {$\spri{4}$};

\node[letter      ] (u10) at (10, \vvU) {$\spri{5}$};
\node[letter      ] (m10) at (10, \vvM) {$\sneg$};
\node[letter      ] (l10) at (10, \vvL) {$\spri{2}$};

\node[letter      ] (u11) at (11, \vvU) {$\spri{3}$};
\node[letter      ] (m11) at (11, \vvM) {$\spri{1}$};
\node[letter      ] (l11) at (11, \vvL) {$\spri{3}$};

\node[letter      ] (u12) at (12, \vvU) {$\spri{0}$};
\node[letter      ] (m12) at (12, \vvM) {$\spri{0}$};
\node[letter      ] (l12) at (12, \vvL) {$\spri{0}$};

\node[letter      ] (u13) at (13, \vvU) {$t_\pI$};
\node[letter      ] (m13) at (13, \vvM) {$t_\pII$};
\node[letter      ] (l13) at (13, \vvL) {$t_\pI$};

\foreach \x in {0,...,3} {
	\tikzEvalInt{\y}{\x+1}
	\draw[transs] (m\x) -- (m\y);
}

\draw[transs] (m4) -- (u5);
\draw[transs] (m4) -- (l5);

\draw[transs] (u5) -- (u6);
\draw[transs] (u5) -- (m6);
\draw[transs] (l5) -- (l6);

\foreach \x in {6,...,12} {
	\tikzEvalInt{\y}{\x+1}
	\draw[transs] (u\x) -- (u\y);
	\draw[transs] (m\x) -- (m\y);
	\draw[transs] (l\x) -- (l\y);
}
\end{tikzpicture}
\caption{An example set $\act(\Sigma)$ for a~tree $t\in\Wng{\pI}{0}{5}$.}
\label{fig:active}
\end{figure}

Recall that $\Sigma\restr\finA = \{\finB\mid \finA\cdot\finB\in\Sigma\}$. Therefore, if $\finA\in\Sigma$ then $\Sigma\restr\finA$ is a~strategy of $P$ in $t\restr\finA$.

The monotonicity of priorities in $\act(\Sigma)$ implies that if $\finA\in\act(\Sigma)$ then
\begin{equation}
\label{eq:active-restrictive}
\act(\Sigma)\restr\finA = \act(\Sigma\restr\finA).
\end{equation}

\begin{example}
Figure~\ref{fig:active} provides an illustration to the definition of the set $\act(\Sigma)$. We assume that $t_\pI=\spri{0}(t_\pI)$ and $t_\pII=\spri{1}(t_\pII)$ are two simple trees in $\Wng{\pI}{0}{5}$ and $\Wng{\pII}{0}{5}$ respectively. Then the tree $t$ belongs to $\Wng{\pI}{0}{5}$ and there exists only one strategy $\Sigma$ of $\pI$ in~$\Gg(t)$. The elements of the set $\act(\Sigma)$ are in circles. The values from Definition~\ref{def:signatures} for that tree are as follows:
\begin{align*}
\psig(\finA_1,\Sigma)&=(1,0,0),\\
\psig(\finA_2,\Sigma)&=(2,0,0),\\
\psig(\finA_3,\Sigma)&=(1,2,0),\\
\psig(\finB,\Sigma)&=(2,0,0),\\
\psig(\epsilon,\Sigma)&=(2,1,1),\\
\sigA_P(t)&=(2,1,1),\\
\sigA_P(t\restr\finA_1)&=(1,1,0),\\
\sigA_P(t\restr\finA_2)&=(2,0,0).
\end{align*}
\end{example}

Our constructions and proofs will follow the schema of well\=/founded induction over the set $\Sigma$ with the following transitive and irreflexive relation:
\[\finA\gg \finB\quad\text{if and only if}\quad \finA,\finB\in\Sigma\text{ and }\finA\prec\finB\text{ and }\exists{\finB'\in\act(\Sigma)}.\ \finA\preceq\finB'\preceq\finB.\]

\begin{lemma}
\label{lem:act-is-well-founded}
If $\Sigma$ is winning then the relation ${\gg}$ is well\=/founded (i.e.~it has no infinite descending chain).
\end{lemma}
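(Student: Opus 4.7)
The plan is to argue by contradiction. Suppose there is an infinite descending chain $\finA_0 \gg \finA_1 \gg \finA_2 \gg \cdots$ in $\Sigma$. By definition of $\gg$, the $\finA_n$ form a strictly increasing chain in the prefix order, so they determine a unique infinite branch $\infA$ of $t$. Since every $\finA_n$ lies in $\Sigma$, the branch $\infA$ is a (kept or switched) infinite play of $\Sigma$, hence, because $\Sigma$ is winning for $P$, the play $\infA$ must be winning for $P$ in $\Gg(t)$.

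Next, I would extract from the chain an infinite sequence of active positions on $\infA$. For each $n$, the relation $\finA_n \gg \finA_{n+1}$ gives some $\finB_n \in \act(\Sigma)$ with $\finA_n \preceq \finB_n \preceq \finA_{n+1}$; thus all $\finB_n$ are prefixes of $\infA$, and infinitely many are distinct. The crucial observation is that if $\finB \in \act(\Sigma)$ then by definition no $\finB' \prec \finB$ carries the symbol $\sneg$. As every prefix of $\infA$ is itself a prefix of some $\finB_n$, it follows that $\sneg$ never occurs on $\infA$. In particular, every node of $\infA$ is \emph{kept}, so the winning condition of the play $\infA$ reduces to the standard parity condition on the priorities $\spri{\rmid}$ read along $\infA$.

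I would then exploit the monotonicity property~\eqref{eq:active-monotone}: the priorities $t(\finB_n)$ form a weakly decreasing sequence of $P$-losing numbers, so from some index $n_0$ on they stabilise at a single $P$-losing value $\rmid$. Each $\finB_n$ with $n \geq n_0$ is active, so no priority strictly smaller than $\rmid$ appears on the path from the root to $\finB_n$; in particular, below $\finB_{n_0}$ the branch $\infA$ contains no priority smaller than $\rmid$. Combined with the fact that $\rmid$ itself occurs at every $\finB_n$ for $n \geq n_0$, this shows that the minimal priority appearing infinitely often along $\infA$ is exactly $\rmid$.

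Since $\infA$ is kept and $\rmid$ is $P$-losing (odd for $P=\pI$, even for $P=\pII$), the play $\infA$ is winning for $\bar{P}$, contradicting the fact that $\Sigma$ is winning for $P$. Hence no infinite descending chain exists and $\gg$ is well\=/founded. I do not expect a serious obstacle here: the proof is essentially a book\=/keeping combination of the two ingredients built into the definition of $\act(\Sigma)$, namely the absence of $\sneg$ above active nodes and the monotonicity~\eqref{eq:active-monotone} of priorities along active nodes. The only minor subtlety is being careful to observe that the branch $\infA$ is entirely kept, so that the parities read from the tree coincide with those used in the winning condition of $\Gg(t)$.
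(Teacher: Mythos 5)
Your proof is correct and follows essentially the same route as the paper's: extract the infinite sequence of active nodes along the branch $\infA$, identify the minimal (eventually stabilizing) $P$-losing priority $\rmid$ occurring on those active nodes, observe that no lower priority can appear along $\infA$ (else the active nodes would cease to be active), and conclude that $\rmid$ is the minimal priority seen infinitely often, contradicting that $\Sigma$ is winning. The one thing you spell out that the paper leaves implicit is that the absence of $\sneg$ on $\infA$ forces the play to be \emph{kept}, so the parity condition reads off the letters of $t$ directly; this is a worthwhile clarification but not a different argument.
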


\begin{proof}
An infinite descending chain in ${\gg}$ would indicate an infinite sequence $\finB_0\prec \finB_1\prec\ldots$ of elements of $\act(\Sigma)$. Such a~sequence must lie on an infinite branch $\infA$ in $\Sigma$. Let $\rmid$ be the minimal priority that occurs on nodes in $\act(\Sigma)$ on $\infA$. By the definition, no priority lower than $\rmid$ occurs on $\infA$ in $t$ (otherwise the considered infinitely many nodes in $\act(\Sigma)$ would not be active). Therefore, $\rmid$ is the minimal priority that occurs infinitely many times on $\infA$ in $t$. As $\rmid$ is $P$\=/losing, $\Sigma$ is not winning for $P$, a~contradiction.
\end{proof}

Consider a~node $\finA\in\Sigma$. Let $\suk_\finA(\Sigma)$ be the set of ${\preceq}$\=/successors of $\finA$ in $\act(\Sigma)$:
\[\suk_\finA(\Sigma)\eqdef\{\finB\in\act(\Sigma)\mid \finA\preceq\finB\wedge \lnot\exists {\finB'\in\act(\Sigma)}.\ \finA\preceq\finB'\prec\finB\}.\]
By the definition, if $\finB\in \suk_\finA(\Sigma)$ and $\finA\neq\finB$ then $\finA\gg\finB$. Similarly, if $\finA\prec \finB\in\Sigma$ and $\finA\in\act(\Sigma)$ then $\finA\gg\finB$. This means that the following definition follows the schema of well\=/founded induction over~${\gg}$, as in both cases the value $\psig(\finA,\Sigma)$ depends on the values of $\psig(\finA',\Sigma)$ with $\finA\gg\finA'$.

\begin{definition}
\label{def:signatures}
Consider the function $\psig(\cdot,\Sigma)$ defined by the following properties for $\finA\in\Sigma$:
\begin{enumerate}
\item If $\finA\in\act(\Sigma)$, $t(\finA)=\spri{\rmid}$, and $\finA'=\finA\cdot 0$ is the unique child of $\finA$ in $t$ then
\[\psig(\finA,\Sigma)=\big(\ordA_{\rmin'},\ldots,\ordA_{\rmid}+1,0,\ldots,0)\text{, where }\psig(\finA',\Sigma)=\big(\ordA_{\rmin'},\ldots,\ordA_{\rmax'}).\]
\label{it:inductive-psig-los}
\item If $\finA\notin\act(\Sigma)$ then
\[\psig(\finA,\Sigma)= \sup_{\finB\in \suk_\finA(\Sigma)} \psig(\finB,\Sigma).\]
\label{it:inductive-psig-rest}
\end{enumerate}
Moreover, for $P\in\{\pI,\pII\}$ and $t\in\trees_A$ put
\begin{align}
\sigA_P(t)&\eqdef \inf_{\text{$\Sigma$ winning for $P$}} \psig(\epsilon,\Sigma).
\label{eq:def-signature}
\end{align}
\end{definition}

Observe that as the values $\psig(\finB,\Sigma)$ are tuples of countable ordinals, the supremum in Item~\ref{it:inductive-psig-rest} is also a~tuple of countable ordinals (i.e.~a~$P$\=/signature). If $\finA\in\Sigma$ but $\finA\notin\act(\Sigma)$ is a ${\gg}$\=/minimal element then this supremum ranges over the empty set and equals $(0,\ldots,0)$.

\begin{lemma}
\label{lem:inductive-psig}
The function $\psig(\cdot,\Sigma)$ is monotone: if $\finA\preceq\finB\in\Sigma$ then $\psig(\finA,\Sigma)\lexgeq \psig(\finB,\Sigma)$.
\end{lemma}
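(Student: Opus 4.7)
The plan is to reduce the statement to the case where $\finB$ is a direct child of $\finA$, and then to handle that case by a straightforward case split driven by whether $\finA$ belongs to $\act(\Sigma)$. Concretely, for $\finA \preceq \finB$ in $\Sigma$ write the path $\finA = \finA_0 \prec \finA_1 \prec \cdots \prec \finA_n = \finB$ where each $\finA_{j+1}$ is a child of $\finA_j$ in $\Sigma$; by transitivity of $\lexgeq$ it is enough to prove $\psig(\finA_j,\Sigma) \lexgeq \psig(\finA_{j+1},\Sigma)$ for every $j$. So I fix $\finA\in\Sigma$ and a child $\finA\cdot i\in\Sigma$ and aim at $\psig(\finA,\Sigma) \lexgeq \psig(\finA\cdot i,\Sigma)$.

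\textbf{Case~1: $\finA\in\act(\Sigma)$.} Then $t(\finA)=\spri{\rmid}$ for a $P$\=/losing $\rmid$, so $\finA$ is unary and necessarily $i=0$. Writing $\psig(\finA\cdot 0,\Sigma)=(\ordB_{\rmin'},\ldots,\ordB_{\rmax'})$, Item~\ref{it:inductive-psig-los} of Definition~\ref{def:signatures} yields
\[\psig(\finA,\Sigma)=(\ordB_{\rmin'},\ldots,\ordB_{\rmid-2},\ordB_\rmid{+}1,0,\ldots,0).\]
The two tuples coincide on the coordinates with index $<\rmid$, and at the coordinate $\rmid$ the former carries $\ordB_\rmid{+}1>\ordB_\rmid$, so $\psig(\finA,\Sigma)\lexge \psig(\finA\cdot 0,\Sigma)$.

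\textbf{Case~2: $\finA\notin\act(\Sigma)$.} By Item~\ref{it:inductive-psig-rest}, $\psig(\finA,\Sigma)=\sup_{\finC\in\suk_\finA(\Sigma)}\psig(\finC,\Sigma)$. The key geometric observation is that, because $\finA\cdot i$ is a direct child of $\finA$, no node lies strictly between $\finA$ and $\finA\cdot i$ in $\preceq$, so the obstruction to a candidate successor being in $\suk_\finA(\Sigma)$ can only come from $\finA$ or $\finA\cdot i$ themselves. I split on whether $\finA\cdot i$ is active. If $\finA\cdot i\in\act(\Sigma)$, then any active $\finE$ with $\finA\preceq\finE\prec\finA\cdot i$ forces $\finE=\finA$, contradicting $\finA\notin\act(\Sigma)$; hence $\finA\cdot i\in\suk_\finA(\Sigma)$ and therefore $\psig(\finA\cdot i,\Sigma)\lexeq\psig(\finA,\Sigma)$ directly from the defining supremum. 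If $\finA\cdot i\notin\act(\Sigma)$, I unfold $\psig(\finA\cdot i,\Sigma)=\sup_{\finD\in\suk_{\finA\cdot i}(\Sigma)}\psig(\finD,\Sigma)$ and claim $\suk_{\finA\cdot i}(\Sigma)\subseteq \suk_\finA(\Sigma)$: for $\finD\in\suk_{\finA\cdot i}(\Sigma)$ any putative witness $\finE\in\act(\Sigma)$ with $\finA\preceq\finE\prec\finD$ must satisfy either $\finA\cdot i\preceq\finE$ (excluded by $\finD\in\suk_{\finA\cdot i}(\Sigma)$) or $\finE\prec\finA\cdot i$ (forcing $\finE\in\{\finA,\finA\cdot i\}$, both non\=/active by assumption). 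Taking supremum over the inclusion gives $\psig(\finA\cdot i,\Sigma)\lexeq\psig(\finA,\Sigma)$.

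The only step that needs care is the second bullet of Case~2, where one must rule out unwanted active separators between $\finA$ and an element of $\suk_{\finA\cdot i}(\Sigma)$; everything else is a direct reading of the defining clauses. Note that the argument is entirely local (uses only the parent/child step), so I do not need to invoke well\=/foundedness of $\gg$ here---the well\=/foundedness is already built into the fact that $\psig(\cdot,\Sigma)$ is well\=/defined in the first place.
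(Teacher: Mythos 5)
Your proof is correct and takes essentially the same approach as the paper, just unpacked into more explicit steps. Where the paper case-splits on whether $\finA\gg\finB$ and asserts that ``in each step of the [inductive] definition we do not decrease the value,'' your reduction to direct parent--child pairs together with Case~1 and the first bullet of Case~2 makes that terse claim precise; and the second bullet of your Case~2 is exactly the paper's observation that $\suk_\finB(\Sigma)\subseteq\suk_\finA(\Sigma)$ when neither node is active.
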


\begin{proof}
Consider a~pair of nodes $\finA\prec\finB$. If $\finA\gg\finB$ then the monotonicity follows directly from the inductive definition: in each step of the definition we do not decrease the value of $\psig(\cdot,\Sigma)$. Otherwise, when $\finA\prec\finB$ but $\finA\not\gg\finB$ then $\finA,\finB\notin\act(\Sigma)$ and $\suk_\finB(\Sigma)\subseteq \suk_\finA(\Sigma)$ and we use monotonicity of the supremum operator.
\end{proof}

\begin{fact}
\label{ft:restriction}
If $\finA\in\act(\Sigma)$ then $\psig(\finA,\Sigma)=\psig(\epsilon,\Sigma\restr\finA)$.
\end{fact}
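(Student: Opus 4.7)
I would prove the slightly stronger statement: for every $\finA\in\act(\Sigma)$ and every $\finB\in\Sigma\restr\finA$,
\[\psig(\finA\cdot\finB,\Sigma)\;=\;\psig(\finB,\Sigma\restr\finA).\]
Setting $\finB=\epsilon$ then yields Fact~\ref{ft:restriction}. The argument proceeds by well\=/founded induction on the relation $\gg$ applied inside $\Sigma\restr\finA$; recall that ${\gg}$ is well\=/founded by Lemma~\ref{lem:act-is-well-founded} applied to the winning strategy $\Sigma\restr\finA$.

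The only piece of machinery needed before the induction is the compatibility of $\act$, $\suk$ and $\gg$ with the restriction operation when $\finA\in\act(\Sigma)$. By~\eqref{eq:active-restrictive} we have $\act(\Sigma)\restr\finA=\act(\Sigma\restr\finA)$, which is exactly the statement that $\finA\cdot\finB\in\act(\Sigma)$ iff $\finB\in\act(\Sigma\restr\finA)$. From this I would directly derive, for every $\finB\in\Sigma\restr\finA$,
\[\suk_{\finA\cdot\finB}(\Sigma)\;=\;\{\finA\cdot\finC\mid \finC\in\suk_\finB(\Sigma\restr\finA)\},\]
because the $\preceq$\=/minimal active descendants of $\finA\cdot\finB$ in $\Sigma$ are precisely the $\finA$\=/shifts of the $\preceq$\=/minimal active descendants of $\finB$ in $\Sigma\restr\finA$. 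In turn this shows that $\finA\cdot\finB_1\gg\finA\cdot\finB_2$ in $\Sigma$ iff $\finB_1\gg\finB_2$ in $\Sigma\restr\finA$, so the induction schema on one side matches that on the other.

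For the inductive step I split on the clauses of Definition~\ref{def:signatures}. If $\finB\in\act(\Sigma\restr\finA)$, then $\finA\cdot\finB\in\act(\Sigma)$ with the same label $t(\finA\cdot\finB)=(t\restr\finA)(\finB)=\spri{\rmid}$, the unique child on both sides corresponds via $\finC\mapsto\finA\cdot\finC$, and clause~\ref{it:inductive-psig-los} applies the identical ordinal transformation to two values that coincide by the induction hypothesis. If $\finB\notin\act(\Sigma\restr\finA)$ (so also $\finA\cdot\finB\notin\act(\Sigma)$), clause~\ref{it:inductive-psig-rest} computes both $\psig$\=/values as the supremum over $\suk_{\finA\cdot\finB}(\Sigma)$ and $\suk_\finB(\Sigma\restr\finA)$ respectively, and by the bijection above together with the induction hypothesis these two suprema are equal (noting that the empty supremum, i.e.\ the $\gg$\=/minimal base case, is $(0,\ldots,0)$ on both sides).

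There is no real obstacle here; the statement is essentially a ``locality'' check for an inductive definition. The only step that requires attention is the bookkeeping around $\suk$ and $\gg$ under restriction, and the crucial hypothesis making it work is $\finA\in\act(\Sigma)$, since without it~\eqref{eq:active-restrictive} can fail (an intervening lower priority or a~$\sneg$ below $\finA$ would destroy activeness).
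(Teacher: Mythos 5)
Your proof is correct and takes essentially the same route as the paper: both hinge on~\eqref{eq:active-restrictive} and the observation that $\psig(\cdot,\Sigma)$ is determined by $\act(\Sigma)$ together with the labelling. The paper states this as a one-line observation, whereas you unfold it into an explicit ${\gg}$\=/induction (and correctly identify the transport of $\suk$ and $\gg$ under the restriction); the content is the same, just spelled out.
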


\begin{proof}
From~\eqref{eq:active-restrictive} we know that $\act(\Sigma)\restr\finA = \act(\Sigma\restr\finA)$. Since the definition of $\psig(\cdot,\Sigma)$ depends only on the set $\act(\Sigma)$ and its labelling by $t$, the values $\psig(\finA,\Sigma)$ and $\psig(\epsilon,\Sigma\restr\finA)$ must agree.
\end{proof}

Consider $\finA\in\Sigma$ and a~priority $\rmid\in\{\rmin,\ldots,\rmax\}$. Let 
\[\act_\finA(\Sigma,\rmid) =\{\finB\in\act(\Sigma)\mid \finA\preceq \finB\text{ and }t(\finB)\leq\rmid\}.\]
When we speak about $\act_\finB(\Sigma\restr\finA,\rmid)$ then we take $t\restr\finA$ instead of $t$ in the above definition, following the convention that we think about $\act(\Sigma)$ as a~set together with the labelling by~$t$.

\begin{lemma}
\label{lem:psig-restr}
If $\finA\in\Sigma$ and $\rmid$ is a~$P$\=/losing number then
\[\psig(\finA,\Sigma)\restr\rmid=\sup_{\finB\in \act_\finA(\Sigma,\rmid)} \big(\psig(\finB,\Sigma)\restr\rmid\big)=\sup_{\finB\in \act_\finA(\Sigma,\rmid)}\big(\psig(\epsilon,\Sigma\restr\finB)\restr\rmid\big).\]
\end{lemma}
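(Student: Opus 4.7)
The plan is to derive the first equality by well\=/founded induction on the relation ${\gg}$ (well\=/founded by Lemma~\ref{lem:act-is-well-founded}); the second equality is immediate from Fact~\ref{ft:restriction} applied term\=/by\=/term, since $\act_\finA(\Sigma,\rmid) \subseteq \act(\Sigma)$. Before starting the induction I would record a preliminary lemma: prefix\=/restriction commutes with $\lex$\=/suprema of $P$\=/signatures. Monotonicity of $(\cdot)\restr\rmid$ under $\lexeq$ is a position\=/by\=/position check; the reverse inequality is obtained by taking the putative upper bound $M' = \sup_{x \in S}\, x\restr\rmid$, extending it by an ordinal $\alpha$ chosen larger than every $x_q$ for $x \in S$ and $q > \rmid$ (legitimate since all the sets of signatures we sup over are countable), followed by zeros; this tuple is an upper bound of $S$ in $\lex$, so if $(\sup S)\restr\rmid$ strictly exceeded $M'$ we would have $\sup S > T$, contradicting minimality of $\sup S$.

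For the induction I would split on whether $\finA$ is active. When $\finA \notin \act(\Sigma)$, every active descendant of $\finA$ is an active descendant of a unique element of $\suk_\finA(\Sigma)$, giving $\act_\finA(\Sigma,\rmid) = \bigcup_{\finB \in \suk_\finA(\Sigma)} \act_\finB(\Sigma,\rmid)$. Using Item~\ref{it:inductive-psig-rest} of Definition~\ref{def:signatures}, the commutation of $\restr\rmid$ with suprema, and IH at each $\finB \in \suk_\finA(\Sigma)$ (valid since $\finA \gg \finB$), the computation
\begin{align*}
\psig(\finA,\Sigma)\restr\rmid &= \sup_{\finB \in \suk_\finA(\Sigma)} \psig(\finB,\Sigma)\restr\rmid \\
&= \sup_{\finB \in \suk_\finA(\Sigma)}\ \sup_{\finB' \in \act_\finB(\Sigma,\rmid)} \psig(\finB',\Sigma)\restr\rmid \\
&= \sup_{\finB' \in \act_\finA(\Sigma,\rmid)} \psig(\finB',\Sigma)\restr\rmid
\end{align*}
settles this case.

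When $\finA \in \act(\Sigma)$ with $t(\finA) = \rmid'$, the unique child $\finA \cdot 0$ is $\gg$\=/below $\finA$ (using $\finA$ itself as witness), so IH is available at $\finA \cdot 0$ for every cutoff. If $\rmid' > \rmid$ then Item~\ref{it:inductive-psig-los} modifies $\psig(\finA\cdot 0,\Sigma)$ only at positions $\geq \rmid' > \rmid$, so $\psig(\finA,\Sigma)\restr\rmid = \psig(\finA\cdot 0,\Sigma)\restr\rmid$, while $\act_\finA(\Sigma,\rmid) = \act_{\finA\cdot 0}(\Sigma,\rmid)$ because $\finA$ itself fails the priority bound; IH at $\finA\cdot 0$ closes the case. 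If $\rmid' \leq \rmid$, then $\finA \in \act_\finA(\Sigma,\rmid)$ already witnesses $R(\finA) \geq \psig(\finA,\Sigma)\restr\rmid$; for the reverse, any $\finB \in \act_\finA(\Sigma,\rmid)$ with $\finB \neq \finA$ satisfies $t(\finB) \leq \rmid'$ by active\=/monotonicity~\eqref{eq:active-monotone} and $\finA \cdot 0 \preceq \finB$, hence $\finB \in \act_{\finA\cdot 0}(\Sigma,\rmid')$; applying IH to $\finA\cdot 0$ at cutoff $\rmid'$ yields $\psig(\finB,\Sigma)\restr\rmid' \leq \psig(\finA \cdot 0,\Sigma)\restr\rmid'$. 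A short case analysis on whether $t(\finB) < \rmid'$ or $t(\finB) = \rmid'$ then promotes this to $\psig(\finB,\Sigma)\restr\rmid \leq \psig(\finA,\Sigma)\restr\rmid$: positions strictly between $\rmid'$ and $\rmid$ are zero on both sides by Item~\ref{it:inductive-psig-los}, while at position $\rmid'$ the increment rule inserts $+1$ on the $\finA$\=/side against either a zero entry (if $t(\finB) < \rmid'$) or against $\psig(\finB\cdot 0,\Sigma)_{\rmid'}+1 \leq \psig(\finA\cdot 0,\Sigma)_{\rmid'}+1$ (if $t(\finB) = \rmid'$).

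The main obstacle I anticipate is this last subcase: the IH applied at $\finA\cdot 0$ only controls the prefix of length $\rmid'$, whereas the conclusion demands a comparison at length $\rmid$. The rescue is that Item~\ref{it:inductive-psig-los} flattens all positions strictly above the current active priority to zero, so the lex\=/comparison on the restriction to $\rmid$ is decided at position $\rmid'$ or earlier. Arranging the bookkeeping so that a \emph{single} IH call at cutoff $\rmid'$ handles both possibilities for $t(\finB)$ (rather than separate calls at $\rmid'{-}2$ and $\rmid'$) requires noticing that, even when $t(\finB) = \rmid'$, the inequality $\psig(\finB,\Sigma)\restr\rmid' \leq \psig(\finA\cdot 0,\Sigma)\restr\rmid'$ at position $\rmid'$ already compares the $+1$\=/incremented values in exactly the form needed on the $\psig(\finA,\Sigma)$\=/vs.\=/$\psig(\finB,\Sigma)$ side.
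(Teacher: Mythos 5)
Your overall strategy — induction on the well\=/founded relation ${\gg}$, split on whether $\finA$ is active, and derive the second equality from Fact~\ref{ft:restriction} — is the same as the paper's. Your handling of the non\=/active case and the active case with $t(\finA) > \rmid$ matches the paper's proof. The difference, and the place where you make things harder than they need to be, is the subcase $\finA\in\act(\Sigma)$ with $t(\finA)=\rmid'\leq\rmid$ (so $\finA\in\act_\finA(\Sigma,\rmid)$). Here the paper simply invokes the monotonicity Lemma~\ref{lem:inductive-psig}: for any $\finB\in\act_\finA(\Sigma,\rmid)$ we have $\finA\preceq\finB$, hence $\psig(\finB,\Sigma)\lexeq\psig(\finA,\Sigma)$, and since prefix\=/restriction is monotone under $\lexeq$, the supremum is achieved at $\finB=\finA$ and equals $\psig(\finA,\Sigma)\restr\rmid$. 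No appeal to the inductive hypothesis is needed in this subcase at all.

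Your route through the IH at $\finA\cdot 0$ at cutoff $\rmid'$, followed by the ``promotion'' argument, is correct in outcome but is both unnecessary and imprecisely stated. The sentence ``$\psig(\finB\cdot 0,\Sigma)_{\rmid'}+1 \leq \psig(\finA\cdot 0,\Sigma)_{\rmid'}+1$'' reads as a pointwise inequality at coordinate $\rmid'$, but a lexicographic inequality $\psig(\finB,\Sigma)\restr\rmid'\lexeq\psig(\finA\cdot 0,\Sigma)\restr\rmid'$ does not in general give a pointwise bound at a fixed coordinate (the first difference may occur earlier). The argument is salvageable because what you actually need is only the chain $\psig(\finB,\Sigma)\restr\rmid'\lexeq\psig(\finA\cdot 0,\Sigma)\restr\rmid'\lexeq\psig(\finA,\Sigma)\restr\rmid'$ — the second step holding because $\psig(\finA,\Sigma)\restr\rmid'$ is obtained by incrementing the last coordinate of $\psig(\finA\cdot 0,\Sigma)\restr\rmid'$ — together with the observation that both $\psig(\finA,\Sigma)$ and $\psig(\finB,\Sigma)$ are zero at coordinates above $\rmid'$, so the comparison at cutoff $\rmid$ reduces to the comparison at cutoff $\rmid'$. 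That removes the need for any case split on $t(\finB)<\rmid'$ versus $t(\finB)=\rmid'$. But the cleanest fix is just to drop the IH appeal entirely in this subcase and use Lemma~\ref{lem:inductive-psig} as the paper does; you already proved (and the paper already established) everything you need.
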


\begin{proof}
We start with the first equality proved by an induction w.r.t. ${\gg}$. If $\finA\in\act(\Sigma)$ then either $t(\finA)>\rmid$ or $\finA\in\act_\finA(\Sigma,\rmid)$. In the latter case the equality is trivial by the monotonicity of $\psig(\cdot,\Sigma)$, see Lemma~\ref{lem:inductive-psig}. In the former case it follows from the inductive assumption for the unique child $\finA'$ of $\finA$ in $t$, as
\[\act_\finA(\Sigma,\rmid)=\act_{\finA'}(\Sigma,\rmid),\]
and the formula in Item~\ref{it:inductive-psig-los} of Definition~\ref{def:signatures} preserves $\psig(\finA,\Sigma)\restr\rmid$.

If $\finA\notin\act(\Sigma)$ then the equality holds by combining the supremum in Item~\ref{it:inductive-psig-rest} of Definition~\ref{def:signatures} with the one ranging over $\act_\finA(\Sigma,\rmid)$:
\[\act_\finA(\Sigma,\rmid)=\bigcup_{\finB\in\suk_\finA(\Sigma)}\act_\finB(\Sigma,\rmid).\]

The second equality follows immediately from Fact~\ref{ft:restriction}.
\end{proof}

\begin{lemma}
\label{lem:sig-invariant}
If $\finA\in\Sigma$, $\rmid$ is a~$P$\=/losing number, and for every $\finB\prec \finA$ the value $t(\finB)$ is not~$\sneg$ nor a~priority lower than~$\rmid$ then $\psig(\epsilon,\Sigma\restr\finA)\restr\rmid = \psig(\finA,\Sigma)\restr\rmid$.
\end{lemma}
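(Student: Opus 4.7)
The plan is to reduce the equality to a direct application of Lemma~\ref{lem:psig-restr} on both sides, after identifying the two index sets of the suprema via the path assumption.

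First I would invoke Lemma~\ref{lem:psig-restr} for the pair $(\finA,\Sigma)$ and for the pair $(\epsilon,\Sigma\restr\finA)$, obtaining
\[
\psig(\finA,\Sigma)\restr\rmid = \sup_{\finC\in\act_\finA(\Sigma,\rmid)}\psig(\epsilon,\Sigma\restr\finC)\restr\rmid
\]
and
\[
\psig(\epsilon,\Sigma\restr\finA)\restr\rmid = \sup_{\finC'\in\act_\epsilon(\Sigma\restr\finA,\rmid)}\psig\big(\epsilon,(\Sigma\restr\finA)\restr\finC'\big)\restr\rmid.
\]
Since $(\Sigma\restr\finA)\restr\finC'=\Sigma\restr(\finA\cdot\finC')$, it suffices to show that the map $\finC'\mapsto \finA\cdot\finC'$ is a bijection between $\act_\epsilon(\Sigma\restr\finA,\rmid)$ and $\act_\finA(\Sigma,\rmid)$.

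The content of that bijection is a bookkeeping argument using the path assumption. Suppose $\finC=\finA\cdot\finC'$ with $t(\finC)=\spri{\rmid'}$ for some $P$\=/losing $\rmid'\leq\rmid$. Activeness of $\finC$ in $\Sigma$ demands that no $\sneg$ and no priority $<\rmid'$ occur strictly before $\finC$, while activeness of $\finC'$ in $\Sigma\restr\finA$ demands the same only on the segment between $\finA$ and $\finC$. The hypothesis of the lemma gives that no $\sneg$ and no priority $<\rmid$ occur strictly before $\finA$; since $\rmid'\leq\rmid$, a priority $<\rmid'$ would in particular be $<\rmid$, so the two activeness conditions become equivalent. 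This yields the required bijection and finishes the proof.

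I do not expect any real obstacle; this lemma is essentially a compatibility statement between the global definition of $\psig$ and its restriction to a subtree, and the only subtle point is to notice that the inequality $\rmid'\leq\rmid$ (coming from the truncation $\restr\rmid$) is exactly what lets us compare the two activeness conditions. Once that observation is made, the rest is immediate from Lemma~\ref{lem:psig-restr} and the identity $(\Sigma\restr\finA)\restr\finC'=\Sigma\restr(\finA\cdot\finC')$.
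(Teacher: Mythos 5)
Your proof is correct and follows essentially the same route as the paper: apply Lemma~\ref{lem:psig-restr} twice and identify $\act_\finA(\Sigma,\rmid)$ with $\finA\cdot\act_\epsilon(\Sigma\restr\finA,\rmid)$ using the path hypothesis. If anything, your version is slightly more careful: the paper first asserts the unrestricted equality $\act(\Sigma)\restr\finA=\act(\Sigma\restr\finA)$, which as stated is not quite right (a node of $P$\=/losing priority $\rmid'>\rmid$ can be active in $\Sigma\restr\finA$ without being active in $\Sigma$ if some priority in $[\rmid,\rmid')$ occurs on the path to $\finA$), and then passes to the $\rmid$\=/truncated sets; you work directly with the truncated sets and make explicit that the inequality $\rmid'\leq\rmid$ is what lets the two activeness conditions coincide, which is precisely the point that makes the argument go through.
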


\begin{proof}
We start from observing that under our assumptions\footnote{Notice that~\eqref{eq:active-restrictive} does not apply because $\finA$ might be not active.}:
\[\act(\Sigma)\restr\finA=\act(\Sigma\restr\finA).\]
The ${\subseteq}$ containment holds always as if $\finA\cdot\finB$ is active in $\Sigma$ then $\finB$ is active in $\Sigma\restr\finA$. The opposite direction follows from the assumption about the labels of $t$ on the path to $\finA$. Therefore, we immediately get that
\begin{equation}
\label{eq:active-equal-restr}
\act_\finA(\Sigma,\rmid)\restr\finA=\act_\epsilon(\Sigma\restr\finA,\rmid),
\end{equation}
because the difference between $\act(\Sigma)\restr\finA$ and $\act_\finA(\Sigma,\rmid)\restr\finA$ depends only on the labels of $t$ in the respective nodes.

Therefore, by applying Lemma~\ref{lem:psig-restr} twice, we get that:
\begin{align*}
\psig(\finA,\Sigma)\restr\rmid &= \sup_{\finA\cdot\finB\in \act_\finA(\Sigma,\rmid)}\Big(\psig\big(\epsilon,\Sigma\restr(\finA\cdot\finB)\big)\restr\rmid\Big)=\\
&=\sup_{\finB\in \act_\epsilon(\Sigma\restr\finA,\rmid)}\Big(\psig\big(\epsilon,\Sigma\restr(\finA\cdot\finB)\big)\restr\rmid\Big)=\psig(\epsilon,\Sigma\restr\finA)\restr\rmid,
\end{align*}
where the middle equality follows from~\eqref{eq:active-equal-restr}. This concludes the proof of the lemma.
\end{proof}

\subsection{Properties of \texorpdfstring{$\sigA_P$}{sigP}}
\label{ssec:sig-properties}

\begin{lemma}
The functions $\sigA_P$ defined in~\eqref{eq:def-signature} satisfy Items~\ref{it:sig-inf} to~\ref{it:sig-max} from Lemma~\ref{lem:sig-exist}.
\end{lemma}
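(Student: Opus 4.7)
The plan is to verify each of Items~\ref{it:sig-inf}--\ref{it:sig-max} directly from Definition~\ref{def:signatures} and the infimum formula~\eqref{eq:def-signature}, by analysing how strategies and their sets of active positions transform under the respective tree operations. The first two items are essentially immediate: for Item~\ref{it:sig-inf}, if $P$ loses $\Gg(t)$ then the infimum in~\eqref{eq:def-signature} ranges over the empty set and equals $\infty$, whereas if $P$ wins the infimum is taken over a~non\=/empty set of tuples of countable ordinals and is hence distinct from $\infty$; for Item~\ref{it:sig-neg}, every non\=/empty node of $\sneg(t)$ has the root (labelled $\sneg$) as a~proper prefix, so no node of any strategy can be active and Definition~\ref{def:signatures} collapses to its base case, yielding $\psig(\epsilon,\Sigma)=(0,\ldots,0)$ for every winning $\Sigma$.

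For Items~\ref{it:sig-j-win} and~\ref{it:sig-j-los} I use the canonical bijection between strategies $\Sigma_0$ of $P$ in $\Gg(t)$ and strategies $\Sigma'=\{\epsilon\}\cup(0{\cdot}\Sigma_0)$ in $\Gg\bigl(\spri{\rmid}(t)\bigr)$. The key observation is that a~position $0{\cdot}\finA$ is active in $\Sigma'$ iff $\finA$ is active in $\Sigma_0$ \emph{and} $t(\finA)\leq\rmid$, the priority $\rmid$ placed at the root disqualifying active positions of strictly larger priority. For Item~\ref{it:sig-j-win} (where $\rmid$ is $P$\=/winning, so $\epsilon$ itself is not active in $\Sigma'$), Lemma~\ref{lem:psig-restr} then identifies $\psig(\epsilon,\Sigma')$ with the tuple obtained from $\psig(\epsilon,\Sigma_0)$ by zeroing all coordinates of index $\geq\rmid$; taking infima over $\Sigma_0$ yields the displayed formula. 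For Item~\ref{it:sig-j-los} the root $\epsilon$ is itself active in $\Sigma'$, so clause~\ref{it:inductive-psig-los} of Definition~\ref{def:signatures} applies, and combining it with Lemma~\ref{lem:psig-restr} (which pairs the coordinates of $\psig(0,\Sigma')$ up to index~$\rmid$ with those of $\psig(\epsilon,\Sigma_0)$) produces the expected ``$+1$'' in coordinate~$\rmid$ on top of the truncated tuple.

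For Items~\ref{it:sig-min} and~\ref{it:sig-max} the root of $\spla{P}(t_\dL,t_\dR)$ (resp.\ $\spla{\bar{P}}(t_\dL,t_\dR)$) is not active and carries no $\sneg$ above it, so Fact~\ref{ft:restriction} together with Lemma~\ref{lem:sig-invariant} transports $\psig$ values from the root down to the subtrees. In Item~\ref{it:sig-min} a~strategy of $P$ picks a~single direction $d$ and then follows an arbitrary $\Sigma_d$ in $t_d$, so $\psig(\epsilon,\Sigma')=\psig(\epsilon,\Sigma_d)$ and the infimum yields $\min\{\sigA_P(t_\dL),\sigA_P(t_\dR)\}$. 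In Item~\ref{it:sig-max} the strategy of $P$ must instead respond in both directions, and splitting the supremum defining $\psig(\epsilon,\Sigma')$ between the two subtrees gives $\psig(\epsilon,\Sigma')=\max\bigl(\psig(\epsilon,\Sigma_\dL),\psig(\epsilon,\Sigma_\dR)\bigr)$. The main obstacle is then to justify the interchange $\inf_{(\Sigma_\dL,\Sigma_\dR)}\max(\cdot,\cdot)=\max\bigl(\inf_{\Sigma_\dL}(\cdot),\inf_{\Sigma_\dR}(\cdot)\bigr)$: since $\Sigma_\dL$ and $\Sigma_\dR$ can be chosen independently and the lexicographic order on $P$\=/signatures is a~well\=/order, each $\sigA_P(t_d)$ is \emph{attained} by some optimal $\Sigma_d^\ast$, so the pair $(\Sigma_\dL^\ast,\Sigma_\dR^\ast)$ realises $\max(\sigA_P(t_\dL),\sigA_P(t_\dR))$ while the matching lower bound is coordinate\=/wise immediate; the boundary case where $P$ loses one of the subtrees collapses both sides to $\infty$, because $\bar{P}$ wins $\Gg\bigl(\spla{\bar{P}}(t_\dL,t_\dR)\bigr)$ by moving into the losing direction.
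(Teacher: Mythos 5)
Your proposal is correct and takes essentially the same approach as the paper: reduce to each of the six invariants, use the bijection between winning strategies in the outer game and the subgame(s), analyse how $\act(\Sigma)$ transforms, and pass to the infimum. The only differences are cosmetic — you lean on Lemma~\ref{lem:psig-restr} where the paper invokes its consequence Lemma~\ref{lem:sig-invariant} for Items~\ref{it:sig-j-win}--\ref{it:sig-j-los}, and you spell out the $\inf$/$\max$ interchange and the degenerate case in Item~\ref{it:sig-max} that the paper leaves implicit under the phrase ``it is enough to prove the equality for each winning strategy separately.''
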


\begin{proof}
This proof is a~routine check of the respective cases.

Firstly, Item~\ref{it:sig-inf} follows from the infimum taken in~\eqref{eq:def-signature}, as $\psig(\finA,\Sigma)$ is always a~tuple of ordinals (not $\infty$). Item~\ref{it:sig-neg} follows directly from the fact that if $t=\sneg(t')$ then $\act(\Sigma)=\emptyset$ and $\psig(\epsilon,\Sigma)=(0,\ldots,0)$.

Consider Items~\ref{it:sig-j-win} and~\ref{it:sig-j-los} with a~tree of the form $\spri{\rmid}(t)$ (depending on the case, $\rmid$ is either $P$\=/winning or $P$\=/losing). Let $\finA=0$ that is $t$ is the subtree of $\spri{\rmid}(t)$ under $\finA$. Notice that there is a~bijection $\Sigma\leftrightarrow \Sigma\restr \finA$ between winning strategies of $P$ in $\Gg\big(\spri{\rmid}(t)\big)$ and winning strategies of $P$ in $\Gg(t)$. Thus, it is enough to prove the respective equality for each winning strategy $\Sigma$ separately. Take such a~strategy and consider the two possible cases.

First assume that $\rmid$ is $P$\=/winning. Then $\epsilon\notin\act(\Sigma)$ and $\psig(\epsilon,\Sigma)=\psig(\finA,\Sigma)$---either $\finA\notin\act(\Sigma)$ and both values are defined by the same supremum, or $\finA\in\act(\Sigma)$ and $\psig(\epsilon,\Sigma)$ is the supremum over the singleton set $\suk_{\epsilon}(\Sigma)=\{\finA\}$. By the definition of active nodes, for all $\finB\in\act(\Sigma)$ we know that $t(\finB)<\rmid$. Let $\rmid'=\rmid-1$. By the form of the $P$\=/signatures in Item~\ref{it:inductive-psig-los} of Definition~\ref{def:signatures} we know that $\psig(\epsilon,\Sigma)$ is of the form $(\ordA_{\rmin'},\ldots,\ordA_{\rmid'},0,\ldots,0)$. Applying Lemma~\ref{lem:sig-invariant} to the $P$\=/losing number $\rmid'$ (if $\rmid'<\rmin$ then we are done) we know that
\[\psig(\epsilon,\Sigma\restr\finA)\restr{\rmid'}=\psig(\finA,\Sigma)\restr\rmid'=\psig(\epsilon,\Sigma)\restr\rmid'=
(\ordA_{\rmin'},\ldots,\ordA_{\rmid'}).\]
Therefore, the equality from Item~\ref{it:sig-j-win} of Lemma~\ref{lem:sig-exist} holds for $\Sigma$ and $\Sigma\restr\finA$.

Now assume that $\rmid$ is $P$\=/losing. Then $\epsilon\in\act(\Sigma)$ and by Definition~\ref{def:signatures} we know that $\psig(\epsilon,\Sigma)=(\ordA_{\rmin'},\ldots,\ordA_{\rmid}+1,0,\ldots,0)$ where $\psig(\finA,\Sigma)=(\ordA_{\rmin'},\ldots,\ordA_{\rmax'})$. Again by applying Lemma~\ref{lem:sig-invariant} we get that
\[\psig(\epsilon,\Sigma\restr\finA)\restr\rmid=\psig(\finA,\Sigma)\restr\rmid=(\ordA_{\rmin'},\ldots,\ordA_{\rmax'}),\]
what demonstrates equality in Item~\ref{it:sig-j-los} of Lemma~\ref{lem:sig-exist} for $\Sigma$ and $\Sigma\restr\finA$.

Consider Item~\ref{it:sig-min} with $t=\spla{P}(t_\dL,t_\dR)$. The set $\Ww$ of all winning strategies in $\Gg(t)$ can be split into those who contain the node $(0)$ (i.e.~the left child of the root) denoted $\Ww_{(0)}$ and those who contain the node $(1)$ denoted $\Ww_{(\dR)}$; each category of strategies may be empty. For each strategy $\Sigma\in\Ww_{\finA}$ containing a~child $\finA$ of the root, $\psig(\epsilon,\Sigma)=\psig(\epsilon,\Sigma\restr\finA)$, because $t(\epsilon)=\spla{P}$ and $\act(\Sigma) = \finA\cdot \act(\Sigma\restr\finA)$. Therefore, Item~\ref{it:sig-min} follows from~\eqref{eq:def-signature} as
\[\sigA_P(t)=\inf_{\Sigma\in\Ww}\psig(\epsilon,\Sigma)=\min_{\finA\in\{(0),(1)\}}\ \inf_{\Sigma\in\Ww_{\finA}}\psig(\epsilon,\Sigma\restr\finA)=\min\big\{\sigA_P(t_\dL),\sigA_P(t_\dR)\big\}.\]

Now consider Item~\ref{it:sig-max} with $t=\spla{\bar{P}}(t_\dL,t_\dR)$. There is a~bijection $\Sigma \leftrightarrow\big(\Sigma\restr{(0)},\Sigma\restr{(1)}\big)$ between winning strategies $\Sigma$ in $\Gg(t)$ and pairs of winning strategies $(\Sigma_\dL,\Sigma_\dR)$ in $\Gg(t_\dL)$ and~$\Gg(t_\dR)$ respectively. Therefore, it is enough to prove the equality from Item~\ref{it:sig-max} for each winning strategy $\Sigma$ separately. Take such a~strategy and observe that $\epsilon\notin\act(\Sigma)$ and therefore
\[\suk_\epsilon(\Sigma)=(0)\cdot \suk_\epsilon\big(\Sigma\restr{(0)}\big)\ \cup\ (1)\cdot \suk_\epsilon\big(\Sigma\restr{(1)}\big),\]
Thus, by Item~\ref{it:inductive-psig-rest} of Definition~\ref{def:signatures} and Fact~\ref{ft:restriction} we know that
\begin{align*}
\psig(\epsilon,\Sigma)&=\sup_{\finB\in\suk_\epsilon(\Sigma)}\psig(\finB,\Sigma)\\
&=\sup_{\finB\in\suk_\epsilon(\Sigma)}\psig(\epsilon,\Sigma\restr\finB)\\
&=\max_{\finA\in\{(0),(1)\}}\ \sup_{\finB\in\suk_\epsilon(\Sigma\restr\finA)}\psig\big(\epsilon,\Sigma\restr(\finA\cdot\finB)\big)\\
&=\max_{\finA\in\{(0),(1)\}}\ \sup_{\finB\in\suk_\epsilon(\Sigma\restr\finA)}\psig\big(\finB,\Sigma\restr\finA)\big)\\
&=\max_{\finA\in\{(0),(1)\}}\psig(\epsilon,\Sigma\restr\finA).
\end{align*}
This concludes the proof showing that the constructed functions $\sigA_P$ satisfy Items~\ref{it:sig-inf} to~\ref{it:sig-max} from Lemma~\ref{lem:sig-exist}.
\end{proof}

We can now apply the above observations to demonstrate Lemma~\ref{lem:sig-subopt}.

\newcommand{\lemSigSubopt}{
Let $P\in\{\pI,\pII\}$, $t\in\Wng{P}{\rmin}{\rmax}$, $\rmin'$ be the minimal $P$\=/losing number, and $\rpro$ be some $P$-losing number. Assume that $(\ordA_{\rmin'},\ordA_{\rmin'+2},\ldots,\ordA_{\rpro})$ is a~tuple of ordinals and $\Sigma_P$ is a~partial strategy of the player~$P$ in $\Gg(t)$ such that:
\begin{itemize}
\item $\Sigma_P$ never reaches a~node $\finA$ with $t(\finA)=\spri{\rmid}$ with $\rmid\leq\rpro$ nor a~node $\finA$ with $t(\finA)=\sneg$,
\item infinite plays of $\Sigma_P$ are winning for $P$,
\item if a~position $\finA\in\dom(t)$ is not reachable by $\Sigma$ then $\sigA_P\big(t\restr\finA\big)\restr{\rpro}\lexeq (\ordA_{\rmin'},\ldots,\ordA_{\rpro})$.
\end{itemize}
Under all these assumptions $\sigA_P(t)\restr{\rpro}\lexeq (\ordA_{\rmin'},\ldots,\ordA_{\rpro})$.
}

\begin{lemma}
\label{lem:sig-subopt}
\lemSigSubopt
\end{lemma}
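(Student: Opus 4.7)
The plan is to complete $\Sigma_P$ into a full winning strategy $\Sigma$ of $P$ by attaching, at each ``boundary'' position, a winning strategy whose $\psig$ realises $\sigA_P$, and then to bound $\psig(\epsilon,\Sigma)\restr\rpro$ via two applications of Lemma~\ref{lem:psig-restr}. Call $\finA' \in \dom(t)$ a \emph{boundary} if $\finA' \notin \Sigma_P$ while its parent lies in $\Sigma_P \cap V_{\bar{P}}$; the third hypothesis then forces $\sigA_P(t\restr\finA') \neq \infty$, so $P$ wins $\Gg(t\restr\finA')$. Because the lexicographic order on tuples of countable ordinals of fixed finite length is a well\=/order, the infimum in~\eqref{eq:def-signature} is attained, so for each boundary $\finA'$ I fix a winning strategy $\Sigma^{\finA'}$ of $P$ in $\Gg(t\restr\finA')$ with $\psig(\epsilon,\Sigma^{\finA'}) = \sigA_P(t\restr\finA')$. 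I then define $\Sigma$ to follow $\Sigma_P$ (closing it under \emph{every} move of $\bar{P}$ at $\Sigma_P$\=/positions) and, below each boundary $\finA'$, to switch to the translated copy of $\Sigma^{\finA'}$. This $\Sigma$ is $P$\=/deterministic and $\bar{P}$\=/full; every infinite play of $\Sigma$ either never leaves $\Sigma_P$ (winning for $P$ by the second hypothesis) or eventually enters some $\Sigma^{\finA'}$ (winning, since $\Sigma^{\finA'}$ is). Hence $\Sigma$ is a winning strategy and $\sigA_P(t) \lexeq \psig(\epsilon,\Sigma)$ by~\eqref{eq:def-signature}.

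To bound $\psig(\epsilon,\Sigma)\restr\rpro$, apply Lemma~\ref{lem:psig-restr} to $\Sigma$ at $\rpro$:
\[\psig(\epsilon,\Sigma)\restr\rpro \;=\; \sup_{\finB \in \act_\epsilon(\Sigma,\rpro)} \big(\psig(\epsilon,\Sigma\restr\finB)\restr\rpro\big).\]
The first hypothesis ensures that no node of $\Sigma_P$ is labelled by $\sneg$ or by $\spri{\rmid}$ with $\rmid\leq\rpro$, so no element of $\act_\epsilon(\Sigma,\rpro)$ belongs to $\Sigma_P$; every $\finB \in \act_\epsilon(\Sigma,\rpro)$ therefore has a (unique) boundary ancestor $\finA' \preceq \finB$. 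Writing $\finB = \finA' \cdot \tr$ for the residual suffix, the absence of forbidden labels between $\finA'$ and $\finB$ yields $\tr \in \act_\epsilon(\Sigma^{\finA'},\rpro)$ and $\Sigma\restr\finB = \Sigma^{\finA'}\restr\tr$. A second application of Lemma~\ref{lem:psig-restr} inside $\Gg(t\restr\finA')$, together with optimality of $\Sigma^{\finA'}$ and the third hypothesis, gives
\[\psig(\epsilon,\Sigma\restr\finB)\restr\rpro \;\lexeq\; \psig(\epsilon,\Sigma^{\finA'})\restr\rpro \;=\; \sigA_P(t\restr\finA')\restr\rpro \;\lexeq\; (\ordA_{\rmin'},\ldots,\ordA_{\rpro}).\]
Taking the supremum closes the bound, whence $\sigA_P(t)\restr\rpro \lexeq (\ordA_{\rmin'},\ldots,\ordA_{\rpro})$.

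The delicate point is the combinatorial step that relates $\act_\epsilon(\Sigma,\rpro)$ to the boundary decomposition $\bigcup_{\finA'} \finA' \cdot \act_\epsilon(\Sigma^{\finA'},\rpro)$; it rests precisely on the sterility of $\Sigma_P$ with respect to $\sneg$ and to priorities $\leq\rpro$, which localises all of the $\rpro$\=/active behaviour of $\Sigma$ inside the attached strategies $\Sigma^{\finA'}$. The remainder reduces to two clean invocations of Lemma~\ref{lem:psig-restr} and to the minimality property of the well\=/order on $P$\=/signatures.
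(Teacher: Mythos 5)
Your proposal is correct and follows essentially the same route as the paper: complete the partial strategy $\Sigma_P$ by attaching, at each unreachable (``boundary'') position $\finA'$, a winning strategy realising $\sigA_P(t\restr\finA')$, observe via the first hypothesis that all $\rpro$-active nodes of the completed strategy lie strictly below the boundaries, and then bound $\psig(\epsilon,\Sigma)\restr\rpro$ by two invocations of Lemma~\ref{lem:psig-restr}. The only cosmetic difference is that you bound each term $\psig(\epsilon,\Sigma\restr\finB)\restr\rpro$ individually and then take the supremum, whereas the paper first rewrites $\act_\epsilon(\Sigma',\rpro)$ as a union over the boundaries and factors the supremum into a double one — same computation, merely organised differently; you are also a bit more explicit than the paper in justifying that the infimum in~\eqref{eq:def-signature} is attained.
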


\begin{proof}
Consider a~position $\finA$ that is not reachable by $\Sigma$. By the definition of $\sigA_P(t)$ as an infimum we know that there exists a~winning strategy $\Sigma'_\finA$ of $P$ in $\Gg(t\restr \finA)$ with $\psig(\epsilon,\Sigma'_\finA)=\sigA_P(t\restr \finA)$. By the assumption $\psig(\epsilon,\Sigma'_\finA)\restr\rpro\lexeq (\ordA_{\rmin'},\ldots,\ordA_{\rpro})$. Let $\Sigma'$ be the partial strategy obtained as the extension of $\Sigma$ by the strategies $\Sigma'_\finA$ for all the positions $\finA$ that are not reachable by $\Sigma$:
\begin{equation}
\label{eq:union-of-strat}
\Sigma'\eqdef \Sigma\cup\bigcup_{\text{$\finA$ not reachable by $\Sigma$}}\ \finA\cdot \Sigma'_\finA.
\end{equation}
Clearly $\Sigma'$ is indeed a~strategy and it is winning for $P$. Notice that the set $\act_\epsilon(\Sigma,\rmid)$ is disjoint from $\Sigma$ and therefore we get that
\[\act_\epsilon(\Sigma',\rpro)=\bigcup_{\text{$\finA$ not reachable by $\Sigma$}} \finA\cdot\act_\epsilon(\Sigma'_\finA,\rpro).\]

Thus, by Lemma~\ref{lem:psig-restr} and the assumptions on the strategies $\Sigma'_\finA$ we obtain that:
\begin{align*}
\psig(\epsilon,\Sigma')\restr\rpro&=\sup_{\finB\in \act_\epsilon(\Sigma', \rpro)}\big(\psig(\epsilon,\Sigma'\restr\finB)\restr\rpro\big)\\
&=\sup_{\text{$\finA$ not reachable by $\Sigma$}}\ \sup_{\finB\in \act_\epsilon(\Sigma'_\finA, \rpro)}\big(\psig(\epsilon,\Sigma'_\finA\restr\finB)\restr\rpro\big)\\
&=\sup_{\text{$\finA$ not reachable by $\Sigma$}}\ \big(\psig(\epsilon,\Sigma'_\finA)\restr\rpro\big)\\
&\lexeq(\ordA_{\rmin'},\ldots,\ordA_{\rpro}).
\end{align*}
\end{proof}

\subsection{Consequences of the invariants}
\label{ssec:sig-consequences}

We will now show that any pair of functions $\sigA'_P$ that satisfies the invariants from Lemma~\ref{lem:sig-exist} must guarantee that $\sigA'$\=/optimal strategies are winning, i.e.~Lemma~\ref{lem:sig-opt-win}. For the sake of this subsection assume that $\sigA'_P$ is such a~pair of functions, satisfying Items~\ref{it:sig-inf} to~\ref{it:sig-max} from Lemma~\ref{lem:sig-exist}, $t$ is well\=/formed, $\sigA'_P(t)\lex\infty$, and $\Sigma$ is a~$\sigA'$\=/optimal strategy in $\Gg(t)$ (that is an optimal strategy defined with respect to the functions $\sigA'_P$ for $P=\pI,\pII$).

\begin{lemma}
\label{lem:strat-opt-monotone}
Under the above assumptions, consider a~pair $\finA\preceq\finB\in\Sigma$. Assume that for all $\finB'$ such that $\finA\preceq\finB'\prec\finB$ the value $t(\finB')$ is not $\sneg$ nor $\spri{\rmid'}$ with $\rmid'<\rmid$. Then
\begin{equation}
\label{eq:sig-monotone}
\sigA'_P(t\restr\finA)\restr\rmid\lexgeq \sigA'_P(t\restr\finB)\restr\rmid.
\end{equation}

Moreover, if $t(\finA)=\spri{\rmid}$ then the inequality is strict.
\end{lemma}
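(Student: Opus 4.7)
The plan is to prove the weak inequality by induction on $n \eqdef |\finB| - |\finA|$, with $n = 0$ giving it as a~trivial equality. For $n \geq 1$, let $c$ denote the parent of $\finB$, so that $\finB = c\cdot d$ for some direction $d$ and $\finA \preceq c \prec \finB$. Applying the inductive hypothesis to the pair $(\finA, c)$ yields $\sigA'_P(t\restr\finA)\restr\rmid \lexgeq \sigA'_P(t\restr c)\restr\rmid$, so it suffices to prove the one-step bound $\sigA'_P(t\restr c)\restr\rmid \lexgeq \sigA'_P(t\restr\finB)\restr\rmid$, strict at coordinate $\rmid$ whenever $t(c) = \spri{\rmid}$ and $\rmid$ is $P$\=/losing.

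Before the case analysis I~would first establish the following auxiliary dichotomy: for every $\finB' \in \Sigma$, the signature $\sigA'_P(t\restr\finB')$ is finite if and only if $\finB'$ is kept in $t$. This is proved by a~separate induction on $|\finB'|$. The base case $\finB' = \epsilon$ uses the standing hypothesis $\sigA'_P(t)\lex\infty$ together with the fact that $\epsilon$ is kept. The inductive step analyses the label at the parent of $\finB'$: an $\sneg$\=/label simultaneously flips parity and finiteness via Items~\ref{it:sig-inf} and~\ref{it:sig-neg}; a~priority label preserves both (Items~\ref{it:sig-j-win} and~\ref{it:sig-j-los} preserve finite tuples, while Item~\ref{it:sig-inf} together with the fact that adding one priority node does not change the winner preserves $\infty$); and for $\spla{Q}$\=/labels the invariants of Items~\ref{it:sig-min} and~\ref{it:sig-max}, combined with determinacy (Item~\ref{it:sig-inf}) and the $\sigA'$\=/optimal choices of $\Sigma$, preserve the dichotomy in both directions. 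In particular, at a~switched position labelled $\spla{\bar P}$ (a~$P$\=/position of $\Gg(t)$) the strategy minimises $\sigA'_{\bar P}$, forcing the chosen child to inherit $\sigA'_{\bar P}\lex\infty$ from the parent and hence $\sigA'_P = \infty$.

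With the dichotomy in hand the one-step comparison splits by cases on $t(c)$. If $c$ is switched, so is $\finB$ by the no\=/$\sneg$ hypothesis, and the dichotomy gives $\sigA'_P(t\restr c) = \sigA'_P(t\restr\finB) = \infty$, making the inequality trivial. Otherwise $c$ is kept, with three sub-cases. When $t(c) = \spri{\rmid'}$ with $\rmid' \geq \rmid$, Items~\ref{it:sig-j-win} and~\ref{it:sig-j-los} show that the two restrictions $\restr\rmid$ coincide, except when $\rmid' = \rmid$ is $P$\=/losing, in which case the $\rmid$-coordinate of $\sigA'_P(t\restr c)$ is strictly larger by one. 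When $t(c) = \spla{P}$, the position $c$ is a~$P$\=/position of $\Gg(t)$ (kept~$+\spla{P}$), so $\sigA'$\=/optimality forces $\finB$ to minimise $\sigA'_P$, and Item~\ref{it:sig-min} gives $\sigA'_P(t\restr c) = \sigA'_P(t\restr\finB)$. When $t(c) = \spla{\bar P}$, $c$ is a~$\bar P$\=/position, both its children belong to~$\Sigma$, and Item~\ref{it:sig-max} yields $\sigA'_P(t\restr c) = \max\big\{\sigA'_P(t\restr c\cdot 0),\sigA'_P(t\restr c\cdot 1)\big\} \lexgeq \sigA'_P(t\restr\finB)$.

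The ``moreover'' clause follows: when $t(\finA) = \spri{\rmid}$ with $\rmid$ a~$P$\=/losing priority (and $\finA \prec \finB$), the very first step of the induction (namely $c = \finA$) already produces a~strict comparison at coordinate $\rmid$ via the sub-case above, and since transitivity of $\lexgeq$ preserves any lexicographic strictness already present at coordinate $\leq\rmid$, the overall comparison remains strict. The main conceptual obstacle of the proof is the switched sub-case with $t(c) = \spla{P}$: a~naive application of Item~\ref{it:sig-min} would give $\sigA'_P(t\restr c) \lexeq \sigA'_P(t\restr\finB)$---the wrong direction---and it is precisely the auxiliary dichotomy that rescues the argument by collapsing every switched case into the trivial comparison of two infinite signatures.
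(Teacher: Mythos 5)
Your proof is correct and follows the same top\-/level decomposition the paper uses: reduce to the one\-/step case where $\finB$ is a~child of $\finA$, then case\-/split on the label at the parent and invoke the invariants of Lemma~\ref{lem:sig-exist}. What you add beyond the paper's terse proof is the auxiliary dichotomy---for $\finB'\in\Sigma$, $\sigA'_P(t\restr\finB')\lex\infty$ iff $\finB'$ is \emph{kept}---and this is a~genuine tightening. The paper dismisses the remaining cases with the single sentence ``By our assumptions, the cases from Items~\ref{it:sig-inf} and~\ref{it:sig-neg} cannot happen,'' but never justifies \emph{why} $\sigA'_P(t\restr\finA)\lex\infty$, which is exactly what is needed so that the $\spla{P}$\-/case really is a~$P$\-/position of $\Gg(t)$ where $\Sigma$ minimises $\sigA'_P$ (giving equality via Item~\ref{it:sig-min}), rather than a~\emph{switched} $\bar{P}$\-/position where $\Sigma$ is full and Item~\ref{it:sig-min} yields the wrong\-/direction inequality $\sigA'_P(t\restr\finA)\lexeq\sigA'_P(t\restr\finB)$. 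Your dichotomy closes exactly this gap by showing every switched node reachable along $\Sigma$ carries signature $\infty$, so the switched cases collapse to the trivial comparison of $\infty$ with $\infty$. Two small remarks: (i) in its actual uses (Lemmas~\ref{lem:sig-strat-witness} and~\ref{lem:sig-opt-strat}) the present lemma is only ever invoked at kept $\finA$, which is why the paper's shortcut is harmless in context; and (ii) your derivation of the ``moreover'' clause, like the paper's, tacitly assumes $\finA$ is kept---for a~switched $\finA$ with $t(\finA)=\spri{\rmid}$ your own dichotomy gives $\infty$ on both sides, not a~strict inequality---so that restriction should be stated explicitly.
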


\begin{proof}
The thesis follows inductively from the case when $\finB$ is a~child of $\finA$. This case is done using Lemma~\ref{lem:sig-exist} by considering the respective cases of the form of $t\restr \finA$. By our assumptions, the cases from Items~\ref{it:sig-inf} and \ref{it:sig-neg} cannot happen. The cases in Items~\ref{it:sig-j-win} and~\ref{it:sig-j-los} cannot violate~\eqref{eq:sig-monotone} because no priority smaller than $\rmid$ occurs in $t(\finA)$. Moreover, Item~\ref{it:sig-j-los} implies strict inequality in the case $t(\finA)=\spri{\rmid}$. Item~\ref{it:sig-min} together with $\sigA'$\=/optimality of $\Sigma$ guarantee equality in~\eqref{eq:sig-monotone}. Finally, Item~\ref{it:sig-max} implies~\eqref{eq:sig-monotone} without additional assumptions.
\end{proof}

The following result shows that Lemma~\ref{lem:sig-opt-win} is a~consequence of Lemma~\ref{lem:sig-exist}.

\begin{lemma}
\label{lem:sig-strat-witness}
Under the assumptions of this subsection, $\Sigma$ is winning for $P$.
\end{lemma}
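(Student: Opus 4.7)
The plan is to derive a contradiction from the existence of an infinite play $\alpha$ of $\Sigma$ that is losing for $P$ in $\Gg(t)$. By well\=/formedness of $t$, only finitely many $\sneg$ appear on $\alpha$, so I fix $\finA_0 \prec \alpha$ past the last such occurrence; beyond $\finA_0$ the play is entirely kept or entirely switched in $t$. The symmetry of Items~\ref{it:sig-inf}--\ref{it:sig-max} of Lemma~\ref{lem:sig-exist} between $P$ and $\bar P$ means that the argument below in the kept case has a verbatim analogue in the switched case (with $P$ and $\sigA'_P$ replaced by $\bar P$ and $\sigA'_{\bar P}$), so I proceed under the assumption that $\alpha$ is kept past $\finA_0$.

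The essential auxiliary step will be a finiteness invariant along $\Sigma$: for every $\finA \in \Sigma$, $\sigA'_{P(\finA)}(t\restr \finA) \lex \infty$, where $P(\finA) = P$ if $\finA$ is kept in $t$ and $P(\finA) = \bar P$ otherwise. I establish this by induction along $\Sigma$ from the root (where $\sigA'_P(t) \lex \infty$ is the base case). The inductive step is a~case analysis on $t(\finA)$ driven by Lemma~\ref{lem:sig-exist}: Items~\ref{it:sig-j-win}--\ref{it:sig-j-los} propagate finiteness across $\spri{\rmid}$\=/nodes; Item~\ref{it:sig-min} together with $\sigA'$\=/optimality of $\Sigma$ handles nodes where $P(\finA)$ controls the choice (the chosen child realises the minimum); Item~\ref{it:sig-max} handles nodes where the opponent of $P(\finA)$ chooses, since both children then have active signature bounded by that of the parent; and Items~\ref{it:sig-neg},~\ref{it:sig-inf} together with determinacy of $\Gg$ handle the $\sneg$\=/nodes, where $P(\finA)$ flips and finiteness of the parent's active signature forces finiteness of the child's.

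With $\sigA'_P(t\restr \finA_0) \lex \infty$ in hand, let $\rmid$ be the smallest tree priority appearing infinitely often on $\alpha$ past $\finA_0$; since the play is kept there and losing for $P$, this $\rmid$ is $P$\=/losing. Fix $\finA \succeq \finA_0$ on $\alpha$ past which no priority smaller than $\rmid$ occurs, and enumerate as $\finB_0 \prec \finB_1 \prec \cdots$ the positions on $\alpha$ after $\finA$ with $t(\finB_n) = \spri{\rmid}$. Applying Lemma~\ref{lem:strat-opt-monotone} to each consecutive pair $(\finB_n, \finB_{n+1})$---its hypothesis is met because no $\sneg$ or priority smaller than $\rmid$ appears between them---and invoking its strict\=/inequality clause (triggered by $t(\finB_n) = \spri{\rmid}$) yields the strictly descending chain
\[
\sigA'_P(t\restr \finB_0) \restr \rmid \;\lexge\; \sigA'_P(t\restr \finB_1) \restr \rmid \;\lexge\; \sigA'_P(t\restr \finB_2) \restr \rmid \;\lexge\; \cdots
\]
The finiteness invariant guarantees that every entry is a~genuine tuple of countable ordinals rather than $\infty$, so the well\=/foundedness of the lexicographic order on such tuples delivers the contradiction. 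The main obstacle will be the finiteness invariant itself, in particular its propagation through $\sneg$\=/nodes and the matching symmetric handling of kept versus switched segments of $\alpha$.
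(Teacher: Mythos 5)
Your proposal is correct and follows essentially the same strategy as the paper's proof: assume a losing play $\alpha$, localise past the last $\sneg$ and past the last occurrence of a priority below the eventual minimum $\rmid$, invoke Lemma~\ref{lem:strat-opt-monotone} to obtain a strictly $\lexeq$-decreasing chain of truncated $P$-signatures, and contradict well-foundedness. The one place you are more explicit than the paper is the finiteness invariant along $\Sigma$, which the paper dispatches tersely inside the proof of Lemma~\ref{lem:strat-opt-monotone} with the remark that Items~\ref{it:sig-inf} and~\ref{it:sig-neg} cannot arise; your inductive formulation is a clean way to justify that remark, though strictly speaking the propagation across $\spri{\rmid}$-nodes also relies on Item~\ref{it:sig-inf} and game-prefix-invariance (not just Items~\ref{it:sig-j-win}--\ref{it:sig-j-los}), exactly as you already note for the $\sneg$ case.
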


\begin{proof}
First we assume contrarily that $\Sigma$ is not winning and $\infA$ is an infinite play of $\Sigma$ that is losing for $P$. Let $\rmid$ be the minimal priority that occurs infinitely many times on $\infA$ in $t$. As $\infA$ is not winning, we know that $\rmid$ is $P$\=/losing.

As $t$ is well\=/formed, from some point on there are no occurrences of $\sneg$ in $t$ on $\infA$. Similarly, from some point on there are no occurrences of priorities below $\rmid$ on $\infA$ in $t$.

Optimal strategies are defined based on the subtrees (without taking into account the history of a~play, except the \emph{switched} or \emph{kept} nodes). Similarly, losing plays are prefix\=/independent. This means that we can restrict our attention to the subtree of $t$ that is sufficiently far on $\infA$ (it might require swapping $P$ with $\bar{P}$ if $\infA$ is \emph{switched}). By this restriction we assume that there is no occurrence of $\sneg$ on $\infA$ in $t$ nor an occurrence of a~priority below $\rmid$ on $\infA$ in $t$.

Now we are ready to obtain a~contradiction, as we assumed that $\rmid$ occurs infinitely many times on $\infA$ in $t$ and no lower priority appears on $\infA$ in $t$. This means that the assumptions of Lemma~\ref{lem:strat-opt-monotone} hold for all $\finA\prec\finB\prec\infA$. Therefore, the sequence of prefixes of the $P$\=/signatures of subtrees of $t$ along $\infA$:
\[\big(\sigA'_P(t_n)\restr \rmid\big)_{n\in\w}\text{, where $t_n=t\restr (\infA\restr{n})$}\]
is a~non\=/increasing sequence that strictly ${\lexeq}$\=/decreases infinitely many times. This violates the fact that the ${\lexeq}$ order on $P$\=/signatures is a~well\=/order.
\end{proof}

\subsection{Optimality of \texorpdfstring{$\sigA_P$}{sigP}}
\label{ssec:sig-optimal}

In this section we show that the pair of function $\sigA_P$ constructed in Subsection~\ref{ssec:sig-construction} is point\=/wise minimal satisfying conditions from Lemma~\ref{lem:sig-exist}. This is not crucial for the rest of the article but shows that the definition is canonical.

\begin{lemma}
\label{lem:sig-opt-strat}
If a~pair of functions $\sigA'_P$ satisfies Items~\ref{it:sig-inf} to~\ref{it:sig-max} from Lemma~\ref{lem:sig-exist}, $t$ is well\=/formed, $P\in\{\pI,\pII\}$, $\sigA'_P(t)\lex\infty$, and $\Sigma$ is a~$\sigA'$\=/optimal strategy in $\Gg(t)$ then $\psig(\epsilon,\Sigma)\lexeq\sigA'_P(t)$.
\end{lemma}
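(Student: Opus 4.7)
The plan is to prove the following strengthening by well\=/founded induction on the relation~${\gg}$: for every $\finA\in\Sigma$ that is \emph{kept} in~$t$, one has $\psig(\finA,\Sigma)\lexeq \sigA'_P(t\restr\finA)$. Specialising to $\finA=\epsilon$, which is always kept, yields the lemma. The relation~${\gg}$ is well\=/founded by Lemma~\ref{lem:act-is-well-founded}, whose assumption that $\Sigma$ is winning for~$P$ is supplied by Lemma~\ref{lem:sig-strat-witness}. Moreover, for every kept $\finA\in\Sigma$ the restriction $\Sigma\restr\finA$ is a~winning strategy of~$P$ in $\Gg(t\restr\finA)$, so Item~\ref{it:sig-inf} of Lemma~\ref{lem:sig-exist} guarantees $\sigA'_P(t\restr\finA)\lex\infty$ and the trivial bound $(0,\ldots,0)\lexeq\sigA'_P(t\restr\finA)$ is always at hand.

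For the inductive step at $\finA\in\act(\Sigma)$ with $t(\finA)=\spri{\rmid}$ for a~$P$\=/losing~$\rmid$, note that the unique child $\finA'=\finA\cdot 0$ is kept and satisfies $\finA\gg\finA'$, so the inductive hypothesis gives $\psig(\finA',\Sigma)\lexeq\sigA'_P(t\restr\finA')$. Both Item~\ref{it:inductive-psig-los} of Definition~\ref{def:signatures} and Item~\ref{it:sig-j-los} of Lemma~\ref{lem:sig-exist} produce $\psig(\finA,\Sigma)$ and $\sigA'_P(t\restr\finA)$ from the respective child tuples by the same operation: truncate to index~$\rmid$, add~$1$ to component~$\rmid$, and pad with zeros. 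Since that operation is ${\lexeq}$\=/monotone, the inductive inequality is preserved.

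In the remaining case, $\finA\in\Sigma$ is kept with $\finA\notin\act(\Sigma)$, so $\psig(\finA,\Sigma)=\sup_{\finB\in\suk_\finA(\Sigma)}\psig(\finB,\Sigma)$. Every $\finB\in\suk_\finA(\Sigma)$ is active, hence kept, and satisfies $\finA\gg\finB$; writing $t(\finB)=\spri{\rmid_\finB}$, activity of~$\finB$ means that no $\sneg$ nor priority below~$\rmid_\finB$ appears on the segment from~$\finA$ to~$\finB$. Consequently, Lemma~\ref{lem:strat-opt-monotone} applied with $\rmid=\rmid_\finB$ yields $\sigA'_P(t\restr\finB)\restr\rmid_\finB\lexeq\sigA'_P(t\restr\finA)\restr\rmid_\finB$. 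Combined with the inductive hypothesis $\psig(\finB,\Sigma)\lexeq\sigA'_P(t\restr\finB)$, truncated to index~$\rmid_\finB$, this gives $\psig(\finB,\Sigma)\restr\rmid_\finB\lexeq\sigA'_P(t\restr\finA)\restr\rmid_\finB$. Since $\psig(\finB,\Sigma)$ vanishes on all components strictly above~$\rmid_\finB$ (again by Item~\ref{it:inductive-psig-los} of Definition~\ref{def:signatures}), a~short case split on whether the truncated inequality is strict or an equality promotes it to $\psig(\finB,\Sigma)\lexeq\sigA'_P(t\restr\finA)$. Taking the supremum over $\finB$ closes the step.

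The delicate point is precisely this zero\=/padding maneuver: Lemma~\ref{lem:strat-opt-monotone} only bounds $\sigA'$\=/values on the $\rmid_\finB$\=/prefix, not as full tuples, yet the $\psig$\=/value at an active node is itself determined by its $\rmid_\finB$\=/prefix padded by zeros, which lets the truncated inequality pass to the whole tuple. The rest is routine lexicographic bookkeeping together with the structural properties of active positions established in Subsection~\ref{ssec:sig-construction}.
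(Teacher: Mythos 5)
Your proof is correct and follows essentially the same route as the paper: well-founded induction over ${\gg}$ on the auxiliary statement $\psig(\finA,\Sigma)\lexeq\sigA'_P(t\restr\finA)$, splitting on whether $\finA$ is active (where the identical truncate\=/increment\=/pad operation on both sides preserves the inequality) or inactive (where Lemma~\ref{lem:strat-opt-monotone} controls the $\rmid$\=/prefix and zero\=/padding lifts it to the whole tuple). The two cosmetic deviations—restricting the inductive claim to \emph{kept} nodes, and exploiting the zero tail of $\psig(\finB,\Sigma)$ rather than of $\sigA'_P(t\restr\finB)$ in the promotion step—are both sound and do not change the substance of the argument.
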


\begin{proof}
By Lemma~\ref{lem:sig-strat-witness} we know that $\Sigma$ is winning. In particular, by Item~\ref{it:sig-inf} of Lemma~\ref{lem:sig-exist} we know that for all $\finA\in\Sigma$ the value $\sigA'_P(t\restr\finA)$ is not $\infty$.

As $\Sigma$ is winning we know by Lemma~\ref{lem:act-is-well-founded} that ${\gg}$ is well\=/founded. We will prove by well\=/founded induction over ${\gg}$ that for a~node $\finA\in\Sigma$ we have
\begin{equation}
\label{eq:inductive-p-to-s}
\psig(\finA,\Sigma) \lexeq \sigA'_P(t\restr\finA).
\end{equation}

The thesis of the lemma follows from this equation with $\finA=\epsilon$. Consider a~node $\finA\in\Sigma$. First consider the case that $\finA\in\act(\Sigma)$ and $t\restr\finA=\spri{\rmid}(t')$ with a~$P$\=/losing number $\rmid$ and $\finA'=\finA\cdot 0$ the unique child of $\finA$ in $t$. Then:
\begin{align*}
\psig(\finA,\Sigma)&=(\ordA_{\rmin'},\ldots,\ordA_{\rmid}{+}1,0,\ldots,0),\\
\psig(\finA',\Sigma)&=(\ordA_{\rmin'},\ldots,\ordA_{\rmid},\ordA_{\rmid'+2},\ldots,\ordA_{\rmax'}),\\
\sigA'_P(t)&=(\ordA'_{\rmin'},\ldots,\ordA'_{\rmid}{+}1,0,\ldots,0),\\
\sigA'_P(t')&=(\ordA'_{\rmin'},\ldots,\ordA'_{\rmid},\ordA'_{\rmid'+2},\ldots,\ordA'_{\rmax'}).
\end{align*}
Thus, the thesis follows from the inductive assumptions about $\finA'$.

Now consider the case that $\finA\notin\act(\Sigma)$. By Item~\ref{it:inductive-psig-rest} of Definition~\ref{def:signatures} it is enough to show that for $\finB\in \suk_\finA(\Sigma)$ we have $\psig(\finB,\Sigma)\lexeq\sigA'_P(t\restr\finA)$.

Consider a~node $\finB\in\suk_\finA(\Sigma)$ with a~$P$\=/losing number $\rmid$ such that $t(\finB)=\spri{\rmid}$. By Item~\ref{it:inductive-psig-los} of Definition~\ref{def:signatures} we know that $\psig(\finA,\Sigma)$ is of the form $(\ordA_{\rmin'},\ldots,\ordA_{\rmid},0,\ldots,0)$, similarly $\sigA'_P(t\restr\finB)$ is of the form $(\ordA'_{\rmin'},\ldots,\ordA'_{\rmid},0,\ldots,0)$. The inductive thesis for $\finB$ says that $\psig(\finB,\Sigma) \lexeq \sigA'_P(t\restr\finB)$.

The pair $\finA\prec \finB$ satisfies the assumptions of Lemma~\ref{lem:strat-opt-monotone}---no priority lower than $\rmid$ nor $\sneg$ occurs between $\finA$ and $\finB$ as otherwise $\finB$ would not belong to $\act(\Sigma)$. The claim implies that $\sigA'_P(t\restr\finB)\restr{\rmid}\lexeq\sigA'_P(t\restr\finA)\restr{\rmid}$ and as $\sigA'_P(t\restr\finB)$ is padded with zeros we also know that $\sigA'_P(t\restr\finB)\lexeq\sigA'_P(t\restr\finA)$. By combining these inequalities we obtain the deserved inequality $\psig(\finB,\Sigma)\lexeq\sigA'_P(t\restr\finA)$.
\end{proof}

\begin{fact}
\label{ft:sig-opt-exist}
If a~pair of functions $\sigA'_P$ satisfies Items~\ref{it:sig-inf} to~\ref{it:sig-max} from Lemma~\ref{lem:sig-exist}, $t$ is well\=/formed, $P\in\{\pI,\pII\}$, and $\sigA'_P(t)\lex\infty$ then there exists a~$\sigA'$\=/optimal strategy $\Sigma$ of $P$ in $\Gg(t)$.
\end{fact}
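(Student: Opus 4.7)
The plan is to construct $\Sigma$ by direct corecursion on the nodes of $t$: I put $\epsilon\in\Sigma$ and, for each $\finA$ already in $\Sigma$, extend $\Sigma$ by choosing children according to the rules below. If $t(\finA)$ is a~unary symbol (either $\spri{\rmid}$ or $\sneg$), the forced child $\finA\cdot 0$ is included. If $t(\finA)=\spla{Q}$ for some $Q\in\{\pI,\pII\}$ and the position $\finA$ is controlled by~$\bar{P}$ in $\Gg(t)$, then both children $\finA\cdot 0,\finA\cdot 1$ are included, as required by $\bar{P}$\=/fullness. If instead $t(\finA)=\spla{Q}$ and $\finA$ is controlled by~$P$, then exactly one child $\finA\cdot d$ is included, where $d\in\{0,1\}$ is chosen so that $\sigA'_R\big(t\restr(\finA\cdot d)\big)$ is ${\lexeq}$\=/minimal, with $R=P$ if $\finA$ is \emph{kept} and $R=\bar{P}$ if $\finA$ is \emph{switched}; ties, including the degenerate case where both children have signature $\infty$, may be resolved arbitrarily.

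Verifying that the resulting $\Sigma$ is a~strategy of $P$ in $\Gg(t)$ should be routine: $\Sigma$ is non\=/empty and prefix\=/closed by construction, has no leaf since every letter of $\Ang{\rmin}{\rmax}$ has positive arity, is $\bar{P}$\=/full by the second rule, and is $P$\=/deterministic by the third rule. The two clauses in the definition of $\sigA'$\=/optimality are precisely what the third rule imposes, while the first two rules concern positions to which that definition puts no constraint. Hence the constructed $\Sigma$ is $\sigA'$\=/optimal.

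I do not expect any real obstacle here; the only substantive observation is that $\lexeq$ extended by $\infty$ as maximum is a~total order, so a~minimum of two candidate signatures always exists. In particular, the hypothesis $\sigA'_P(t)\lex\infty$ is not used in the existence argument itself---its role is to feed Lemmas~\ref{lem:sig-opt-strat} and~\ref{lem:sig-strat-witness}, which together will conclude that such an optimal $\Sigma$ is actually winning for $P$.
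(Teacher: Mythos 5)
Your proposal is correct and follows essentially the same route as the paper: one builds $\Sigma$ corecursively from the root, taking all children at $\bar{P}$-positions and, at $P$-positions, the child with ${\lexeq}$-minimal $\sigA'_R$-value ($R$ flipped according to kept/switched). The only cosmetic difference is that the paper records along the way the invariant $\sigA'_{P'}(t\restr\finA)\lex\infty$ for $\finA\in\Sigma$ (propagated from the hypothesis via Items~\ref{it:sig-neg}--\ref{it:sig-max} of Lemma~\ref{lem:sig-exist}), whereas you correctly observe that the bare existence of an optimal $\Sigma$ does not actually require this hypothesis, which is needed only downstream in Lemmas~\ref{lem:sig-opt-strat} and~\ref{lem:sig-strat-witness}.
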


\begin{proof}
$\Sigma$ is constructed inductively, by following the optimal choices according to Item~\ref{it:sig-min} of Lemma~\ref{lem:sig-exist}. The construction preserves the invariant that if $\finA\in\Sigma$ and $P'=P$ for $\finA$ \emph{kept} and $P'=\bar{P}$ for $\finA$ \emph{switched} then $\sigA'_{P'}(\finA)\lex\infty$.
\end{proof}

\begin{corollary}
If a~pair of functions $\sigA'_P$ for $P\in\{\pI,\pII\}$ satisfies Items~\ref{it:sig-inf} to~\ref{it:sig-max} from Lemma~\ref{lem:sig-exist}, $t$ is well\=/formed, and $P\in\{\pI,\pII\}$ then
\[\sigA_P(t)\lexeq \sigA'_P(t),\]
where $\sigA_P$ is the function defined in Subsection~\ref{ssec:sig-construction}.
\end{corollary}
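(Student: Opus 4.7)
The plan is to combine the existence result (Fact~\ref{ft:sig-opt-exist}) with the bound in Lemma~\ref{lem:sig-opt-strat} and the infimum definition of $\sigA_P$ from Subsection~\ref{ssec:sig-construction}, via a simple case split on the value of $\sigA'_P(t)$.

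First I would dispose of the trivial case where $\sigA'_P(t)=\infty$: the inequality $\sigA_P(t)\lexeq\sigA'_P(t)$ then holds automatically, since $\infty$ is by convention the maximum of the $\lexeq$-order on $P$-signatures.

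The substantive case is $\sigA'_P(t)\lex\infty$. My strategy is to produce a concrete winning strategy $\Sigma$ of $P$ in $\Gg(t)$ whose $\psig$-value is bounded above by $\sigA'_P(t)$; the conclusion then follows from the infimum definition~\eqref{eq:def-signature}. Concretely, I would invoke Fact~\ref{ft:sig-opt-exist} to obtain a $\sigA'$-optimal strategy $\Sigma$ of $P$ in $\Gg(t)$, then apply Lemma~\ref{lem:sig-strat-witness} to conclude that $\Sigma$ is in fact winning (so that it legitimately participates in the infimum defining $\sigA_P(t)$), and finally apply Lemma~\ref{lem:sig-opt-strat} to obtain the bound $\psig(\epsilon,\Sigma)\lexeq\sigA'_P(t)$. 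Chaining this with the trivial $\sigA_P(t)\lexeq\psig(\epsilon,\Sigma)$, which is immediate from~\eqref{eq:def-signature} once $\Sigma$ is known to be winning, yields the required inequality $\sigA_P(t)\lexeq\sigA'_P(t)$.

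I do not expect any genuine obstacle in this argument: all the substantive content has been pre-packaged into the three results cited above, and the corollary is a one-line synthesis of them, with the only subtlety being to remember to invoke Lemma~\ref{lem:sig-strat-witness} explicitly so that $\Sigma$ qualifies as a winning strategy for the infimum.
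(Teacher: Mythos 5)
Your proof is correct and follows essentially the same route as the paper: case-split on whether $\sigA'_P(t)=\infty$ (equivalently, by Item~\ref{it:sig-inf}, whether $P$ loses $\Gg(t)$), then in the nontrivial case obtain a $\sigA'$\=/optimal strategy from Fact~\ref{ft:sig-opt-exist}, bound its $\psig$\=/value by $\sigA'_P(t)$ via Lemma~\ref{lem:sig-opt-strat}, and conclude from the infimum in Definition~\ref{def:signatures}. Your explicit invocation of Lemma~\ref{lem:sig-strat-witness} to justify that $\Sigma$ qualifies for the infimum is a slight elaboration over the paper's terseness (the paper relies on the fact that Lemma~\ref{lem:sig-opt-strat}'s proof already establishes this), but the substance is identical.
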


\begin{proof}
Consider a~tree $t\in\trees_A$ and a~player $P\in\{\pI,\pII\}$. If $P$ loses $\Gg(t)$ then both values are equal $\infty$. Assume that $P$ wins $\Gg(t)$, in that case $\sigA'_P(t)\lex\infty$. Let $\Sigma$ be a~$\sigA'$\=/optimal strategy from Fact~\ref{ft:sig-opt-exist}. Lemma~\ref{lem:sig-opt-strat} implies that $\psig(\epsilon,\Sigma)\lexeq\sigA'_P(t)$. Since $\sigA_P(t)\lexeq \psig(\epsilon,\Sigma)$ because of the infimum in Definition~\ref{def:signatures}, the thesis follows.
\end{proof}

\section{Construction of \texorpdfstring{$f$}{f}}
\label{sec:hardness}

Fix a~pair $\rmin <\rmax$. In this section we will prove the following proposition.

\begin{proposition}
\label{pro:W-reduction}
There exists a~continuous function $\fun{f}{\trees_{\Ang{\rmin}{\rmax}}}{\trees_{\Angp{\rmin}{\rmax}}}$ such that:
\begin{itemize}
\item If $t$ is well\=/formed then $f(t)$ is also well\=/formed.
\item For a~well\=/formed $t$ and a~player $P$ we have: $t\in\Wng{P}{\rmin}{\rmax}$ if and only if $f(t)\in\lan{P}{\rmin}{\rmax}$.
\end{itemize}
\end{proposition}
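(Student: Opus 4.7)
The plan is to define $f$ by structural recursion: unary symbols are preserved, and each binary node is enriched with a third child that hints at the direction of an optimal move. Concretely, we would set
\begin{align*}
f\big(\sneg(t')\big) &= \sneg\big(f(t')\big),\\
f\big(\spri{\rmid}(t')\big) &= \spri{\rmid}\big(f(t')\big),\\
f\big(\spla{Q}(t_\dL, t_\dR)\big) &= \splp{Q}\big(f(t_\dL),\, f(t_\dR),\, h(t_\dL, t_\dR, Q)\big),
\end{align*}
where $h(t_\dL, t_\dR, Q)$ is a hint tree to be specified. Inspecting the transitions of $\Uu$ in Figure~\ref{fig:full-strat} reveals that at every node whose ``declared winner'' in the state coincides with the actual owner of the current position in $\Gg(t)$, the third subtree must be accepted by $\Uu$ from $(0,\pI,\star)$ if the run moves left and from $(0,\pII,\star)$ if it moves right; at the remaining binary nodes the third child is unconstrained.

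For the equivalence we would proceed in two directions. Given $t \in \Wng{P}{\rmin}{\rmax}$, Lemma~\ref{lem:sig-opt-win} supplies a $\sigA$\=/optimal winning strategy $\Sigma$ of $P$. We then build a run $\rho$ of $\Uu$ on $f(t)$ starting from $(0,P,\star)$ whose second coordinate records the current ``declared winner'' (flipping on each $\sneg$) and whose third coordinate at every controlled position matches $\Sigma$'s move; the hints $h$ will be chosen so that their winners match these directions, and Lemma~\ref{lem:correct-strategy} will then certify that $\rho$ is accepting. Conversely, from any accepting run $\rho$ of $\Uu$ on $f(t)$ we extract the $\rho$\=/strategy $\Sigma$ defined in Section~\ref{sec:language}; since $\shave(f(t)) = t$, Lemma~\ref{lem:correct-strategy} guarantees that $\Sigma$ is a winning strategy of $P$ in $\Gg(t)$.

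The main obstacle is to define $h(t_\dL, t_\dR, Q)$ so that simultaneously (i) its game\=/theoretic winner matches the outcome of the comparison $\sigA_P(t_\dL) \lexeq \sigA_P(t_\dR)$ (or the analogous comparison for $\sigA_{\bar{P}}$, depending on $Q$ and orientation), (ii) its priorities stay inside $\{\rmin,\ldots,\rmax\}$, and (iii) the assignment depends continuously on $(t_\dL, t_\dR)$. A naive attempt---for instance picking a trivial tree $\spri{\rmid^*}^\w$ according to which signature is smaller---satisfies (i) and (ii) but fails (iii), since signature comparison is not a continuous operation on trees. The resolution will be to take $h(t_\dL, t_\dR, Q)$ to be the tree\=/encoding of a comparator parity game $\Cc_P$ whose plays alternately consume finite fragments of $t_\dL$ and $t_\dR$ and whose winner is exactly determined by the required signature comparison. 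Constructing $\Cc_P$ within the index $(\rmin, \rmax)$ is the technical core of the paper and will be carried out in subsequent sections. Continuity of $f$ then follows because each finite prefix of $h(t_\dL, t_\dR, Q)$ depends on only a finite prefix of $t_\dL$ and of $t_\dR$.

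Finally, well\=/formedness is preserved: on branches of $f(t)$ that avoid direction $2$ entirely, the labels are exactly those of $t$, so finiteness of $\sneg$\=/occurrences is inherited from $t$. The comparator game will be designed so that any branch of $f(t)$ that enters a third child and then remains inside that subtree forever contains only finitely many $\sneg$'s; hence any branch of $f(t)$ with infinitely many $\sneg$'s must traverse infinitely many direction\=/$2$ edges, which is exactly the well\=/formedness requirement for $f(t)$.
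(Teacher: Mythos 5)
Your proposal follows essentially the same route as the paper's: define $f$ by structural recursion, pad each binary node with a third subtree that encodes the signature comparison as the unravelling of a comparator parity game, and use the theory of optimal strategies to translate between winners of $\Gg(t)$ and accepting runs of $\Uu$ on $f(t)$. However, there is one genuine gap. Your $h(t_\dL,t_\dR,Q)$, described as ``the tree-encoding of a comparator parity game $\Cc_P$,'' is a tree over $\Ang{\rmin}{\rmax}$; but the third child of a $\splp{Q}$-labelled node must lie in $\trees_{\Angp{\rmin}{\rmax}}$, so your $f$ is not even well-typed as written. The paper sets the third child to $f\big(c_Q(t_\dL,t_\dR)\big)$, re-applying $f$ to the comparator's unravelling. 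This recursive re-application is not cosmetic: the automaton started in state $(0,P',\star)$ on the third subtree verifies that $P'$ wins $\Gg(\shave(\cdot))$ of that subtree, and since $\shave\circ f$ is the identity, only $f\circ c_Q$ makes the third-subtree check decide exactly ``$\pI$ wins $\Cc_P(t_\dL,t_\dR)$.'' The same recursion is also what reduces well-formedness and continuity of $f$ to the corresponding facts about $c_Q$.

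A second point that your sketch elides: in the forward direction it is not enough that the run's third coordinates match your chosen optimal strategy $\Sigma$ at the root. Lemma~\ref{lem:correct-strategy} requires that \emph{every} $\rho$-strategy---one for each node $\finA$ of $f(t)$, including nodes buried inside nested comparator subtrees---be winning in its own game $\Gg\big(\shave(f(t)\restr\finA)\big)$. The paper handles this in Lemma~\ref{lem:construct-run} by building the (unique) run whose second coordinate equals the true winner of the local shaved game at every node, then showing via the equivalence chain ``$\dL$ chosen $\Leftrightarrow$ $\pI$ wins the third subtree $\Leftrightarrow$ $\sigA_P(t_\dL)\lexeq\sigA_P(t_\dR)$'' that each such $\rho$-strategy is $\sigA$-optimal, hence winning by Lemma~\ref{lem:sig-opt-win}. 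Your plan names the right ingredients, but both refinements above are needed for the argument to close.
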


Before proving that proposition, notice that a~tree over the alphabet $\Alp{\rmin}{\rmax}$ can be seen as a~well\=/formed tree over $\Ang{\rmin}{\rmax}$ and $\Wng{\pI}{\rmin}{\rmax}\cap\trees_{\Alp{\rmin}{\rmax}}=\Win{\pI}{\rmin}{\rmax}$ which is topologically equivalent to $\W{\rmin}{\rmax}$. Therefore, the above proposition implies that $f\restr{\trees_{\Alp{\rmin}{\rmax}}}$ satisfies the assumptions of Corollary~\ref{cor:reduction}. As $\rmin<\rmax$ are arbitrary, Theorem~\ref{thm:main} follows.

The following lemma claims the combinatorial core of this article: it shows that one can compare the $P$\=/signatures using a~continuous reduction to the languages $\Wng{P}{\rmin}{\rmax}$.

\newcommand{\lemReduction}{
There exists a~continuous function $\fun{c_P}{\big(\trees_{\Ang{\rmin}{\rmax}}\big)^2}{\trees_{\Ang{\rmin}{\rmax}}}$ such that if $t_\dL$ and $t_\dR$ are well\=/formed then so is $c_P(t_\dL,t_\dR)$ and additionally
\[c_P(t_\dL,t_\dR)\in\Wng{\pI}{\rmin}{\rmax}\ \text{ if and only if }\ \sigA_P(t_\dL)\lexeq \sigA_P(t_\dR).\]
}

\begin{lemma}
\label{lem:reduction}
\lemReduction
\end{lemma}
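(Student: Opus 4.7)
The plan is to define $c_P(t_\dL, t_\dR)$ so that the induced game $\Gg(c_P(t_\dL, t_\dR))$ is a simulation race that runs $\Gg(t_\dL)$ and $\Gg(t_\dR)$ in parallel. By convention, in $t_\dL$ the first player $\pI$ acts as $P$ (trying to exhibit a small $P$\=/signature) and in $t_\dR$ she acts as $\bar{P}$ (challenging any claim of a small signature there); $\pII$ plays the dual role on each side. Priorities $\spri{\rmid}$ from the two components share the alphabet slots $\{\rmin,\ldots,\rmax\}$, but inserting $\sneg$ around the part originating from $t_\dR$ ensures that a firing inside $t_\dL$ contributes with its natural polarity whereas a firing inside $t_\dR$ contributes with its dualised polarity. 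In this way the single $\liminf$\=/parity condition of $\Gg(c_P(t_\dL, t_\dR))$ captures precisely the lexicographic comparison of the two $P$\=/signatures, and the priority range $\{\rmin,\ldots,\rmax\}$ is preserved.

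Concretely, $c_P$ is built by simultaneous recursion on pairs of subtrees $(s_\dL, s_\dR)$: each round schedules one root of $s_\dR$ followed by one root of $s_\dL$, copies their letters into $c_P(s_\dL, s_\dR)$ with their players renamed according to $P$ and the currently accumulated parity of $\sneg$\=/crossings, and inserts the bookkeeping $\sneg$ symbols as needed. Continuity is then immediate, since the first $n$ levels of $c_P(t_\dL, t_\dR)$ depend on only boundedly many levels of each argument. Well\=/formedness is preserved because every branch of $c_P(t_\dL, t_\dR)$ projects to branches of $t_\dL$ and $t_\dR$ and the construction inserts only boundedly many $\sneg$\=/symbols per projected letter, so an infinite occurrence of $\sneg$ in $c_P(t_\dL, t_\dR)$ would force an infinite occurrence of $\sneg$ in $t_\dL$ or~$t_\dR$.

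Both directions of the equivalence are proved using Lemma~\ref{lem:sig-exist}. In the $(\Leftarrow)$ direction, assuming $\sigA_P(t_\dL) \lexeq \sigA_P(t_\dR)$, one exhibits a strategy for $\pI$ that maintains the invariant $\sigA_P(s_\dL) \lexeq \sigA_P(s_\dR)$ on the current pair of subtrees: items~\ref{it:sig-min} and \ref{it:sig-max} of Lemma~\ref{lem:sig-exist} show that $\pI$ can preserve this invariant at every choice node (picking a minimising branch in $s_\dL$ where $P$ moves and a maximising branch in $s_\dR$ where $\bar{P}$ moves), while items~\ref{it:sig-j-win} and \ref{it:sig-j-los} control how the invariant evolves at priority symbols. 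Lemma~\ref{lem:sig-opt-win} together with the well\=/ordering of $P$\=/signatures then forces every infinite play of this strategy to be winning for $\pI$. In the $(\Rightarrow)$ direction, given a winning $\pI$\=/strategy $\Sigma$ in the combined game, one plays $\Sigma$ against the family of $\pII$\=/strategies that act $\sigA$\=/optimally for $P$ inside $t_\dR$; for each $P$\=/losing priority $\rpro$ the $t_\dL$\=/projections of the resulting plays form a partial $P$\=/strategy satisfying the hypotheses of Lemma~\ref{lem:sig-subopt}, which yields $\sigA_P(t_\dL)\restr\rpro \lexeq \sigA_P(t_\dR)\restr\rpro$ for every such $\rpro$, and the lex inequality follows.

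The main obstacle will be to set up the interleaving and the $\sneg$\=/bookkeeping so that $\Gg(c_P(t_\dL, t_\dR))$ stays within priorities $\{\rmin,\ldots,\rmax\}$: a naive product of the two parity games gives a conjunction of parity conditions whose standard encoding needs one additional priority. The key observation is that $\sneg$ converts any $P$\=/losing priority into a $\bar{P}$\=/losing one of the same numerical value, so by wrapping all priorities originating from $t_\dR$ inside $\sneg$\=/segments one fuses the two parity conditions into a single $\liminf$ of unchanged range. Verifying that this fusion exactly captures the lex comparison is where the \emph{equalities} (rather than mere inequalities) in items~\ref{it:sig-j-win}\=/\ref{it:sig-max} of Lemma~\ref{lem:sig-exist} are essential, which is precisely why the canonical, point\=/wise minimal signatures of Section~\ref{sec:signatures} had to be developed.
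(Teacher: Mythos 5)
Your high-level intuition is right in several places---that the target is a game that combines $\Gg(t_\dL)$ and $\Gg(t_\dR)$ with dualised polarity on the $t_\dR$ side, that Lemmas~\ref{lem:sig-exist}, \ref{lem:sig-opt-win}, and \ref{lem:sig-subopt} provide the machinery to prove both directions of the equivalence, and that the need for \emph{equalities} in Lemma~\ref{lem:sig-exist} is exactly what the comparison game requires. But the proposal misses the central device of the paper's construction, and the specific mechanism you propose would not work.

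The paper's game $\Cc_P$ has positions $(\tl,\tr,\rpro)$, where $\rpro$ is a $P$\=/losing number tracking which coordinate of the signature is \emph{currently being compared}. This $\rpro$ is essential and absent from your sketch. The simple invariant $\sigA_P(s_\dL)\lexeq\sigA_P(s_\dR)$ that you propose to maintain is not preserved by the game's moves: after both sides reveal a $\spri{\rpro}$ symbol (the \Db step), Item~\ref{it:sig-j-los} shows that the $\rpro$\=/coordinate drops by one on each side but the tails are reset arbitrarily, so only the truncated inequality $\sigA_P(\tl)\restr\rpro\lexeq\sigA_P(\tr)\restr\rpro$ survives. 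The game therefore needs the \EI/\AI rounds, in which the players bid a new, smaller $\rpro'$ when they believe the comparison is already decided at a coarser level, and the \EL/\AL rounds, in which a player can cash in an $\infty$ claim for an immediate win. None of this appears in your proposal, and without it the construction cannot encode the lexicographic order.

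The $\sneg$\=/mechanism you describe also fails. You propose to wrap the part originating from $t_\dR$ in $\sneg$\=/segments so that its priorities fire with dual polarity. But a play that stays inside the $t_\dR$ component forever would then cross infinitely many $\sneg$\=/symbols, violating well\=/formedness of $c_P(t_\dL,t_\dR)$---your claim that \emph{only boundedly many $\sneg$\=/symbols per projected letter} are inserted does not prevent an unbounded total. The paper avoids this entirely: $\sneg$ is used only in the simulation of \EL and \AL (so that a wrong claim leads to a loss), never to dualise the $t_\dR$ priorities. The dualisation is achieved arithmetically, by assigning priority $\rmid{-}2{+}P$ to $\downr$ rounds and $\rmid{+}1{-}P$ to $\downl$ rounds, with the constraint $\rmid>\rpro\geq\rmin'$ ensuring these stay inside $\{\rmin,\ldots,\rmax\}$. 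This is the actual resolution of the range obstacle you correctly identified; your fix does not resolve it.

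Finally, the game is not a parallel product: each round descends in only one of the two trees (\Dl or \Dr), except for the \Db step where both have the currently compared priority $\rpro$ at the root and that round is neutral (priority $\rmax$). A genuinely parallel product would need an extra priority exactly as you worried, so the asymmetric interleaving, not the $\sneg$\=/wrapping, is what keeps the range tight. In short: the building blocks you cite are the right ones, but the construction itself---the triple $(\tl,\tr,\rpro)$, the seven\=/step round structure, and the priority assignment in $\downl$/$\downr$---is where the lemma lives, and it is missing.
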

%
%
The rest of this section demonstrates Proposition~\ref{pro:W-reduction}. Lemma~\ref{lem:reduction} is proved in the next section.

Consider a~function $\fun{f}{\trees_{\Ang{\rmin}{\rmax}}}{\trees_{\Angp{\rmin}{\rmax}}}$ defined recursively as:
\begin{align*}
f\big(\spla{P}(t_\dL,t_\dR)\big)&=\splp{P}\Big(f(t_\dL),f(t_\dR),f\big(c_P(t_\dL,t_\dR)\big)\Big)&\text{for $P\in\{\pI,\pII\}$,}\\
f\big(\sneg(t)\big)&=\sneg\big(f(t)\big),\\
f\big(\spri{\rmid}(t)\big)&=\spri{\rmid}\big(f(t)\big)&~\text{for $\rmid\in\{\rmin,\ldots,\rmax\}$.}
\end{align*}
Clearly by the definition of $f$ we know that $\shave\big(f(t)\big)=t$. Additionally, $f(t)$ is defined recursively using $c_P$ which is continuous, therefore $f$ is also continuous. As $c_P$ maps well\=/formed trees to well\=/formed trees, so does $f$.

First assume that $f(t)\in\lan{P}{\rmin}{\rmax}$ as witnessed by an accepting run $\rho$ of $\Uu$ over $f(t)$ with $\rho(\epsilon)=(\star,P,\star)$. Fact~\ref{ft:full-to-win} says that $P$ wins $\Gg\big(\shave(f(t))\big)=\Gg(t)$, so $t\in\Wng{P}{\rmin}{\rmax}$.

For the converse assume that $P_0$ wins $\Gg(t)$ for a~well\=/formed tree $t$ over $\Ang{\rmin}{\rmax}$.

\newcommand{\lemConstructRun}[1]{
If $t$ is well\=/formed then there exists a~unique run $\rho$ of $\Uu$ over $f(t)$ such that for every $\finA\in\dom\big(f(t)\big)$ we have
\begin{equation}
#1{eq:run-invariant}
\rho(\finA)=(\star,P,\star)\quad\text{ if and only if }\quad\text{$P$ wins $\Gg\big(\shave(f(t)\restr \finA)\big)$}.
\end{equation}
Moreover, all the $\rho$\=/strategies are winning for the respective players.
}

\begin{lemma}
\label{lem:construct-run}
\lemConstructRun{\label}
\end{lemma}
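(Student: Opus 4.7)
The plan is to define $\rho$ by prescribing its second coordinate at each node $\finA\in\dom(f(t))$ to be the unique winner $P_\finA$ of $\Gg\bigl(\shave(f(t)\restr\finA)\bigr)$; the first and third coordinates will then be forced by the transitions of $\Uu$ (Figure~\ref{fig:full-strat}), and the invariant \eqref{eq:run-invariant} will hold by construction. Existence of such a $\rho$ reduces to verifying, case by case on the label $f(t)(\finA)$, that the transition from $(\star,P_\finA,\star)$ can be routed to target states whose second coordinates agree with $P_{\finA\cdot i}$ for each child $i$.

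For the unary labels this is routine: the $\sneg$ transition swaps the second coordinate, matching the fact that $P$ wins $\Gg\bigl(\sneg(t')\bigr)$ iff $\bar{P}$ wins $\Gg(t')$, while $\spri{\rmid}$ preserves the winner (the priority only affects infinite plays). At a branching node $\splp{\bar{P}}$ (with $P=P_\finA$), Item~\ref{it:sig-max} of Lemma~\ref{lem:sig-exist} gives $\sigA_P\bigl(\spla{\bar{P}}(t_\dL,t_\dR)\bigr)=\max\bigl\{\sigA_P(t_\dL),\sigA_P(t_\dR)\bigr\}$; since this value is finite, so are both arguments, meaning $P$ wins both subtrees, matching the target states $(\rmax{+}2,P,\star)$, while the third child is entered in the initial state $(0,\star,\star)$ and poses no constraint. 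The crucial case is $\splp{P}$ with $P=P_\finA$: Item~\ref{it:sig-min} makes $\sigA_P$ behave as a minimum, and the transitions force a directional choice—direction $\dL$ sends the third child to $(0,\pI,\star)$, direction $\dR$ to $(0,\pII,\star)$. By Lemma~\ref{lem:reduction}, $\pI$ wins $\Gg\bigl(c_P(t_\dL,t_\dR)\bigr)$ exactly when $\sigA_P(t_\dL)\lexeq\sigA_P(t_\dR)$, so I would pick $\dL$ in that case and $\dR$ otherwise; in either case the chosen side has finite $P$\=/signature, so $P$ wins it (matching $(\rmax{+}2,P,\star)$), and the unchosen side's state $(2,\star,\star)$ accepts any winner.

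For the "moreover" clause, observe that the direction chosen at each node $\splp{P}$ in the construction above is precisely the one minimising $\sigA_P$ among the two children (accounting for swaps at \emph{switched} positions), so the extracted $\rho$\=/strategy in any $\finA$ is a $\sigA$\=/optimal strategy of $P_\finA$ in $\Gg\bigl(\shave(f(t)\restr\finA)\bigr)$, and Lemma~\ref{lem:sig-opt-win} certifies it is winning. Uniqueness is immediate from Lemma~\ref{lem:second-to-other}: \eqref{eq:run-invariant} pins down the second coordinate everywhere, and two runs agreeing on the second coordinate must coincide. The main bookkeeping hurdle I anticipate is tracking the interplay between the \emph{switched}/\emph{kept} status of a node and the second coordinate of the run—specifically, showing by induction on $|\finB|$ that $\finB$ is kept in $\shave(f(t)\restr\finA)$ iff $P_{\finA\cdot\finB}=P_\finA$—so that the $\rho$\=/strategy extraction described in Section~\ref{sec:language} genuinely yields a strategy of $P_\finA$ in the correct subgame.
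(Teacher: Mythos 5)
Your proposal is correct and takes essentially the same approach as the paper: define $\rho$ via the invariant~\eqref{eq:run-invariant}, note that the only genuine freedom lies in choosing $\dL$ or $\dR$ at $\splp{P}$ nodes, resolve that freedom through the winner of the third child and Lemma~\ref{lem:reduction} (which forces the choice), invoke Item~\ref{it:sig-min} of Lemma~\ref{lem:sig-exist} to check the chosen subtree is indeed won by $P$, and then observe that the extracted $\rho$\=/strategies are $\sigA$\=/optimal so Lemma~\ref{lem:sig-opt-win} applies, with uniqueness from Lemma~\ref{lem:second-to-other}. The only presentational difference is that you flag the kept/switched bookkeeping as a hurdle to verify by induction, whereas the paper disposes of it earlier (in the definition of $\rho$\=/strategies in Section~\ref{sec:language}) with a one-line remark that the transitions of $\Uu$ preserve the invariant.
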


\begin{proof}
The construction of $\rho$ is inductive from the root preserving~\eqref{eq:run-invariant}. The only ambiguity when choosing transitions of $\Uu$ is when we reach a~node $\finB\in\dom\big(f(t)\big)$ such that $\rho(\finB)$ is of the form $(\star,P,\star)$ and $f(t)\restr \finB=\splp{P}\big(f(t_\dL),f(t_\dR),f(c_P(t_\dL,t_\dR))\big)$. We choose either the left or the right of the two lower transitions of $\Uu$ depending on the winner in $\Gg\big(\shave(f(t)\restr\finB\cdot 2)\big)$ in such a~way to satisfy~\eqref{eq:run-invariant} for $\finA=\finB\cdot 2$. By the symmetry assume that we used the left transition. This leaves undeclared the second coordinate of $\rho(\finB\cdot 1)$ (resp. $\rho(\finB\cdot 0)$ in the case of the right transition). Again we declare this coordinate accordingly to~\eqref{eq:run-invariant}. We need to check that~\eqref{eq:run-invariant} is also satisfied for $\finA=\finB\cdot 0$ (resp. $\finA=\finB\cdot 1$) i.e.~that $P$ wins $\Gg\big(\shave(f(t)\restr\finB\cdot 0)\big)$. To see that, we notice that the following conditions are equivalent $(\bigstar)$:
\newcommand{\lann}[1]{\hfill$[$ #1 $]$}
\begin{itemize}
\item $\rho(\finB)=(\star,\star,\dL)$ \lann{by the form of the transitions of $\Uu$}
\item $\rho(\finB\cdot 2) = (\star, \pI,\star)$ \lann{by the definition of $\rho$}
\item $\pI$ wins in $\Gg\big(\shave(f(t)\restr (\finB\cdot 2))\big)$ \lann{by the form of $f(t)\restr(\finB\cdot 2)$}
\item $\pI$ wins in $\Gg\big(\shave(f(c_P(t_\dL,t_\dR))\big)$ \lann{by the equality $\shave\big(f(t')\big)=t'$}
\item $\pI$ wins in $\Gg\big(c_P(t_\dL,t_\dR)\big)$ \lann{by Lemma~\ref{lem:reduction}}
\item $\sigA_P(t_\dL)\lexeq \sigA_P(t_\dR)$.
\end{itemize}

Thus, if we choose the lower left transition of $\Uu$, we know that $\sigA_P(t_\dL)\lexeq \sigA_P(t_\dR)$. By the inductive invariant we know that $\infty\lexge\sigA_P\big(\shave(f(t)\restr\finB)\big)=\sigA_P\big(\spla{P}(t_\dL,t_\dR)\big)$. Therefore, Item~\ref{it:sig-min} of Lemma~\ref{lem:sig-exist} implies that $\sigA_P(t_\dL)\lex\infty$ so in fact $P$ wins $\Gg(t_\dL)=\Gg\big(\shave(f(t_\dL))\big)=\Gg\big(\shave(f(t)\restr\finB\cdot 0)\big)$. Thus, the invariant~\eqref{eq:run-invariant} is also preserved for $\finA=\finB\cdot 0$. This concludes the inductive definition of $\rho$. Lemma~\ref{lem:second-to-other} implies uniqueness.

Take $\finA\in\dom\big(f(t)\big)$ with $\rho(\finA)=(\star,P,\star)$. Let $\Sigma$ be the $\rho$\=/strategy in $\finA$ and $r'=f(t)\restr\finA$. Consider a~node $\finB\in\dom\big(\shave(r')\big)$ such that $\shave(r')(\finB)=\spla{P'}$ with $P'=P$ if $\finB$ is \emph{kept} and $P'=\bar{P}$ if $\finB$ is \emph{switched}. In both cases $f(t)(\finA\cdot\finB)=\splp{P'}$. By the above equivalence $(\bigstar)$ and the definition of a~$\rho$\=/strategy, $\Sigma$ makes a~$\sigA$\=/optimal move in $\finB$. Therefore, $\Sigma$ is optimal. Invariant~\eqref{eq:run-invariant} says that $P$ wins $\Gg\big(\shave(r')\big)$, so Lemma~\ref{lem:sig-opt-win} implies that $\Sigma$ is winning.
\end{proof}

Fix the run $\rho$ given by the above lemma. Since $\Gg(t)=\Gg\big(\shave(f(t))\big)$, $\rho(\epsilon)=(\star,P_0,\star)$. Since all the $\rho$\=/strategies are winning, $\rho$ is accepting by Lemma~\ref{lem:correct-strategy} and $f(t)\in\lan{P}{\rmin}{\rmax}$. This concludes the proof of Theorem~\ref{thm:main} assuming that Lemma~\ref{lem:reduction} holds.

\section{Comparing signatures}
\label{sec:comparing}

We will now prove Lemma~\ref{lem:reduction} by defining, given two trees $\tl$ and $\tr$ over $\Ang{\rmin}{\rmax}$, a~game $\Cc_P(\tl,\tr)$. To indicate the difference between the games $\Cc_P$ and $\Gg$, we denote the players of $\Cc_P$ as $\eve= \pI$ and $\adam=\pII$. The purpose of the game $\Cc_P$ will be to ensure that $\eve$ wins $\Cc_P(\tl,\tr)$ if and only if $\sigA_P(\tl)\lexeq\sigA_P(\tr)$. The winning condition of the game $\Cc_P$ will be a~parity condition, however, the game will allow certain \emph{lookahead} (see steps \EL and \AL in the definition of $\Cc_P$). Then, the function $c_P$ will just unravel the game $\Cc_P(\tl,\tr)$ into a~tree over the alphabet $\Ang{\rmin}{\rmax}$.

The construction is an analogue of the rank method, as discussed in Section~\ref{ap:ranks}.

If $\rmin'$ and $\rmax'$ are the minimal and maximal $P$\=/losing numbers; $\rmid$ is a~$P$\=/losing number; and $\sigA=(\ordA_{\rmin'},\ordA_{\rmin'+2},\ldots,\ordA_{\rmax'})$ is a~$P$\=/signature then $\sigA\restr{\rmid}$ is the tuple $(\ordA_{\rmin'},\ordA_{\rmin'+2},\ldots,\ordA_{\rmid})$. For completeness let $\infty\restr{\rmid}\eqdef\infty$. Clearly $\sigA\restr{\rmax'}=\sigA$. Notice that $\sigA\restr{\rmid}$ is also a~$P$\=/signature (with $\rmax=\rmid$) and moreover if $\sigA\lexeq\sigA'$ then $\sigA\restr{\rmid}\lexeq\sigA'\restr{\rmid}$.

A~position of the game $\Cc_P$ is a~triple $\big(\tl,\tr,\rpro\big)$ where $\tl,\tr\in\trees_{\Ang{\rmin}{\rmax}}$ and $\rpro$ is a~$P$\=/losing number. As $\Cc_P(\tl,\tr)$ we denote the game $\Cc_P$ with the initial position set to $\big(\tl,\tr,\rmax'\big)$. The game is designed in such a~way to guarantee the following claim.

\begin{claim}
\label{cl:winner-in-C}
A~position $\big(\tl,\tr,\rpro\big)$ is winning for \eve in $\Cc_P$ if and only if
\begin{equation}
\label{eq:winner-in-C}
\sigA_P(\tl)\restr{\rpro}\ \lexeq\ \sigA_P(\tr)\restr{\rpro}.
\end{equation}
\end{claim}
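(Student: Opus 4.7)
The plan is to prove Claim~\ref{cl:winner-in-C} by establishing each implication through an explicit strategy construction, using the fact that $\Cc_P$ is a determined game (as it is a parity game enriched with lookahead, which preserves determinacy). Throughout, the invariant I maintain along a play of $\Cc_P$ is precisely the inequality under consideration: at any position $\big(\tl',\tr',\rpro'\big)$ visited during the play, the relation $\sigA_P(\tl')\restr{\rpro'}\lexeq\sigA_P(\tr')\restr{\rpro'}$ either holds (when \eve is playing to win) or fails (when \adam is playing to win).

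For the forward direction, assume $\sigA_P(\tl)\restr{\rpro}\lexeq\sigA_P(\tr)\restr{\rpro}$. I would build a winning strategy $\Sigma_\eve$ by case analysis on the current moves dictated by the shapes of $\tl'$ and $\tr'$, appealing each time to the corresponding item of Lemma~\ref{lem:sig-exist}. When $\tr'$ begins with $\spla{P}$ or $\tl'$ begins with $\spla{\bar P}$, item~\ref{it:sig-min} or~\ref{it:sig-max} tells \eve which subtree to pick so that the right-hand signature cannot decrease or the left-hand signature cannot increase relative to the invariant; when priorities are processed (items~\ref{it:sig-j-win}, \ref{it:sig-j-los}), the restriction $\restr{\rpro'}$ is either preserved or the upper bound $\rpro'$ itself has to be lowered, in which case \eve uses the lookahead step \EL to commit to a truncation that maintains the invariant at the new level. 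In the swapping case ($\sneg$, item~\ref{it:sig-neg}), both signatures reset to $(0,\ldots,0)$, which trivially preserves the inequality. Well-formedness of the trees combined with the parity condition of $\Cc_P$ guarantees that any infinite play winning for \adam would require infinitely many strict lexicographic decreases of an ordinal tuple, contradicting the well-ordering of $\lexeq$.

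For the backward direction, I would argue contrapositively: supposing that \adam has a winning strategy $\Sigma_\adam$ in $\Cc_P\big(\tl,\tr,\rpro\big)$, I derive that $\sigA_P(\tl)\restr{\rpro}\lexge\sigA_P(\tr)\restr{\rpro}$ strictly. The key tool is Lemma~\ref{lem:sig-subopt}. From $\Sigma_\adam$, together with any \eve-behaviour responding via $\sigA$-optimal moves in $\tl$ as dictated by Lemma~\ref{lem:sig-exist}, I extract a partial strategy $\Sigma_P$ of $P$ in $\Gg(\tr)$ whose infinite plays are winning for $P$ (by the invariant-violation structure of the parity condition on $\Cc_P$) and which avoids low priorities and $\sneg$ until the lookahead step \AL forces $\adam$ to commit. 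The unreachable subtrees of $\Sigma_P$ will inherit signatures dominated by $\sigA_P(\tl)\restr{\rpro}$ through the construction, so Lemma~\ref{lem:sig-subopt} yields the desired upper bound on $\sigA_P(\tr)\restr{\rpro}$.

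The principal obstacle is the transfinite character of signatures: a parity game only carries finitely much bounded information per priority, whereas $\sigA_P$ is a tuple of arbitrary countable ordinals. The lookahead steps \EL and \AL are precisely the mechanism that bridges this gap, allowing each player to defer commitment to a specific priority level until after the opponent has revealed enough structure for the comparison to be reduced to a strictly shorter prefix $\restr{\rpro'}$ with $\rpro'<\rpro$. Managing this interleaving cleanly---so that each move in $\Cc_P$ is matched, in the bookkeeping, to exactly one clause of Lemma~\ref{lem:sig-exist}, and so that the resulting partial strategy in the backward direction genuinely satisfies the hypotheses of Lemma~\ref{lem:sig-subopt}---is where the bulk of the careful work will lie, and in particular is where the precise definition of $\Cc_P$ (including the roles of \EL, \AL, \EI, \AI, \Db, \Dl, \Dr) will need to be invoked item by item.
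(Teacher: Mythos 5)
Your proposal has the right ingredients---maintaining the signature inequality as an invariant, using determinacy, appealing to Lemma~\ref{lem:sig-subopt} for the global contradiction---but it has a genuine logical gap: you only ever prove one implication of the biconditional, twice.

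The claim is an equivalence, and by positional determinacy of $\Cc_P$ it suffices to exclude two scenarios: (A)~\eqref{eq:winner-in-C} holds yet \adam has a winning strategy, and (B)~\eqref{eq:winner-in-C} fails yet \eve has a winning strategy. Your ``forward direction'' (build $\Sigma_\eve$ assuming the invariant) proves ``invariant $\Rightarrow$ \eve wins,'' which is precisely the exclusion of~(A). Your ``backward direction'' assumes \adam has a winning strategy $\Sigma_\adam$ and derives strict failure of the invariant; that is ``\adam wins $\Rightarrow$ invariant fails,'' which is the contrapositive of the same implication, hence again the exclusion of~(A). Nowhere do you address~(B): the implication ``\eve wins $\Rightarrow$ invariant holds,'' or equivalently ``invariant fails $\Rightarrow$ \adam wins.'' The paper handles~(B) by the dual argument: confronting a presumed $\Sigma_\eve$ with the quasi\=/strategy $\quas{\adam}$, reducing to a play that eventually takes only \Dl steps, extracting a partial strategy of $P$ in $\Gg(\tl_0)$, and invoking Lemma~\ref{lem:sig-subopt} there. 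That whole dual half is missing from your plan.

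There is a further gap inside what you call the forward direction. You claim that ``any infinite play winning for \adam would require infinitely many strict lexicographic decreases of an ordinal tuple.'' That well\=/order argument only applies to plays that eventually take only \Dl steps, where $\tl$ is being unfolded and Lemma~\ref{lem:strat-opt-monotone} forces strict decreases at the pivotal priority. If a play eventually takes only \Dr steps, the left tree $\tl_0$ is frozen, the right tree is being unfolded, and the invariant alone gives no well\=/ordered quantity that must decrease: it is entirely possible for $\sigA_P(\tr')\restr{\rpro_0}$ to stay large or even grow. That case \emph{cannot} be resolved by a local parity argument; the paper dispatches it via the global extraction of a partial strategy $\Sigma_P$ in $\Gg(\tr_0)$ and Lemma~\ref{lem:sig-subopt}, which you correctly invoke only in your ``backward'' paragraph. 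So your direct construction of $\Sigma_\eve$ is incomplete as stated, although the content is salvaged by your contrapositive paragraph. Finally, a smaller confusion: you write that \EL and \AL ``allow each player to defer commitment to a specific priority level'' and are the mechanism for truncating to $\restr{\rpro'}$. The priority\=/lowering mechanism is \EI and \AI; steps \EL and \AL are the challenges ``claim that $P$ loses $\Gg(\tr)$ (resp.\ $\Gg(\tl)$)'' which end the game immediately. This conflation would need to be untangled in a full write\=/up, since the local preservation argument (Lemma~\ref{lem:honest-safety}) treats those steps quite differently.
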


A~single round of the game $\Cc_P$ consists of a~sequence of choices done by the players. It is easy to encode such a~sequence using additional intermediate positions of the game. For the sake of readability, we do not specify these positions explicitly. Instead, a~round moves the game from a~position $\big(\tl,\tr,\rpro\big)$ into a~new position according to the following sequential steps:
\begin{description}
\item[\EL] \eve can claim that $P$ loses $\Gg(\tr)$. If she does so, the game ends and \eve wins iff $\tr\notin\Wng{P}{\rmin}{\rmax}$.
\item[\AL] \adam can claim that $P$ loses $\Gg(\tl)$. If he does so, the game ends and \adam wins iff $\tl\notin\Wng{P}{\rmin}{\rmax}$.
\item[\EI] \eve can modify $\rpro$ into another $P$\=/losing number $\rpro'< \rpro$. In that case the round ends and the next position is $\big(\tl',\tr,\rpro'\big)$ where $\tl'=\spri{\rpro'}(\tl)$.
\item[\AI] \adam can modify $\rpro$ into another $P$\=/losing number $\rpro'<\rpro$. In that case the round ends and the next position is $\big(\tl,\tr,\rpro'\big)$.
\item[\Db] If $\tl=\spri{\rpro}(\tl')$ and $\tr=\spri{\rpro}(\tr')$ then the round ends and the next position is $(\tl',\tr',\rpro)$.
\item[\Dl] If $\tl$ is not of the form $\spri{\rpro}(\tl'')$ then a~step called $\downl$ is done, resulting in an~immediate win of \eve or a~new tree $\tl'$. The round ends and the next position is $(\tl',\tr,\rpro)$.
\item[\Dr] Otherwise $\tl$ is of the form $\spri{\rpro}(\tl')$ and a~step called $\downr$ is done, resulting in an~immediate win of \adam or a~new tree~$\tr'$. The round ends and the next position is $(\tl,\tr',\rpro)$.
\end{description}
The result $\tl'$ of $\downl$ depends on the form of $\tl$ as follows:
\begin{itemize}
\item If $\tl=\spla{P}(\tl_\dL,\tl_\dR)$ then \eve chooses to set $\tl'=\tl_\dL$ or $\tl'=\tl_\dR$.
\item If $\tl=\spla{\bar{P}}(\tl_\dL,\tl_\dR)$ then \adam chooses to set $\tl'=\tl_\dL$ or $\tl'=\tl_\dR$.
\item If $\tl=\spri{\rmid}(\tl')$ and $\rmid > \rpro$ then $\tl'$ is defined and that round of $\Cc_P$ has priority $\rmid{+}1{-}P$.
\item Otherwise \eve immediately wins.
\end{itemize}
Dually, the result $\tr'$ of $\downr$ depends on the form of $\tr$ as follows:
\begin{itemize}
\item If $\tr=\spla{P}(\tr_\dL,\tr_\dR)$ then \adam chooses to set $\tr'=\tr_\dL$ or $\tr'=\tr_\dR$.
\item If $\tr=\spla{\bar{P}}(\tr_\dL,\tr_\dR)$ then \eve chooses to set $\tr'=\tr_\dL$ or $\tr'=\tr_\dR$.
\item If $\tr=\spri{\rmid}(\tr')$ with $\rmid>\rpro$ then $\tr'$ is defined and that round of $\Cc_P$ has priority $\rmid{-}2{+}P$.
\item Otherwise \adam immediately wins.
\end{itemize}
The rounds of $\Cc_P$ which priority is not declared above have priority $\rmax$. An infinite play $\plyA$ of~$\Cc_P$ is won by \eve if the least priority seen infinitely often during $\plyA$ is even.

\newcommand{\lemUnravelling}{
The game $\Cc_P(\tl,\tr)$ can be unravelled as a~tree $c_P(\tl,\tr)$ over the alphabet $\Ang{\rmin}{\rmax}$ in such a~way that for well\=/formed trees $\tl$, $\tr$, the tree $c_P(\tl,\tr)$ is well\=/formed and \eve wins $\Cc_P(\tl,\tr)$ if and only if $\pI$ wins $\Gg\big(c_P(\tl,\tr)\big)$. Moreover, the function $c_P$ is continuous.
}

\begin{lemma}
\label{lem:unravelling}
\lemUnravelling
\end{lemma}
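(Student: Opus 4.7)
The plan is to define $c_P(\tl,\tr)$ by unravelling the graph of the game $\Cc_P(\tl,\tr)$, with \eve identified with $\pI$ and \adam with $\pII$. For each position $(\tl',\tr',\rpro')$ of $\Cc_P$ reachable from the initial one, insert a finite gadget over $\Ang{\rmin}{\rmax}$ that realises a single round of the game: $\spla{\pI}$-nodes for the sequential \eve-choices (\EL, \EI, and the binary choice arising inside $\downl$ or $\downr$ when $\tl'$ or $\tr'$ starts with a player symbol), $\spla{\pII}$-nodes for the \adam-choices (\AL, \AI, and the dual binary choice), and a priority letter $\spri{\rmid}$ whose effective $\Gg$-priority (that is, $\rmid$ if kept or $\rmid{+}1$ if switched) equals the priority assigned to the round by $\Cc_P$. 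The recursive tail of the gadget either enters the gadget for the successor game position or a terminal plug-in.

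The terminal plug-ins realise the immediate-outcome transitions of $\Cc_P$. For \EL (\eve claims that $P$ loses $\Gg(\tr')$) plug in $\tr'$ when $P=\pII$ and $\sneg(\tr')$ when $P=\pI$, so that $\pI$ wins the resulting $\Gg$-subgame exactly when $\bar P$ wins $\Gg(\tr')$; the \AL plug-in is dual, using $\tl'$ when $P=\pI$ and $\sneg(\tl')$ when $P=\pII$. The immediate-win subcases inside $\downl$ and $\downr$ (those triggered when $\tl'$ or $\tr'$ starts with $\sneg$ or with $\spri{\rmid'}$ for $\rmid' < \rpro'$) are handled by plugging in trivial fixed subtrees such as $\spri{0}(\spri{0}(\cdots))$ or $\spri{1}(\spri{1}(\cdots))$ that are trivially won by the required player. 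Correctness of the reduction then follows from a bijection between strategies of \eve (resp.\ \adam) in $\Cc_P(\tl,\tr)$ and strategies of $\pI$ (resp.\ $\pII$) in $\Gg(c_P(\tl,\tr))$, together with the round-by-round priority match.

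Continuity of $c_P$ is immediate from the local-to-local recursive definition: each gadget depends only on the top letters of $\tl'$ and $\tr'$ and on $\rpro'$, which ranges over a finite set, so any finite prefix of $c_P(\tl,\tr)$ is determined by a finite prefix of $\tl$ and $\tr$. For well-formedness, the only $\sneg$-symbols introduced by the terminal plug-ins occur at most once on any branch before it enters a well-formed input tree $\tl'$ or $\tr'$. The main obstacle is the kept/switched bookkeeping of the priority letters in the game-part gadgets: the $\Cc_P$-priorities range over $\{\rmin,\ldots,\rmax{+}1\}$ (the top value $\rmax{+}1$ arising only for $P=\pI$ in \Dl rounds), so realising all of them as effective $\Gg$-priorities requires placing some priority letters at switched nodes, and the $\sneg$-toggles this seemingly requires must not accumulate along any infinite branch. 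I expect to resolve this by a case analysis on $P$ and on the round type, exploiting the fact that the parameter $\rpro'$ can only decrease along a play of $\Cc_P$ and hence stabilises after finitely many rounds, which is what ultimately bounds the number of $\sneg$-toggles needed per branch.
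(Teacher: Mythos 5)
Your overall architecture matches the paper's: unravel $\Cc_P$ position by position, with $\spla{\pI}/\spla{\pII}$-nodes simulating the sequential choices in a round, $\spri{\rmid}$-nodes carrying the round's priority, the $\sneg$-symbol used once in the \EL (resp.\ \AL) terminal branch when $P=\pI$ (resp.\ $P=\pII$), and fixed winning/losing plug\-/ins for the immediate-outcome subcases; continuity is the same local-to-local observation. Your \EL/\AL plug-ins in particular are exactly the paper's choice $w_{\sim P}(\tr)$ and $w_{P}(\tl)$.

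However, the ``main obstacle'' you flag does not exist, and your sketch for circumventing it would not work. The round priorities of $\Cc_P$ are $\rmid{+}1{-}P$ (for $\downl$) and $\rmid{-}2{+}P$ (for $\downr$) with $\rmin\leq\rpro<\rmid\leq\rmax$, and with the paper's encoding of players these always evaluate to $\rmid$ or $\rmid{-}1$; hence every $\Cc_P$-priority already lies in $\{\rmin,\ldots,\rmax\}$. This is precisely the design point that lets the unravelling place \emph{every} priority letter at a \emph{kept} node: the recursive spine of each gadget contains no $\sneg$ at all, the only $\sneg$-symbols occur in the two terminal \EL/\AL branches (where they are immediately followed by the input tree $\tl$ or $\tr$), and well-formedness is immediate. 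Your claim that $\rmax{+}1$ can arise (``for $P=\pI$ in \Dl rounds'') is a miscalculation, and more importantly your proposed repair is not sound: if some branch had to realise an out-of-range priority at a switched node, this would happen at $\downl$ rounds, which can recur infinitely often even after $\rpro'$ has stabilised, so the $\sneg$-toggles would accumulate and destroy well-formedness. The fact that $\rpro'$ eventually stabilises bounds the number of \EI/\AI steps, not the number of $\downl$ steps, so it cannot bound the toggles. The proof therefore hinges on verifying, from the explicit formulas, that no out-of-range priority is ever produced; you should carry out that arithmetic rather than route around it with a $\sneg$-toggling scheme.
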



\begin{proof}
Let $\rmin'$ and $\rmax'$ be the minimal and the maximal $P$\=/losing numbers. We will define recursively a~function $c_P\colon(\tl,\tr,\rpro)\mapsto t\in\trees_{\Ang{\rmin}{\rmax}}$ unravelling the game $\Cc_P$ from a~position~$(\tl,\tr,\rpro)$ into a~tree $t$. The function $c_P(\tl,\tr)$ will be defined as $c_P(\tl,\tr,\rmax')$.

Consider a~$P$\=/losing number $\rpro$. Let $\rmin'=\rpro_0<\rpro_1<\ldots<\rpro_{n-1}<\rpro$ be the list of all $P$\=/losing numbers smaller than $\rpro$. If $P=\pI$ let $w_P=\spri{\rmax}$ and $w_{\sim P}=\sneg$; if $P=\pII$ the symbols are swapped: $w_P=\sneg$ and $w_{\sim P}=\spri{\rmax}$. Let $t_\eve$ be any tree in $\Wng{\eve}{\rmin}{\rmax}$ and $t_\adam$ be any tree in $\Wng{\adam}{\rmin}{\rmax}$. In the definition of $c_P$ we use the symbols \eve and \adam to name the players to follow the convention of $\Cc_P$, however as $\eve=\pI$ and $\adam=\pII$, this obeys the format of the alphabet $\Ang{\rmin}{\rmax}$.

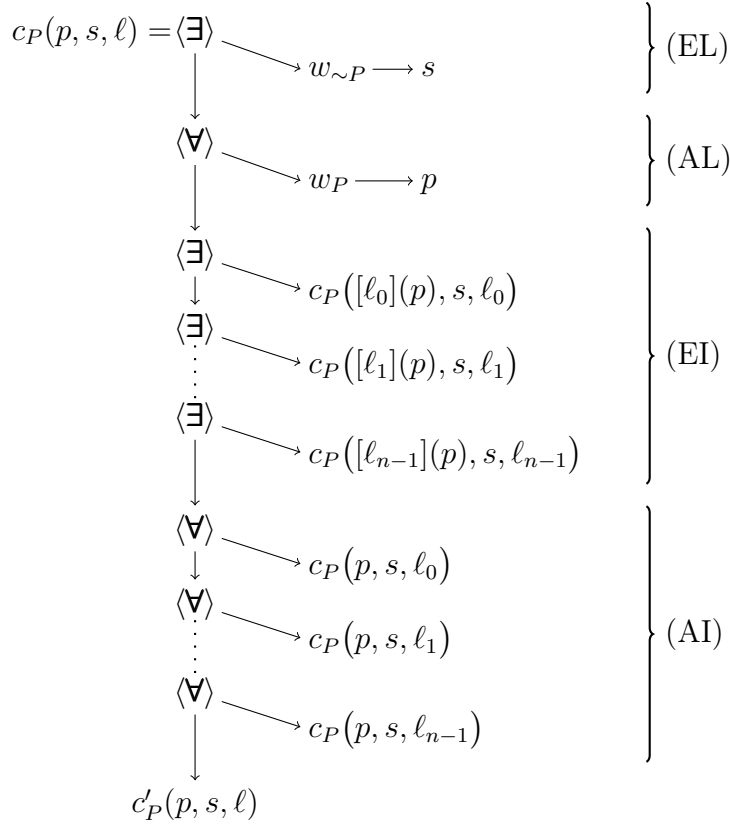
\begin{figure}
\centering
\begin{tikzpicture}
\newcommand{\hsc}{1.5}
\newcommand{\vsc}{-0.5}
\node[toL] at (-0.3, -0) {$c_P(\tl,\tr,\rpro)=$};
\node[letter] (c0) at ( 0.0, -0) {$\spla{\eve}$};
\node[letter, toR] (s0) at ($(c0)+( \hsc, \vsc)$) {$w_{\sim P}$};
\node[letter, toR] (u0) at ($(c0)+(\hsc+1.5, \vsc)$) {$\tr$};

\node[letter] (c1) at ($(c0)+( 0.0, -1.5)$) {$\spla{\adam}$};
\node[letter, toR] (s1) at ($(c1)+( \hsc, \vsc)$) {$w_{P}$};
\node[letter, toR] (u1) at ($(c1)+(\hsc+1.5, \vsc)$) {$\tl$};

\node[letter] (c2) at ($(c1)+( 0.0, -1.5)$) {$\spla{\eve}$};
\node[letter, toR] (s2) at ($(c2)+( \hsc, \vsc)$) {$c_P\big(\spri{\rpro_0}(\tl),\tr,\rpro_0\big)$};

\node[letter] (c3) at ($(c2)+( 0.0, -1.0)$) {$\spla{\eve}$};
\node[letter, toR] (s3) at ($(c3)+( \hsc, \vsc)$) {$c_P\big(\spri{\rpro_1}(\tl),\tr,\rpro_1\big)$};

\node[letter] (c4) at ($(c3)+( 0.0, -1.2)$) {$\spla{\eve}$};
\node[letter, toR] (s4) at ($(c4)+( \hsc, \vsc)$) {$c_P\big(\spri{\rpro_{n-1}}(\tl),\tr,\rpro_{n-1}\big)$};

\node[letter] (c5) at ($(c4)+( 0.0, -1.5)$) {$\spla{\adam}$};
\node[letter, toR] (s5) at ($(c5)+( \hsc, \vsc)$) {$c_P\big(\tl,\tr,\rpro_0\big)$};

\node[letter] (c6) at ($(c5)+( 0.0, -1.0)$) {$\spla{\adam}$};
\node[letter, toR] (s6) at ($(c6)+( \hsc, \vsc)$) {$c_P\big(\tl,\tr,\rpro_1\big)$};

\node[letter] (c7) at ($(c6)+( 0.0, -1.2)$) {$\spla{\adam}$};
\node[letter, toR] (s7) at ($(c7)+( \hsc, \vsc)$) {$c_P\big(\tl,\tr,\rpro_{n-1}\big)$};

\node[letter] (c8) at ($(c7)+( 0.0, -1.5)$) {$c'_P(\tl,\tr,\rpro)$};

\draw (6,-0.8) edge[rbrace] node{\EL} ++(0, 1.2);
\draw (6,-2.3) edge[rbrace] node{\AL} ++(0, 1.2);
\draw (6,-6.0) edge[rbrace] node{\EI} ++(0, 3.4);
\draw (6,-9.7) edge[rbrace] node{\AI} ++(0, 3.4);

\draw[transs] (c0) -- (s0.west);
\draw[transs] (s0.mid east) -- (u0.mid west);
\draw[transs] (c0) -- (c1);

\draw[transs] (c1) -- (s1.west);
\draw[transs] (s1.mid east) -- (u1.mid west);
\draw[transs] (c1) -- (c2);

\draw[transs] (c2) -- (s2.west);
\draw[transs] (c2) -- (c3);

\draw[transs] (c3) -- (s3.west);
\draw[trdots] (c3) -- (c4);

\draw[transs] (c4) -- (s4.west);
\draw[transs] (c4) -- (c5);

\draw[transs] (c5) -- (s5.west);
\draw[transs] (c5) -- (c6);

\draw[transs] (c6) -- (s6.west);
\draw[trdots] (c6) -- (c7);

\draw[transs] (c7) -- (s7.west);
\draw[transs] (c7) -- (c8);
\end{tikzpicture}
\caption{A~part of the tree $c_P(\tl,\tr,\rpro)$ simulating the rounds \EL, \AL, \EI, and \AI of $\Cc_P$. The tree $c'_P(\tl,\tr,\rpro)$ is defined separately, depending on the shape of $\tl$ and $\tr$. Formally, to adhere to the format, the above tree should be padded everywhere with priorities $\rmax$, so that every second node is a~priority node.}
\label{fig:def-of-c}
\end{figure}

A~recursive formula for the tree $c_P(\tl,\tr,\rpro)$ is given in Figure~\ref{fig:def-of-c}, where the subtree $c'_P(\tl,\tr,\rpro)$ depends on the form of the trees $\tl$ and $\tr$ as follows:
\begin{itemize}
\item The case of \Db:
\begin{align*}
c'_P\big(\spri{\rpro}(\tl),\ \spri{\rpro}(\tr),\ \rpro\big)&=\spri{\rmax}\big(c_P(\tl,\tr,\rpro)\big)
\end{align*}
\item The case of \downl:
\begin{align*}
c'_P\big(\spla{P}(\tl_\dL,\tl_\dR),\ \tr,\ \rpro\big)&=\spla{\eve}\big(c_P(\tl_\dL,\tr,\rpro),\ c_P(\tl_\dR,\tr,\rpro)\big)\\
c'_P\big(\spla{\bar{P}}(\tl_\dL,\tl_\dR),\ \tr,\ \rpro\big)&=\spla{\adam}\big(c_P(\tl_\dL,\tr,\rpro),\ c_P(\tl_\dR,\tr,\rpro)\big)\\
c'_P\big(\spri{\rmid}(\tl),\ \tr,\ \rpro\big)&=\spri{\rmid{+}1{-}P}\big(c_P(\tl,\tr,\rpro)\big)&\text{if $\rmid>\rpro$}\\
c'_P\big(\spri{\rmid}(\tl),\ \tr,\ \rpro\big)&=t_\eve&\text{if $\rmid<\rpro$}\\
c'_P\big(\sneg(\tl),\ \tr,\ \rpro\big)&=t_\eve
\end{align*}
\item The case of \downr:
\begin{align*}
c'_P\big(\spri{\rpro}(\tl),\ \spla{P}(\tr_\dL,\tr_\dR),\ \rpro\big)&=\spla{\adam}\big(c_P(\tl,\tr_\dL,\rpro),\ c_P(\tl,\tr_\dR,\rpro)\big)\\
c'_P\big(\spri{\rpro}(\tl),\ \spla{\bar{P}}(\tr_\dL,\tr_\dR),\ \rpro\big)&=\spla{\eve}\big(c_P(\tl,\tr_\dL,\rpro),\ c_P(\tl,\tr_\dR,\rpro)\big)\\
c'_P\big(\spri{\rpro}(\tl),\ \spri{\rmid}(\tr),\ \rpro\big)&=\spri{\rmid{-}2{+}P}\big(c_P(\tl,\tr,\rpro)\big)&\text{if $\rmid>\rpro$}\\
c'_P\big(\spri{\rpro}(\tl),\ \spri{\rmid}(\tr),\ \rpro\big)&=t_\adam&\text{if $\rmid<\rpro$}\\
c'_P\big(\spri{\rpro}(\tl),\ \sneg(\tr),\ \rpro\big)&=t_\adam
\end{align*}
\end{itemize}

The possibility of an immediate win of the players $\eve$ and $\adam$ in $\Cc_P$ is simulated by putting respectively subtrees $t_\eve$ and $t_\adam$ in $c'_P(\tl,\tr,\rpro)$

Notice that in both cases when a~round of $\Cc_P$ has an explicitly declared priority, that priority is $\rmid$ or $\rmid{-}1$ with $\rmin\leq\rpro<\rmid\leq\rmax$. Therefore, the priorities of $\Cc_P$ are within $\{\rmin,\ldots,\rmax\}$ and thus, the tree produced by $c_P$ uses only symbols from the alphabet $\Ang{\rmin}{\rmax}$.

The recursive definition of $c_P$ guarantees that the function is continuous (between every two recursive calls a~new symbol of the output is produced). Moreover, the only place where the function produces the symbol $\sneg$ is in the simulation of the steps \EL and \AL, see Figure~\ref{fig:def-of-c}. But no branch of $c_P(\tl,\tr,\rpro)$ passes through more than one such newly produced symbol $\sneg$. Therefore, if the trees $\tl$ and $\tr$ are well\=/formed then so is $c_P(\tl,\tr,\rpro)$.

The game $\Gg\big(c_P(\tl,\tr,\rpro)\big)$ faithfully represents the choices of the players in $\Cc_P$, therefore \eve wins $\Cc_P$ from a~position $(\tl,\tr,\rpro)$ if and only if $c_P(\tl,\tr,\rpro)\in\Wng{\pI}{\rmin}{\rmax}$.
\end{proof}

Notice that the rules of the game $\Cc_P$ do not allow to move from a~position with a~tree (either $\tl$ or $\tr$) of the form $\sneg(t)$ to a~position with the respective tree being $t$. Therefore, we never need to swap the considered player $P$ into $\bar{P}$.

Claim~\ref{cl:winner-in-C} together with Lemma~\ref{lem:unravelling} prove Lemma~\ref{lem:reduction}. Thus, the rest of this section is devoted to a~proof of Claim~\ref{cl:winner-in-C}. Since the winning condition of $\Cc_P$ is a~parity condition, that game is positionally determined. Thus, to prove Claim~\ref{cl:winner-in-C} it is enough to show that none of the following two cases is possible for a~position $(\tl,\tr,\rpro)$ of $\Cc_P$: 
\begin{itemize}
\item \eqref{eq:winner-in-C} is true and \adam has a~positional winning strategy $\Sigma_\adam$ from $(\tl,\tr,\rpro)$,
\item \eqref{eq:winner-in-C} is false and \eve has a~positional winning strategy $\Sigma_\eve$ from $(\tl,\tr,\rpro)$. 
\end{itemize}



In both cases we will confront the assumed strategy with a~specially designed positional quasi-strategy of the opponent ($\quas{\eve}$ and $\quas{\adam}$ respectively). The quasi-strategy $\quas{\eve}$ will be defined only in positions that satisfy~\eqref{eq:winner-in-C} and the quasi-strategy $\quas{\adam}$ in the remaining positions.

The quasi-strategy $\quas{\eve}$ (resp. $\quas{\adam}$) of a~player \eve (resp. \adam) in a~round starting in a~position~$\big(\tl,\tr,\rpro\big)$ performs the following choices in sub-rounds \EL to \AI:
\begin{itemize}
\item In \EL $\quas{\eve}$ claims that $P$ loses $\Gg(\tr)$ if and only if he really does.
\item In \AL $\quas{\adam}$ claims that $P$ loses $\Gg(\tl)$ if and only if he really does.
\item In \EI $\quas{\eve}$ modifies $\rpro$ into $\rpro'$ if $\rpro'<\rpro$ is the minimal $P$\=/losing number such that $\sigA_P(\tl)\restr{\rpro'}\lex\sigA_P(\tr)\restr{\rpro'}$. If there is no such number, $\quas{\eve}$ does not declare $\rpro'$.
\item In \AI $\quas{\adam}$ modifies $\rpro$ into $\rpro'$ if $\rpro'<\rpro$ is the minimal $P$\=/losing number such that $\sigA_P(\tl)\restr{\rpro'}\lexge\sigA_P(\tr)\restr{\rpro'}$. If there is no such number, $\quas{\adam}$ does not declare $\rpro'$.
\end{itemize}

Moreover, in $\downl$ when $\tl=\spla{P}(\tl_\dL,\tl_\dR)$ the quasi\=/strategy $\quas{\eve}$ chooses to set $\tl'=\tl_\dL$ if and only if $\sigA_P(\tl_\dL)\lexeq\sigA_P(\tl_\dR)$. Dually, in $\downr$ when $\tr=\spla{P}(\tr_\dL,\tr_\dR)$ the quasi\=/strategy~$\quas{\adam}$ chooses to set $\tr'=\tr_\dL$ if and only if $\sigA_P(\tr_\dL)\lexeq\sigA_P(\tr_\dR)$.

Now it remains to define the choices of the quasi-strategies in the steps $\downr$ and $\downl$ when the given tree is of the form $\spla{\bar{P}}(t_\dL,t_\dR)$. This is the place where the choices of $\quas{\eve}$ and $\quas{\adam}$ are not unique and that is why these are quasi\=/strategies.

\begin{definition}
\label{def:preservation}
The quasi\=/strategies $\quas{\eve}$ and $\quas{\adam}$ need to satisfy the following \emph{preservation guarantees}. First, in $\downr$ when $\tr=\spla{\bar{P}}(\tr_\dL,\tr_\dR)$ then $\quas{\eve}$ can set $\tr'$ as any of the two $\tr_\dL$, $\tr_\dR$ that satisfies $\sigA_P(\tr')\restr{\rpro}\lexgeq \sigA_P(\tl)\restr{\rpro}$. Second, in $\downl$ when $\tl=\spla{\bar{P}}(\tl_\dL,\tl_\dR)$ then $\quas{\adam}$ can set $\tl'$ as any of the two $\tl_\dL$, $\tl_\dR$ that satisfies $\sigA_P(\tl')\restr{\rpro}\lexge \sigA_P(\tr)\restr{\rpro}$.
\end{definition}

\subsection{Local properties of the \texorpdfstring{quasi\=/strategies}{quasi-strategies}}

In this subsection we will prove the following two \emph{local} properties of the quasi\=/strategies $\quas{\eve}$ and $\quas{\adam}$.

\newcommand{\ftLeavesChoice}{
In both cases the preservation guarantee leaves at least one possible choice.
}

\begin{fact}
\label{ft:leaves-choice}
\ftLeavesChoice
\end{fact}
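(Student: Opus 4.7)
The plan is to exploit Item~\ref{it:sig-max} of Lemma~\ref{lem:sig-exist}: for a~tree of the form $\spla{\bar{P}}(t_\dL,t_\dR)$, the $P$\=/signature equals $\max\big\{\sigA_P(t_\dL),\sigA_P(t_\dR)\big\}$. In both cases covered by the preservation guarantee, the tree being descended into is exactly a~$\spla{\bar{P}}$\=/node, so the child whose $P$\=/signature realises this maximum should witness the required inequality.

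A~preliminary observation is that restriction to $\rpro$ commutes with the binary maximum on $P$\=/signatures: since $a\lexeq b$ implies $a\restr\rpro\lexeq b\restr\rpro$ (and by convention $\infty\restr\rpro=\infty$), one has
$\max\big\{\sigA_P(t_\dL),\sigA_P(t_\dR)\big\}\restr\rpro=\max\big\{\sigA_P(t_\dL)\restr\rpro,\sigA_P(t_\dR)\restr\rpro\big\}$. Combining this with Item~\ref{it:sig-max} yields $\sigA_P(t)\restr\rpro=\max\big\{\sigA_P(t_\dL)\restr\rpro,\sigA_P(t_\dR)\restr\rpro\big\}$ whenever $t=\spla{\bar{P}}(t_\dL,t_\dR)$.

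With this in hand, the \eve case is immediate. The quasi\=/strategy $\quas{\eve}$ is only defined at positions $(\tl,\tr,\rpro)$ satisfying \eqref{eq:winner-in-C}, i.e.~$\sigA_P(\tl)\restr\rpro\lexeq\sigA_P(\tr)\restr\rpro$. Picking $\tr'\in\{\tr_\dL,\tr_\dR\}$ whose restricted $P$\=/signature is maximal gives $\sigA_P(\tr')\restr\rpro=\sigA_P(\tr)\restr\rpro\lexgeq\sigA_P(\tl)\restr\rpro$, exactly the preservation guarantee. The \adam case is symmetric: $\quas{\adam}$ is defined precisely when \eqref{eq:winner-in-C} fails, i.e.~$\sigA_P(\tl)\restr\rpro\lexge\sigA_P(\tr)\restr\rpro$; choosing $\tl'\in\{\tl_\dL,\tl_\dR\}$ attaining the maximum of the restricted signatures yields $\sigA_P(\tl')\restr\rpro=\sigA_P(\tl)\restr\rpro\lexge\sigA_P(\tr)\restr\rpro$, preserving the \emph{strict} inequality. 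There is essentially no obstacle: the content of the fact is just that the domain\=/defining inequality for each quasi\=/strategy is automatically inherited by one of the two children, because a~$\spla{\bar{P}}$\=/node delegates its signature to its maximiser.
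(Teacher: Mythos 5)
Your proof is correct and takes the same approach the paper uses (the argument is not isolated in its own proof block but is folded into the discussion of Lemma~\ref{lem:honest-safety}, via exactly the displayed chain $\sigA_P(\tl)\restr\rpro\lexeq\sigA_P(\tr)\restr\rpro=\max(\sigA_P(\tr_\dL),\sigA_P(\tr_\dR))\restr\rpro=\max(\sigA_P(\tr_\dL)\restr\rpro,\sigA_P(\tr_\dR)\restr\rpro)$ and its dual). Note that the paper cites Item~\ref{it:sig-min} there, which appears to be a slip for Item~\ref{it:sig-max}; your citation is the correct one.
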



\newcommand{\lemHonestSafety}{
Consider a~position $\big(\tl,\tr,\rpro\big)$. If it satisfies~\eqref{eq:winner-in-C} and \eve follows her quasi\=/strategy~$\quas{\eve}$ then either she immediately wins or the next position also satisfies~\eqref{eq:winner-in-C}.

Dually, if the position violates~\eqref{eq:winner-in-C} and \adam follows his quasi\=/strategy $\quas{\adam}$ then either he immediately wins or the next position also violates~\eqref{eq:winner-in-C}.
}

\begin{lemma}
\label{lem:honest-safety}
\lemHonestSafety
\end{lemma}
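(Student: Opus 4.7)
The plan is to perform a case analysis on the seven sub-rounds of a single round of $\Cc_P$ (the sub-steps \EL, \AL, \EI, \AI, \Db, \Dl, \Dr), verifying that whenever \eve plays $\quas{\eve}$ and the invariant~\eqref{eq:winner-in-C} holds at the start of the round, then either \eve wins immediately or~\eqref{eq:winner-in-C} holds at the next position. I will present the \eve-side in detail; the \adam-side follows by a symmetric argument. Throughout, I will repeatedly invoke Items~\ref{it:sig-inf}--\ref{it:sig-max} of Lemma~\ref{lem:sig-exist} to compute the relevant $P$-signatures, together with the straightforward observation that the prefix-restriction $\sigA \mapsto \sigA\restr{\rpro}$ is monotone with respect to ${\lexeq}$.

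The steps \EL, \AL, \EI, \AI, \Db and \Dl I expect to be routine. In~\EL, $\quas{\eve}$ claims iff $\sigA_P(\tr) = \infty$ (Item~\ref{it:sig-inf}), winning immediately. In~\AL, the invariant together with \eve's non-claim in~\EL forces $\sigA_P(\tl)\restr{\rpro} \lex \infty$, hence $\sigA_P(\tl)\lex\infty$, so any \adam claim is false and \eve wins. In~\EI, the chosen minimal $\rpro' < \rpro$ satisfies $\sigA_P(\tl)\restr{\rpro'} \lex \sigA_P(\tr)\restr{\rpro'}$; by Item~\ref{it:sig-j-los} applied to $\spri{\rpro'}(\tl)$ the signature $\sigA_P\bigl(\spri{\rpro'}(\tl)\bigr)\restr{\rpro'}$ differs from $\sigA_P(\tl)\restr{\rpro'}$ only by $+1$ at the last coordinate, which a direct lexicographic check shows cannot spoil the required $\lexeq$. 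The step \AI is immediate because $\lexeq$ is preserved under further prefix-restriction. For~\Db, Item~\ref{it:sig-j-los} applied on both sides simultaneously adds $+1$ to the $\rpro$-coordinate of each signature, and cancelling this $+1$ preserves $\lexeq$. The step~\Dl splits on the form of $\tl$: a $\spla{P}$-node uses Item~\ref{it:sig-min} to yield $\sigA_P(\tl') = \sigA_P(\tl)$ after $\quas{\eve}$'s minimising choice; a $\spla{\bar{P}}$-node uses Item~\ref{it:sig-max} to give $\sigA_P(\tl') \lexeq \sigA_P(\tl)$ regardless of \adam's move; and a $\spri{\rmid}$-node with $\rmid>\rpro$ agrees with $\tl'$ on all $P$-losing coordinates $\leq\rpro$ by Items~\ref{it:sig-j-win}/\ref{it:sig-j-los}. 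The remaining forms ($\sneg$ and $\spri{\rmid}$ with $\rmid<\rpro$) yield an immediate \eve-win by the rules of $\downl$.

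The hard part is~\Dr, specifically its \emph{otherwise} sub-case in which \adam would win immediately because $\tr=\sneg(\tr')$ or $\tr=\spri{\rmid}(\tr')$ with $\rmid<\rpro$. The plan is to rule both out under the standing hypothesis that \eve has just survived~\EL and declined to modify in~\EI. The key observation is that, since $\tl=\spri{\rpro}(\tl')$, Item~\ref{it:sig-j-los} forces the last coordinate of $\sigA_P(\tl)\restr{\rpro}$ to be $\ordA_{\rpro}+1 \geq 1$, whereas in both problematic shapes of $\tr$ the corresponding last coordinate of $\sigA_P(\tr)\restr{\rpro}$ is zero (by Item~\ref{it:sig-neg} after \EL survival in the $\sneg$ case, and by the zero-padding from $\rmid+2$ up to $\rpro$ in Items~\ref{it:sig-j-win}/\ref{it:sig-j-los} when $\rmid<\rpro$). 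Consequently, $\sigA_P(\tl)\restr{\rpro} \lexeq \sigA_P(\tr)\restr{\rpro}$ cannot be an equality, and moreover the first ${\lex}$-difference cannot occur past the leading non-zero block of $\sigA_P(\tr)\restr{\rpro}$ (which would demand a negative $\ordA$-entry), so this difference must occur at some $P$-losing index $\rpro' < \rpro$, giving $\sigA_P(\tl)\restr{\rpro'} \lex \sigA_P(\tr)\restr{\rpro'}$; this is precisely the trigger for the \EI modification, contradicting our assumption. The remaining sub-cases of~\Dr (moves on $\spla{P}$, $\spla{\bar{P}}$ or $\spri{\rmid}$ with $\rmid > \rpro$ in~$\tr$) are symmetric to the corresponding~\Dl analysis using Items~\ref{it:sig-min} and~\ref{it:sig-max}, with the preservation guarantee from Definition~\ref{def:preservation} (non-vacuous by Fact~\ref{ft:leaves-choice}) being exactly the statement needed on the $\spla{\bar{P}}$-branch. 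The \adam-side of the lemma then follows by the entirely dual argument, where $\quas{\adam}$'s \AI-choice and the preservation guarantee for \adam-controlled $\spla{\bar{P}}$-moves in~\Dl replace the corresponding \eve-side ingredients, and the potentially problematic \emph{otherwise} sub-case of~\Dl is excluded by the analogous \AI-non-modification argument.
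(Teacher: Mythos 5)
Your proposal is correct and follows essentially the same route as the paper: a sub-step-by-sub-step case analysis driven by Items~\ref{it:sig-inf}--\ref{it:sig-max} of Lemma~\ref{lem:sig-exist}, with the \adam-immediate-win in~\Dr ruled out by showing that the last ($\rpro$-indexed) coordinate of $\sigA_P(\tr)\restr\rpro$ would have to be zero while that of $\sigA_P(\tl)\restr\rpro$ is a successor ordinal; your first-difference argument for reaching the contradiction is a slightly more roundabout but equivalent phrasing of the paper's direct observation that non-modification in \EI forces $\ordA_\rpro\leq\ordA'_\rpro$, hence $\ordA'_\rpro>0$.
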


\paragraph*{The case of \texorpdfstring{\eve}{E}}

First consider the case of $\eve$, her quasi\=/strategy $\quas{\eve}$, and a~position $(\tl,\tr,\rpro)$ that satisfies~\eqref{eq:winner-in-C}, what means that $\sigA_P(\tl)\restr{\rpro}\lexeq\sigA_P(\tr)\restr{\rpro}$. We will consider the successive round of $\Cc_P$ in which $\eve$ plays according to $\quas{\eve}$ and prove the above two statements simultaneously.

First consider the step $\EL$ in which \eve may claim that $P$ loses $\Gg(\tr)$. If she does so then she wins the game, otherwise $P$ wins $\Gg(\tr)$ what means that $\sigA_P(\tr)\lex\infty$ and therefore also $\sigA_P(\tl)\lex\infty$ and $P$ wins $\Gg(\tl)$.

Now, as $P$ wins $\Gg(\tl)$ if \adam decides to make a~declaration in \AL then he loses. Otherwise the game moves to the successive step.

Let $\sigA_P(\tl)=(\ordA_{\rmin'},\ldots,\ordA_{\rmax'})$. If \eve declares a~new $P$\=/losing number $\rpro'\lex\rpro$ then she knows that $\sigA_P(\tl)\restr\rpro'\lex \sigA_P(\tr)\restr\rpro'$. Thus, $(\ordA_{\rmin'},\ldots,\ordA_{\rpro'})\lex \sigA_P(\tr)\restr\rpro'$. This means that 
\[\sigA_P(\tl')\restr{\rpro'}=\sigA_P\big(\spri{\rpro'}(\tl)\big)\restr\rpro'=(\ordA_{\rmin'},\ldots,\ordA_{\rpro'}+1)\lexeq\sigA_P(\tr)\restr\rpro',\]
and in that case the invariant~\eqref{eq:winner-in-C} is preserved.

Otherwise, we know that \eve did not declare such a~number $\rpro'$ and $\sigA_P(\tr)\restr \rpro$ is of the form $(\ordA_{\rmin'},\ldots,\ordA_{\rpro-2},\ordA'_{\rpro})$ where all the coordinates agree with $\sigA_P(\tl)\restr\rpro$ except the last. By the assumption of~\eqref{eq:winner-in-C} we know that $\ordA_{\rpro}\leq\ordA'_{\rpro}$.

If \adam decides to declare a~new number $\rpro'<\rpro$ in \AI then clearly the invariant (with equality) is preserved.

Now consider the three cases \Db, \Dl, and \Dr. First, in the case \Db we know that both numbers $\ordA_{\rpro}$ and $\ordA'_{\rpro}$ are successor ordinals and when moving to $\tl'$ and $\tr'$ they are decreased exactly by one. Therefore, the invariant is still preserved in that case.

Consider the case of \Dl. If $\tl(\epsilon)=\spla{P}$ and \eve plays in $\downl$ with her quasi\=/strategy, she moves to the subtree of lower value $\sigA_P(\tl')$. By Item~\ref{it:sig-min} of Lemma~\ref{lem:sig-exist} we know that in that case $\sigA_P(\tl')=\sigA_P(\tl)$ and the invariant is preserved. If $\tl(\epsilon)=\spla{\bar{P}}$ then by Item~\ref{it:sig-max} of Lemma~\ref{lem:sig-exist}, no matter which subtree \adam chooses, we know that $\sigA_P(\tr')\lexgeq \sigA_P(\tr)$ and therefore the invariant is also preserved. If $\tl(\epsilon)=\spri{\rmid}$ with $\rmid>\rpro$ then $\sigA_P(\tl')\restr\rpro = \sigA_P(\tl)\restr{\rpro}$ and again the invariant is preserved. In other cases \eve immediately wins.

Now consider the case of \Dr in which $\tl(\epsilon)=\spri{\rpro}$. It means that $\ordA_{\rpro}$ is a~successor ordinal greater than $0$. In particular also $\ordA'_{\rpro'}>0$. First, if $\tr(\epsilon)=\spla{P}$ then \adam cannot decrease $\sigA_P(\tr)$ in his move and the invariant must be preserved. If $\tr(\epsilon)=\spla{\bar{P}}$ then the preservation guarantee (see Definition~\ref{def:preservation}) makes sure that no matter what \eve does, the invariant is still preserved. Notice that using Item~\ref{it:sig-min} of Lemma~\ref{lem:sig-exist} we know that at least one of the subtrees $\tr_\dL$ or $\tr_\dR$ must satisfy the preservation guarantee, as
\[\sigA_P(\tl)\restr\rpro\lexeq\sigA_P(\tr)\restr\rpro=\max(\sigA_P(\tr_\dL),\sigA_P(\tr_\dR))\restr\rpro=\max\big(\sigA_P(\tr_\dL)\restr\rpro,\sigA_P(\tr_\dR)\restr\rpro\big).\]
This means that \eve has at least one choice here. Similarly as before, if $\tr(\epsilon)=\spri{\rmid}$ with $\rmid>\rpro$ then the invariant stays the same. What remains is to exclude the possibility that \adam immediately wins in $\downr$. If $\tr(\epsilon)=\spri{\rpro}$ then we would move to the case \Db. If $\tr(\epsilon)=\spri{\rmid}$ with $\rmid<\rpro$ or $\tr(\epsilon)=\sneg$ then $\ordA'_{\rpro}=0$, contradicting our assumptions. These are the only two possible cases.

\paragraph*{The case of \texorpdfstring{\adam}{A}}

Now consider the case of $\adam$, his quasi\=/strategy $\quas{\adam}$, and a~position $(\tl,\tr,\rpro)$ that violates~\eqref{eq:winner-in-C}, what means that $\sigA_P(\tl)\restr{\rpro}\lexge\sigA_P(\tr)\restr{\rpro}$. Notice that the strict inequality here implies that $\sigA_P(\tr)\neq\infty$ and $P$ wins $\Gg(\tr)$. We will consider the successive round of $\Cc_P$ in which $\adam$ plays according to $\quas{\adam}$. 

First consider the step $\EL$ in which \eve may claim that $P$ loses $\Gg(\tr)$ but then she loses, as $P$ wins $\Gg(\tr)$.

In the step $\AL$, if \adam decides to declare that $P$ loses $\Gg(\tl)$ then he wins, otherwise $P$ wins $\Gg(\tl)$ as well.

Let $\sigA_P(\tl)=(\ordA_{\rmin'},\ldots,\ordA_{\rmax'})$.

In \EI, if \eve decides to choose a~new number $\rpro'<\rpro$ then the invariant is preserved, as
\[\sigA_P(\tl')\restr{\rpro'}=(\ordA_{\rmin'},\ldots,\ordA_{\rpro'}+1)\lexge(\ordA_{\rmin'},\ldots,\ordA_{\rpro'})\lexgeq\sigA_P(\tr)\restr{\rpro'}.\]

Similarly in \AI, if \adam decides to choose a~new number $\rpro'<\rpro$ according to his quasi\=/strategy then he makes it in such a~way to preserve the invariant~\eqref{eq:winner-in-C}.

If the round did not end so far, we know that $\sigA_P(\tr)\restr\rpro=(\ordA_{\rmin'},\ldots,\ordA_{\rpro-2},\ordA'_{\rpro})$ where all the coordinates except the last are equal and $\ordA_{\rpro}>\ordA'_{\rpro}$. Similarly as in the case of \eve, if we take the step \Db then the invariant is preserved.

Consider the step \Dl and the choices in $\downl$. If $\tl(\epsilon)=\spla{P}$ then similarly as before \eve can only increase $\sigA_P(\tl)$ preserving the invariant. \adam will not violate the invariant in the case of $\tl(\epsilon)=\spla{\bar{P}}$ because of the preservation guarantees, and as Item~\ref{it:sig-min} of Lemma~\ref{lem:sig-exist} implies, at least one choice is left for \adam in that case. If $\tl(\epsilon)=\spri{\rmin}$ with $\rmid>\rpro$ then the invariant is clearly preserved. Since $\ordA_{\rpro}>\ordA'_{\rpro}\geq 0$, \eve cannot immediately win in $\downl$.

Now consider the last case of step \Dr. As before, the choices of \eve, \adam, or the move over a~priority greater than $\rpro$ will not violate the invariant. The remaining case of an immediate win of \adam is also allowed now.

\subsection{Global properties}

We are now in position to prove that the existence of the quasi\=/strategies $\quas{\eve}$ and $\quas{\adam}$ exclude the two cases: that~\eqref{eq:winner-in-C} holds but \adam has a positional winning strategy; and that~\eqref{eq:winner-in-C} is violated and \eve has a positional winning strategy. To achieve that, we will study the properties of the infinite plays consistent with $\quas{\eve}$ and $\quas{\adam}$.

Notice that each play of $\Cc_P$ can modify the value of $\rpro$ only bounded number of times. Moreover, because of the conditions in steps \Db, \Dl, and \Dr we obtain the following fact.

\begin{fact}
\label{ft:form-plays}
If $\plyA$ is an infinite play of $\Cc_P$ then exactly one of the following three cases holds:
\begin{itemize}
\item $\plyA$ makes infinitely many \Db steps,
\item from some point on $\plyA$ makes only \Dl steps,
\item from some point on $\plyA$ makes only \Dr steps.
\end{itemize}
\end{fact}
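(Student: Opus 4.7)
The plan is to first eliminate those step types that can occur only finitely often in any infinite play, and then to analyse the remaining rounds via a simple transition observation between the step types \Db, \Dl, and \Dr. The value $\rpro$ lies in the finite set of $P$\=/losing numbers and every \EI or \AI round strictly decreases $\rpro$, so after some finite prefix of $\plyA$ no further \EI or \AI rounds are played and $\rpro$ stabilises at some value $\rpro_0$. An infinite play contains no \EL or \AL rounds, since these terminate the game. Therefore from some position onward every round of $\plyA$ is of type \Db, \Dl, or \Dr, and the third coordinate is the constant $\rpro_0$.

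The structural observation that drives the proof is the following. The step type of a round is determined by whether the roots of $\tl$ and $\tr$ carry the symbol $\spri{\rpro_0}$: \Db is used when both do, \Dl when $\tl$ does not, and \Dr when $\tl$ does but $\tr$ does not. Crucially, a \Dr round leaves the tree $\tl$ unchanged. Hence, immediately after a \Dr round, $\tl$ still satisfies $\tl=\spri{\rpro_0}(\tl'')$, so the subsequent round cannot be \Dl; it must be \Db or \Dr.

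Using this observation, exhaustiveness of the three cases is straightforward. If \Db occurs infinitely often we are in the first case. Otherwise, there is a position $N$ such that every round after $N$ is either \Dl or \Dr. If some \Dr round occurs after $N$ then, by the observation above, every subsequent round is again \Dr (since \Db is already excluded and \Dl cannot follow \Dr), placing $\plyA$ in the third case. If no \Dr round occurs after $N$ then every round after $N$ is \Dl and we are in the second case.

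Mutual exclusivity then follows from simply counting step types: the first case has infinitely many \Db rounds, while the second and third cases both have only finitely many \Db rounds, and the second (resp. third) case has only finitely many \Dr (resp. \Dl) rounds from some point on. I do not foresee any real obstacle in this argument; the only point that requires attention is the transition observation relating \Dr and \Dl, which is read off directly from the definitions of steps \Dl and \Dr in the game $\Cc_P$.
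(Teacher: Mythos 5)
Your proof is correct and fills in exactly the details that the paper leaves implicit: the paper proves Fact~\ref{ft:form-plays} by just noting that $\rpro$ decreases only finitely often and invoking "the conditions in steps \Db, \Dl, and \Dr", and your transition observation --- that a \Dr step leaves $\tl$ unchanged and hence still of the form $\spri{\rpro}(\cdot)$, so \Dl cannot immediately follow \Dr --- is precisely the unstated content of that remark.
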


Observe that each step of $\Cc_P$ of the form $\downl$ or $\downr$ (if it doesn't mean an~immediate win) simulates in fact a~round of the game $\Gg(\tl)$ and $\Gg(\tr)$ respectively. Moreover, the quasi\=/strategies of the players \eve and \adam simulate optimal strategies of $P$ in these rounds respectively. Thus, $P$ must win these plays, as expressed by the following lemma.

\begin{lemma}
\label{lem:honest-optimal}
Consider an infinite play consistent with the quasi-strategy of one of the players ($\quas{\eve}$ or $\quas{\adam}$). Then the play only finitely many times makes the \Db step.

Moreover, if the play follows $\quas{\eve}$ and from some point on makes only \Dl steps then it is winning for \eve. Similarly, if the play follows $\quas{\adam}$ and from some point on makes only \Dr steps then it is winning for \adam.
\end{lemma}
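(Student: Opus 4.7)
My plan is to split the lemma into the claim that \Db steps occur only finitely often and the two winning assertions, and to handle them through a monotone ordinal quantity and Lemma~\ref{lem:sig-opt-win}, respectively.

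First, each \EI or \AI step strictly decreases $\rpro$, which takes only finitely many $P$\=/losing values, so along any infinite play $\rpro$ eventually stabilises at some value $\rpro_\infty$; from that point on only \Db, \Dl, and \Dr steps occur. Along the resulting tail of a play consistent with $\quas{\eve}$ I would track the tuple $\sigA_P(\tl)\restr{\rpro_\infty}$. A \Db step has $\tl=\spri{\rpro_\infty}(\tl')$, so by Item~\ref{it:sig-j-los} of Lemma~\ref{lem:sig-exist} its last coordinate strictly drops; a \Dr step leaves $\tl$, and hence the tuple, intact; and a \Dl step can only weakly decrease the tuple, because in $\spla{P}$\=/nodes \eve picks a child with minimal $\sigA_P$ (Item~\ref{it:sig-min}), in $\spla{\bar{P}}$\=/nodes any choice of \adam is at most the $\sigA_P$\=/maximum (Item~\ref{it:sig-max}), and a $\spri{\rmid}$\=/node with $\rmid>\rpro_\infty$ leaves the $\rpro_\infty$\=/truncation unaffected. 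Since ${\lexeq}$ is a well\=/order on $P$\=/signatures, this forces only finitely many \Db steps along the play. The argument for $\quas{\adam}$ is symmetric, tracking $\sigA_P(\tr)\restr{\rpro_\infty}$ instead.

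For the second assertion, suppose the play is consistent with $\quas{\eve}$ and its tail consists only of \Dl steps. By Lemma~\ref{lem:honest-safety}, every position along the play satisfies $\sigA_P(\tl)\restr{\rpro_\infty}\lexeq\sigA_P(\tr)\restr{\rpro_\infty}$; moreover \EL is never triggered on an infinite play, so $\sigA_P(\tr)\lex\infty$ and therefore $\sigA_P(\tl)\lex\infty$, i.e.~$P$ wins $\Gg(\tl)$. The \Dl moves of $\quas{\eve}$ implement, by construction, a $\sigA$\=/optimal strategy of~$P$ in $\Gg(\tl)$: \eve picks a $\sigA_P$\=/minimum in $\spla{P}$\=/nodes, \adam plays freely (as $\bar{P}$) in $\spla{\bar{P}}$\=/nodes, and neither $\sneg$ nor a priority ${\leq}\rpro_\infty$ can occur on the tail (either would have triggered an immediate win of \eve in $\downl$). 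Lemma~\ref{lem:sig-opt-win} then yields that the induced play of $\Gg(\tl)$ is winning for $P$, i.e.~the least priority $\rmid$ appearing infinitely often on the tail of $\tl$ is $P$\=/winning. The priority of the corresponding round in $\Cc_P$ is $\rmid{+}1{-}P$, and a case split on $P\in\{\pI,\pII\}$ verifies that $\rmid$ is $P$\=/winning if and only if $\rmid{+}1{-}P$ is even; the default priority $\rmax$, used in rounds whose priority is not explicitly declared, cannot affect the minimum\=/occurring\=/infinitely\=/often because it is the maximum of the range. Hence \eve wins the $\Cc_P$\=/play. The third assertion follows by the symmetric argument using $\quas{\adam}$, $\Gg(\tr)$, and priorities $\rmid{-}2{+}P$.

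The main obstacle I anticipate is the parity bookkeeping in the last step: one must verify for both $P=\pI$ and $P=\pII$ that the shift $\rmid\mapsto\rmid{+}1{-}P$ (dually $\rmid\mapsto\rmid{-}2{+}P$) precisely aligns the predicate ``$\rmid$ is $P$\=/winning in $\Gg(\tl)$'' with ``the shifted priority is even in $\Cc_P$'' (respectively ``is odd''); in contrast, the \Db\=/finiteness argument is essentially forced by Items~\ref{it:sig-j-los}, \ref{it:sig-min}, and~\ref{it:sig-max} of Lemma~\ref{lem:sig-exist} once the tracked quantity is identified.
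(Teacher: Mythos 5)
Your proof is correct and takes essentially the same approach as the paper. The only minor difference is that for the finiteness of \Db steps you track $\sigA_P(\tl)\restr{\rpro_\infty}$ directly and appeal to the well-foundedness of ${\lexeq}$, whereas the paper observes that the induced play of $\Gg(\tl_0)$ follows a winning $\sigA$\=/optimal strategy (by Lemma~\ref{lem:sig-opt-win}) and derives a contradiction from infinitely many $\rpro$\=/visits --- the same underlying well\=/founded descent, while the winning assertions are handled identically via Lemma~\ref{lem:sig-opt-win} together with the priority shifts $\rmid\mapsto\rmid{+}1{-}P$ and $\rmid\mapsto\rmid{-}2{+}P$.
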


\begin{proof}
First take an infinite play $\plyA$ of the quasi-strategy $\quas{\eve}$ of \eve starting from a~position $(\tl_0,\tr_0,\rpro_0)$. The subtrees of the~tree $\tl_0$ seen during $\plyA$ follow a~(possibly finite) play $\infA$ of an optimal strategy of $P$ in $\Gg(\tl_0)$. Since $\plyA$ is infinite, \eve does not declare that $P$ loses $\Gg(\tr_0)$. Therefore, $\sigA_P(\tr_0)\lex\infty$ and by~\eqref{eq:winner-in-C} we know that also $\sigA_P(\tl_0)\lex\infty$ which implies that $\tl_0\in\Wng{P}{\rmin}{\rmax}$. By Lemma~\ref{lem:sig-opt-win} the play $\infA$ in $\Gg(\tl_0)$ follows a~winning strategy of $P$.

We will show that the step \Db occurs only finitely many times in $\plyA$. Assume contrarily and let $\rpro$ be the minimal number $\rpro$ that occurs in the play $\plyA$. Using the above assumptions, we know that $\rpro$ is the minimal priority that is seen in the tree $\tl_0$ infinitely many times on $\infA$. Therefore, the simulated play $\infA$ in $\Gg(\tl_0)$ is infinite and losing for $P$, which is a~contradiction.

Therefore, by Fact~\ref{ft:form-plays} the play $\plyA$ either makes from some point on only \Dl steps, or from some point on only \Dr steps. In the first case it follows the play $\infA$ of $\Gg(\tl_0)$ that is winning for $P$. By the choice of priorities in the step $\downl$ we know that \eve wins~$\plyA$.

The case of the quasi-strategy $\quas{\adam}$ of \adam is entirely dual: we use the assumption that~\eqref{eq:winner-in-C} is violated to know that $\sigA_P(\tr)\lex\infty$ so $\tr_0\in\Wng{P}{\rmin}{\rmax}$. Moreover, the choice of priorities in the step $\downr$ implies that if $\plyA$ makes $\downr$ infinitely many times then \adam wins $\plyA$.
\end{proof}

Now we move to the proof of the two cases we need to exclude.

\myPar{The case of \texorpdfstring{\eve}{E}} Assume that a~position $\big(\tl,\tr,\rpro\big)$ of $\Cc_P$ satisfies~\eqref{eq:winner-in-C} but \adam has a~positional winning strategy $\Sigma_\adam$ from that position. We will prove that such a~case is not possible.

By Lemma~\ref{lem:honest-safety}, the positional quasi-strategy $\quas{\eve}$ always stays within positions satisfying the~invariant~\eqref{eq:winner-in-C}. Moreover, the quasi\=/strategy never reaches a~position that is immediately losing for \eve. Similarly, $\Sigma_\adam$ never reaches a~position that is immediately losing for \adam. Thus, all the plays consistent with both $\quas{\eve}$ and $\Sigma_\adam$ must be infinite.

First notice that the values of $\rpro$ are non\=/increasing during the plays of $\Cc_P$ and therefore, there exists a~position that belongs to both $\quas{\eve}$ and $\Sigma_\adam$ such that the value $\rpro$ stays constant during all the plays from that position. Without loss of generality we can assume that this is our starting position. 

We can now proceed inductively in the tree obtained by unravelling the intersection of $\quas{\eve}$ and $\Sigma_\adam$: whenever the currently considered subtree contains anywhere a~\Db step, we change the initial position to the result of that step. By Lemma~\ref{lem:honest-optimal} no play consistent with $\quas{\eve}$ takes the \Db step infinitely many times. Therefore, our inductive procedure has to stop at some point with no \Db steps in the current subtree. Without loss of generality we can assume that the initial position $(\tl_0,\tr_0,\rpro_0)$ is the last position from the procedure. We know that the plays consistent with both $\quas{\eve}$ and $\Sigma_\adam$ never take the \Db step nor modify $\rpro=\rpro_0$.

The structure of $\Cc_P$ guarantees that since the step \Db is not allowed, each play consistent with both $\quas{\eve}$ and $\Sigma_\adam$ takes only \Dl steps or takes only \Dr steps. Lemma~\ref{lem:honest-optimal} implies that in the former case the play would be winning for \eve, contradicting the assumption that $\Sigma_\adam$ is winning. Thus, all the considered plays take only \Dr steps. In particular $\tl=\tl_0$ is constant.

The intersection of $\Sigma_\adam$ and $\quas{\eve}$ induces a~partial strategy $\Sigma_P$ of $P$ in $\Gg(\tr_0)$---$\Sigma_P$ is partial because it does not contain positions that cannot be reached by following $\quas{\eve}$ because of the preservation guarantees, see Definition~\ref{def:preservation}. The subtrees $\tr'$ of $\tr_0$ in such unreachable positions satisfy $\sigA_P(\tr')\restr{\rpro_0}\lex\sigA_P(\tl_0)\restr{\rpro_0}$ by the definition of $\quas{\eve}$. In the positions on which~$\Sigma_P$ is defined it never visits a~priority $\rmid$ with $\rmid\leq \rpro_0$ nor a~node labelled $\sneg$ because such a~move is immediately losing for \adam in $\downr$. Because of the choice of priorities in $\downr$ and since $\Sigma_\adam$ is winning, $\Sigma_P$ is winning for $P$ on infinite plays.

Notice that since we take only \Dr steps, $\tl_0$ must be of the form $\spri{\rpro_0}(\tl_0')$. Therefore, $\sigA_P(\tl_0)\restr{\rpro_0}=(\ordA_{\rmin'},\ldots,\ordA_{\rpro_0}+1)$ for $(\ordA_{\rmin'},\ldots,\ordA_{\rpro_0})\eqdef\sigA_P(\tl_0')\restr{\rpro_0}$. It means that whenever the partial strategy $\Sigma_P$ cannot reach a~position with a~subtree $\tr'$, we know that in fact $\sigA_P(\tr')\restr{\rpro_0}\lexeq (\ordA_{\rmin'},\ldots,\ordA_{\rpro_0})$. The following lemma says that the existence of such a~partial strategy $\Sigma_P$ witnesses the inequality $\sigA_P(\tr_0)\restr{\rpro_0}\lexeq(\ordA_{\rmin'},\ldots,\ordA_{\rpro_0})$. By the definition of the ordinals $\ordA_{\rmid}$ we know that $(\ordA_{\rmin'},\ldots,\ordA_{\rpro_0})\lex\sigA_P(\tl_0)\restr{\rpro_0}$, what contradicts~\eqref{eq:winner-in-C} for $(\tl_0,\tr_0,\rpro_0)$.


\myPar{The case of \texorpdfstring{\adam}{A}} Now we move to the dual case---we assume that the initial position $\big(\tl_0,\tr_0,\rpro_0\big)$ violates~\eqref{eq:winner-in-C} but \eve has a~positional winning strategy $\Sigma_\eve$. Similarly as in the previous case we can assume (by modifying the initial position $(\tl_0,\tr_0,\rpro_0)$) that in the plays consistent with both $\quas{\adam}$ and $\Sigma_\eve$ only the \Dl step is done. In that case both $\tr=\tr_0$ and $\rpro=\rpro_0$ are constant.

Similarly as before, the intersection of $\Sigma_\eve$ and $\quas{\adam}$ induces a~partial strategy $\Sigma_P$ of $P$ in~$\Gg(\tl_0)$ that is defined until a~position with a~subtree $\tl'$ satisfying $\sigA_P(\tl')\restr{\rpro_0}\lexeq\sigA_P(\tr_0)\restr{\rpro_0}$ is visited (the preservation guarantees). In the same way as previously, the partial strategy~$\Sigma_P$ never visits a~position of priority $\rmid\leq\rpro_0$ nor the symbol $\sneg$ and it is winning on infinite plays. Define $(\ordA_{\rmin'},\ldots,\ordA_{\rpro_0})\eqdef\sigA_P(\tr_0)\restr{\rpro_0}$ and apply Lemma~\ref{lem:sig-subopt} to notice that $\sigA_P(\tl_0)\restr{\rpro_0}\lexeq(\ordA_{\rmin'},\ldots,\ordA_{\rpro_0})=\sigA_P(\tr_0)\restr{\rpro_0}$ what implies that~\eqref{eq:winner-in-C} holds in $(\tl_0,\tr_0,\rpro_0)$, which is a~contradiction.

This way we have concluded the proof of Claim~\ref{cl:winner-in-C} and therefore the whole reasoning of this paper is complete.

\section{Why \texorpdfstring{$\sneg$}{\tilde} is needed?}
\label{ap:neg-needed}

In this section we provide a~simple example explaining why the swapping symbol $\sneg$ is needed to properly construct the reduction $c_P$.

\begin{proposition}
\label{pro:no-neg-no-red}
There is no continuous function $\fun{c'}{\big(\trees_{\Alp{0}{\rmax}}\big)^2}{\trees_{\Alp{0}{\rmax}}}$ such that
\[c'(t_\dL,t_\dR)\in\Win{\pI}{0}{\rmax}\ \text{ if and only if }\ \sigA_{\pI}(t_\dL)\lexeq \sigA_{\pI}(t_\dR).\]
\end{proposition}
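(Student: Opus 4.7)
My plan is to assume such a continuous $c'$ exists and derive a contradiction by specialising it to produce a continuous reduction forbidden by Corollary~\ref{cor:reduction}. First I would fix a single tree $t^{\star}\in\trees_{\Alp{0}{\rmax}}$ with $\sigA_{\pI}(t^{\star})=\infty$; the unary tree $t^{\star}=\spri{1}(\spri{1}(\cdots))$ does the job, because its unique branch has min priority $1$ seen infinitely often so $\pII$ trivially wins $\Gg(t^{\star})$, and Item~\ref{it:sig-inf} of Lemma~\ref{lem:sig-exist} then gives $\sigA_{\pI}(t^{\star})=\infty$. Setting $f(t)\eqdef c'(t^{\star},t)$ yields a continuous map $\fun{f}{\trees_{\Alp{0}{\rmax}}}{\trees_{\Alp{0}{\rmax}}}$, since fixing the first coordinate of a continuous function is continuous.

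The key observation is that, because $\infty$ is the top element of the order on $\pI$\=/signatures, the defining property of $c'$ collapses to
\[
f(t)\in\Win{\pI}{0}{\rmax}\quad\text{iff}\quad\sigA_{\pI}(t^{\star})\lexeq\sigA_{\pI}(t)\quad\text{iff}\quad\sigA_{\pI}(t)=\infty,
\]
and by Item~\ref{it:sig-inf} of Lemma~\ref{lem:sig-exist} together with positional determinacy of parity games this is equivalent to $t\in\Win{\pII}{0}{\rmax}$. Thus $f$ continuously reduces $\Win{\pII}{0}{\rmax}$ to $\Win{\pI}{0}{\rmax}$. Composing $f$ with the homeomorphism $\W{1}{\rmax+1}\cong\Win{\pII}{0}{\rmax}$ recorded in Section~\ref{sec:language} produces a continuous $g$ with $\W{1}{\rmax+1}=g^{-1}\big(\Win{\pI}{0}{\rmax}\big)$.

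Finally I would invoke Corollary~\ref{cor:reduction} with the substitution $\rmin\mapsto 1$ and $\rmax\mapsto\rmax+1$: it forces $\Win{\pI}{0}{\rmax}$ not to be recognised by any ATA of index $(2,\rmax+2)$. On the other hand, $\Win{\pI}{0}{\rmax}$ is by construction recognised by an ATA using priorities $\{0,\ldots,\rmax\}$, and adding $2$ to every priority of that automaton preserves the parity (hence the acceptance condition and the recognised language) while producing an equivalent ATA using priorities $\{2,\ldots,\rmax+2\}$. This is the desired contradiction. There is no real obstacle beyond the conceptual content the section is advertising: without $\sneg$, fixing $t^{\star}$ to a $\pI$\=/losing tree forces the comparison to bridge the two strictly incomparable levels $(0,\rmax)$ and $(1,\rmax+1)$ of the index hierarchy, which is exactly what the swap symbol $\sneg$ exists to circumvent inside the construction of $c_P$.
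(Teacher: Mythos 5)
Your proof is correct, but it follows a genuinely different route from the paper's. Both arguments reduce the problem to the impossibility of a continuous reduction of $\Win{\pII}{0}{\rmax}$ to $\Win{\pI}{0}{\rmax}$, but they reach that reduction and close the argument in two different ways. To build the reduction, you fix $t_\dL=t^\star$ with $\sigA_{\pI}(t^\star)=\infty$, so that $\sigA_{\pI}(t^\star)\lexeq\sigA_{\pI}(t)$ degenerates to $\sigA_{\pI}(t)=\infty$, which is exactly $t\in\Win{\pII}{0}{\rmax}$. The paper instead builds the small gadget $r(t)$ (Figure~\ref{fig:r-reduction}) in which \emph{both} coordinates have finite $\pI$\=/signatures, namely $(1,0,\ldots)$ versus $(0,0,\ldots)$ or $(2,0,\ldots)$; this is slightly more work, but it is what makes the follow-up Remark go through verbatim for the strict order $\lex$: your choice collapses under $\lex$ because $\infty\lex\sigA_{\pI}(t)$ is never true, whereas the paper's gadget yields the same dichotomy for both $\lexeq$ and $\lex$. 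To close the argument, the paper cites \cite[Lemma~1]{niwinski_strict} directly (no Wadge reduction from the $(1,\rmax{+}1)$-complete set to the $(0,\rmax)$-complete set). You instead route through Corollary~\ref{cor:reduction}: compose with the homeomorphism to $\W{1}{\rmax+1}$, conclude that $\Win{\pI}{0}{\rmax}$ would not be recognisable by an ATA of index $(2,\rmax{+}2)$, and then contradict this by the standard priority shift of the index-$(0,\rmax)$ alternating automaton simulating $\Gg(t)$. This is a bit longer but is more self-contained within the paper's stated toolkit. Both arguments are sound; yours buys simplicity of the reduction at the cost of not covering the strict-inequality variant, and makes the topological non-reducibility explicit rather than quoted.
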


\begin{proof}
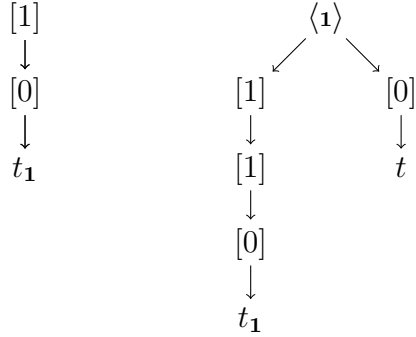
\begin{figure}
\centering
\begin{tikzpicture}
\node[letter] (l0) at ( 0.0, -0) {$\spri{1}$};
\node[bstate] (l1) at ( 0.0, -1) {$\spri{0}$};
\node[bstate] (l2) at ( 0.0, -2) {$t_{\pI}$};
\draw[transs] (l0) -- (l1);
\draw[transs] (l1) -- (l2);

\node[letter] (r0) at ( 4.0, -0) {$\spla{\pI}$};
\node[bstate] (rl1) at ( 3.0, -1) {$\spri{1}$};
\node[bstate] (rl2) at ( 3.0, -2) {$\spri{1}$};
\node[bstate] (rl3) at ( 3.0, -3) {$\spri{0}$};
\node[bstate] (rl4) at ( 3.0, -4) {$t_{\pI}$};

\node[bstate] (rr1) at ( 5.0, -1) {$\spri{0}$};
\node[bstate] (rr2) at ( 5.0, -2) {$t$};

\draw[transs] (l0) -- (l1);
\draw[transs] (l1) -- (l2);

\draw[transs] (r0) -- (rl1);
\draw[transs] (rl1) -- (rl2);
\draw[transs] (rl2) -- (rl3);
\draw[transs] (rl3) -- (rl4);

\draw[transs] (r0) -- (rr1);
\draw[transs] (rr1) -- (rr2);

\end{tikzpicture}
\caption{The pair of trees being the result of the reduction $r(t)$ from Proposition~\ref{pro:no-neg-no-red}.}
\label{fig:r-reduction}
\end{figure}

Assume that such a~function $c'$ exists. Fix a~tree $t_{\pI}\in\Win{\pI}{0}{\rmax}$ and consider a~function $\fun{r}{\trees_{\Alp{0}{\rmax}}}{\big(\trees_{\Alp{0}{\rmax}}\big)^2}$ defined as follows:
\[r(t)=\Big(\spri{1}(\spri{0}(t_{\pI})),\ \spla{\pI}(\spri{1}(\spri{1}(\spri{0}(t_{\pI})), \spri{0}(t))\Big),\]
see Figure~\ref{fig:r-reduction} for a~pictorial representation of the two trees. Clearly $r$ is continuous.

Let $t\in\trees_{\Alp{0}{\rmax}}$ and $r(t)=(t_\dL,t_\dR)$. Notice that $\sigA_{\pI}(t_\dL)=(1,0,\ldots)$. The value $\sigA_{\pI}(t_\dR)$ is either $(0,0,\ldots)$ if $t\in\Win{\pI}{0}{\rmax}$ or $(2,0,\ldots)$ otherwise. Therefore, $c'\big(r(t)\big)\in\Win{\pI}{0}{\rmax}$ if and only if $\sigA_{\pI}(t_\dL)\lexeq \sigA_{\pI}(t_\dR)$ if and only if $t\notin\Win{\pI}{0}{\rmax}$ if and only if $t\in\Win{\pII}{0}{\rmax}$. Thus, $\fun{c'\circ r}{\trees_{\Alp{0}{\rmax}}}{\trees_{\Alp{0}{\rmax}}}$ is a~continuous reduction of $\Win{\pII}{0}{\rmax}$ to $\Win{\pI}{0}{\rmax}$. This is a~contradiction with~\cite[Lemma~1]{niwinski_strict} (the assumption of contractivity is redundant there by Lemma~2 from the same paper).
\end{proof}

\begin{remark}
The same argument shows lack of the reduction if we insisted on strict inequality between the $\pI$\=/signatures $\sigA_{\pI}(t_\dL)\lex \sigA_{\pI}(t_\dR)$.

By duality, there is also no reduction in the case of $\rmin=1$ and $P=\pII$.
\end{remark}

\comment{
\section{Hardness of the language}
\label{ap:hardness}

In this section we provide remaining details from the proof of Proposition~\ref{pro:W-reduction}. Recall that the function $\fun{f}{\trees_{\Ang{\rmin}{\rmax}}}{\trees_{\Angp{\rmin}{\rmax}}}$ was defined recursively as:
\begin{align*}
f\big(\spla{P}(t_\dL,t_\dR)\big)&=\splp{P}\big(f(t_\dL),f(t_\dR),f\big(c_P(t_\dL,t_\dR)\big)\big),\\
f\big(\sneg(t)\big)&=\sneg\big(f(t)\big),\\
f\big(\spri{\rmid}(t)\big)&=\spri{\rmid}\big(f(t)\big).
\end{align*}

First notice that $f$ is well\=/defined, as it always produces a~new part of a~tree before recursively applying itself again. As noted in the main body of the article, $\shave\big(f(t)\big)=t$. Additionally, the value of $f(t)$ on a~fixed-depth depends only on a~bounded number of iterations of $c_P$ and since a~composition of continuous functions is continuous, this fixed-depth part of $f(t)$ depends only on a~fixed-depth part of $t$. Therefore, $f$ is a~continuous function. Clearly, as $c_P$ maps well\=/formed trees to well\=/formed trees and being well\=/formed depends only on $\shave$ of the subtrees, $f$ also maps well\=/formed trees to well\=/formed trees.

Now, consider a~well\=/formed tree $t\in\trees_{\Ang{\rmin}{\rmax}}$. The ``only if'' part of the equivalence was already proved in the main body of the article. We need to prove the opposite ``if'' part.

\begin{replemma}{lem:construct-run}
\lemConstructRun{\comment}
\end{replemma}

\begin{proof}
The construction of $\rho$ is inductive: we go down the tree $r$. Consider a~subtree $r'=r\restr\finA$ of $r$ under a~current node $\finA\in\dom\big(r\big)$. The second coordinate $P$ of $\rho(\finA)$ (i.e.~$\pI$ or $\pII$) depends on whether $\pI$ or $\pII$ wins in $\Gg(\shave(r'))$. In the case the second coordinate of $\rho(\finA)$ is $P$ and $r'$ is of the form $\splp{P}\big(f(t_\dL),f(t_\dR),f(c_P(t_\dL,t_\dR)\big)$ then let the third coordinate $d$ be $\dL$ if $\pI$ wins $\Gg\big(r\restr\finA\cdot 2\big)$ and $\dR$ otherwise. In the remaining cases let the third coordinate $d$ be $\dD$.

First notice that together with the transitions of $\Uu$ the above definition in fact produces a~run, preserving the invariant that the current second coordinate is $P$ if and only if $P$ wins in $\Gg(\shave(r'))$ of the current subtree~$r'$. Lemma~\ref{lem:second-to-other} shows that such a~run is unique.
\end{proof}
}

\section{Relations to the rank method}
\label{ap:ranks}

This section provides an explanation of relations between $P$\=/signatures together with Lemma~\ref{lem:reduction}; and rank\=/comparison theorems.

$P$\=/signatures can be seen as a stratification of the winning set $\Wng{P}{\rmin}{\rmax}$ into a hierarchy of length $\omega_1$ (the first uncountable ordinal). This resembles the $\api{1}$\=/rank method (see for instance~\cite[Theorem~34.4]{kechris_descriptive}) which allows to stratify any set $X\in\api{1}$ by a \emph{rank} assignment: $X\ni x\mapsto\rank(x)<\omega_1$. The crucial property of that assignment is that the rank\=/order relation is of the same complexity as the original set:
\[\big\{(x,y)\mid \rank(x)\leq\rank(y)\big\}\in\asigma{1}.\]
This method was further extended by Lyapunov~\cite{lyapunov_classification} to the class of $\Rr$\=/sets.

The class of $\api{1}$ sets coincides with $\Rr$\=/sets on the first level of the $\Rr$\=/hierarchy. Moreover, the set $\Win{\pI}{1}{2}$ is topologically complete for that class. As one can show, in that basic case all the three notions coincide: $\rank(t)$, Lyapunov's $\Rr$\=/rank of $t$, and the~$\pI$\=/signature of $t$.

When we move higher in the index hierarchy, the correspondence between the winning sets $\Win{\pI}{\rmin}{\rmax}$ and the hierarchy of $\Rr$\=/sets is preserved, as expressed by the following theorem.

\begin{theorem}[{Michalewski et al~\cite{michalewski_measure_inf}}]
The set $\Win{\pI}{1}{\rmax+1}$ is topologically complete for the $\rmax$th level of the hierarchy of $\Rr$\=/sets.
\end{theorem}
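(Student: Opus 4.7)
The plan is to establish topological completeness of $\Win{\pI}{1}{\rmax{+}1}$ for the $\rmax$th level of the $\Rr$\=/hierarchy in the two customary directions: membership and hardness. Both sides rest on the signature machinery developed in Section~\ref{sec:signatures}, reinterpreting $\sigA_{\pI}$ as a canonical rank function for this winning set, with the ordinal length of the signature tuple matching the depth of nesting of the $\Rr$\=/operation.

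For membership, I would proceed by induction on $\rmax$. The base case $\rmax=1$ recovers the classical fact that $\Win{\pI}{1}{2}$ is $\api{1}$\=/complete, which coincides with the first level of the $\Rr$\=/hierarchy. For the inductive step, recall from Lemma~\ref{lem:sig-exist} that $t\in\Win{\pI}{1}{\rmax{+}1}$ iff $\sigA_{\pI}(t)\lex\infty$, and that a $\pI$\=/signature is a tuple of countable ordinals indexed by the $\pI$\=/losing priorities in $\{1,\ldots,\rmax{+}1\}$, ordered lexicographically. Peeling off the leading coordinate $\ordA_{\rmin'}$ as a countable union over $\ordA_{\rmin'}<\omega_1$ reduces membership to an inner statement about the tail of the signature, which naturally corresponds to the winning set of a game with one priority fewer. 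By the inductive hypothesis that inner statement is an $\Rr$\=/set of lower level, and the outer $\omega_1$\=/union promotes the complexity to level $\rmax$ through one additional application of the $\Rr$\=/operation.

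For hardness, I would construct a continuous reduction of a universal $\Rr$\=/set at level~$\rmax$ into $\Win{\pI}{1}{\rmax{+}1}$, mirroring the translation techniques underlying Corollary~\ref{cor:reduction} and Proposition~\ref{pro:W-reduction}. The encoding lays out the nested transfinite operations defining an $\Rr$\=/set of level $\rmax$ as a tree over $\Alp{1}{\rmax{+}1}$, using priority nodes to witness the ranks and $\spla{\pI}$/$\spla{\pII}$ choice nodes to witness the countable unions and intersections coming from the $\Rr$\=/operation. Correctness of the encoding is witnessed by the point\=/wise minimality of signatures established in Subsection~\ref{ssec:sig-optimal}: the game\=/theoretic value of the produced tree exactly recovers the set\=/theoretic rank of the original point in the universal $\Rr$\=/set.

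The principal obstacle is calibrating the precise dictionary between the nesting depth of the $\Rr$\=/operation and the number of coordinates in the signature tuple, and in handling limit stages of the inner countable ordinals uniformly across both directions. Once the rank function supplied by $\sigA_{\pI}$ is shown to mirror the nested Kolmogorov\=/Lyapunov construction degree by degree, both inclusions reduce to parallel transfinite inductions along the two stratifications; the delicate part lies in aligning the bookkeeping at successor steps, where a new $\pI$\=/losing priority is introduced, with the combinatorics of the outer $\omega_1$\=/union in the $\Rr$\=/operation.
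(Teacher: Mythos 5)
The paper does not prove this statement: it is a cited attribution to Michalewski et al.~\cite{michalewski_measure_inf}, included in Section~\ref{ap:ranks} only to situate the signature machinery within the theory of $\Rr$\=/sets. There is consequently no in\=/paper proof to compare your sketch against; you are attempting to reconstruct an external result.

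Assessed on its own merits, your sketch has three gaps. First, the $\Rr$\=/operation of the Kolmogorov--Lyapunov hierarchy is not an $\omega_1$\=/indexed union: it is a transfinite game\=/type operation applied to a scheme of sets, and the passage from level $\rmax{-}1$ to level $\rmax$ is one application of that operation. So ``peeling off the leading coordinate as a countable union over $\omega_1$'' does not correctly describe how one level sits above the next, and the inductive membership argument does not go through as written. Second, the paper itself warns, in the same section and with the concrete example in Figure~\ref{fig:two-trees}, that $\pI$\=/signatures and $\Rr$\=/ranks diverge beyond the first level: the $\Rr$\=/rank counts occurrences of $\pI$\=/losing priorities \emph{uniformly}, as a supremum over the whole optimal strategy, whereas the $\pI$\=/signature counts them locally until a lower priority resets the counter, and the example exhibits a strategy that is $\Rr$\=/rank\=/optimal yet losing. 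Your central claim that $\sigA_{\pI}$ ``mirrors the nested Kolmogorov--Lyapunov construction degree by degree'' is thus contradicted by the paper's own discussion; the coincidence holds only at the first level. Third, the hardness direction is stated as an intention, not a construction: exhibiting a continuous reduction of a universal $\Rr$\=/set at level $\rmax$ into $\Win{\pI}{1}{\rmax{+}1}$ is the substantive content of the cited theorem, and it would need to be carried out by aligning the game $\Gg(t)$ directly with the game structure underlying the $\Rr$\=/operation rather than through the signature function.
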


Moreover, Lyapunov~\cite{lyapunov_classification} claims the following theorem.
\begin{theorem}[Lyapunov~\cite{lyapunov_classification} (without a proof), see also~\cite{konovei_operations}]
For every $\rmax=1,\ldots$ the $\Rr$\=/rank order relation for $\Rr$\=/sets on the $\rmax$th level of the hierarchy lies on the same $\rmax$th level of the hierarchy.
\end{theorem}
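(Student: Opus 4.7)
The plan is to derive the statement by combining the completeness theorem of Michalewski et al.\ with Lemma~\ref{lem:reduction}. Fix $\rmax\geq 1$ and let $X$ be an $\Rr$\=/set on the $\rmax$th level, living in some Polish space $Y$. Michalewski et al.\ furnish a~continuous reduction $\fun{g}{Y}{\trees_{\Alp{1}{\rmax+1}}}$ with $x\in X$ if and only if $g(x)\in\Win{\pI}{1}{\rmax+1}$. The conceptual heart of the argument is the observation that the Lyapunov $\Rr$\=/rank of $x$ coincides, as an~ordinal invariant, with the $\pI$\=/signature $\sigA_{\pI}(g(x))$: both are canonical ordinal\=/valued measures of ``how deep'' an element sits in the respective inductive construction, and Subsection~\ref{ssec:sig-optimal} already shows that $\sigA_{\pI}$ is the pointwise minimal such assignment.

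Granting this identification, the rank\=/order relation on $X$ equals the pullback along $g\times g$ of $\{(t_\dL,t_\dR)\mid \sigA_{\pI}(t_\dL)\lexeq \sigA_{\pI}(t_\dR)\}$. I~would then invoke Lemma~\ref{lem:reduction} for the player~$\pI$ and priority range $\{1,\ldots,\rmax+1\}$, producing a~continuous $c_{\pI}$ with $c_{\pI}(t_\dL,t_\dR)\in\Wng{\pI}{1}{\rmax+1}$ exactly when $\sigA_{\pI}(t_\dL)\lexeq\sigA_{\pI}(t_\dR)$. Since $g(x)$ and $g(y)$ use only symbols from $\Alp{1}{\rmax+1}$, they are well\=/formed in $\Ang{1}{\rmax+1}$, and so is $c_{\pI}(g(x),g(y))$. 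A~routine continuous transformation $h$ that ``pushes the~$\sneg$ symbols down the~tree'' (swapping players and parities of priorities below an~odd count of $\sneg$ symbols) converts any well\=/formed tree over $\Ang{1}{\rmax+1}$ into an equivalent tree over $\Alp{1}{\rmax+1}$ with the same winner of the induced game. The composition $(x,y)\mapsto h\big(c_{\pI}(g(x),g(y))\big)$ is then a~continuous reduction of the rank\=/order relation to $\Win{\pI}{1}{\rmax+1}$, placing the former on the $\rmax$th level of the $\Rr$\=/hierarchy.

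The main obstacle will be justifying the first step: the identification of the abstract Lyapunov $\Rr$\=/rank with $\sigA_{\pI}\circ g$ as orders. One can either inspect Michalewski et al.'s construction and match its ordinal stratification with the tuple\=/of\=/countable\=/ordinals structure of $\pI$\=/signatures, or bypass this entirely by working directly with the Kolmogorov--Lyapunov definition of an~$\Rr$\=/set of level $\rmax$: interpret the defining transfinite operation as a~parity game of index $(1,\rmax+1)$ and show by induction on the $\Rr$\=/construction that the Lyapunov rank satisfies the recursive equations of Lemma~\ref{lem:sig-exist}, so that the uniqueness clause forces equality with $\sigA_{\pI}$. Once this bridge is in place, the remainder is a~single invocation of Lemma~\ref{lem:reduction}, and the entire argument hinges on making the bridge precise.
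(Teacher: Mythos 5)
The paper does not actually prove this theorem: it is quoted from Lyapunov's work, explicitly flagged ``(without a proof)'', with a pointer to Kanovei's survey for the surrounding theory of $\Rr$\=/operations. Section~\ref{ap:ranks} of the paper is purely expository, relating $\pI$\=/signatures to the classical rank methods; there is no proof to compare against. So your proposal has to be judged on its own merits, and it has a genuine gap.

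The fatal step is the identification of the Lyapunov $\Rr$\=/rank with $\sigA_{\pI}\circ g$, even at the level of the induced order relation. The paper explicitly disavows this identification immediately after stating the theorem: ``the notion of $\Rr$\=/rank cannot be used to construct optimal winning strategies in the sense of Lemma~\ref{lem:sig-opt-win}. The reason for that is the difference between $\Rr$\=/ranks and $\pI$\=/signatures.'' The concrete example at the end of Section~\ref{ap:ranks} kills your claim directly: there, $t_0$ and $\spri{3}(t_0)$ have the \emph{same} $\Rr$\=/rank $(0,2)$ but $\pI$\=/signatures $(0,0)$ and $(0,1)$ respectively, so $\Rr\text{-}\rank(\spri{3}(t_0))\leq\Rr\text{-}\rank(t_0)$ holds while $\sigA_{\pI}(\spri{3}(t_0))\lexeq\sigA_{\pI}(t_0)$ fails. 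In particular, the Lyapunov rank does \emph{not} satisfy the recursive equations of Lemma~\ref{lem:sig-exist} (it violates the strict increase forced by Item~\ref{it:sig-j-los}), so your fallback route --- invoking the uniqueness/minimality of $\sigA_{\pI}$ --- cannot close the gap either; minimality singles out $\sigA_{\pI}$ precisely \emph{against} the $\Rr$\=/rank. Consequently $(x,y)\mapsto h\bigl(c_{\pI}(g(x),g(y))\bigr)$ would reduce the $\pI$\=/signature order of $g(\cdot)$, not the Lyapunov rank order of $X$, and the two need not coincide. To actually prove Lyapunov's theorem you would need a reduction built around the $\Rr$\=/rank itself (as a supremum over the positions of an optimal strategy, in the paper's phrasing), which is a different combinatorial object and would require a different comparison game than $\Cc_P$.
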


Thus, as $\Win{\pI}{1}{\rmax+1}$ is complete for this level, the above theorem gives a~reduction between the $\Rr$\=/rank order on $\Win{\pI}{1}{\rmax+1}$ and the set $\Win{\pI}{1}{\rmax+1}$, similarly as in Lemma~\ref{lem:reduction}.

The only problem is that the notion of $\Rr$\=/rank cannot be used to construct optimal winning strategies in the sense of Lemma~\ref{lem:sig-opt-win}. The reason for that is the difference between $\Rr$\=/ranks and $\pI$\=/signatures: $\pI$\=/signatures count the number of occurrences of $\pI$\=/losing priorities until the first lower priority is seen; while $\Rr$\=/rank of a tree counts these occurrences uniformly (i.e.~it is the supremum of signatures of the subtrees inside an optimal strategy).

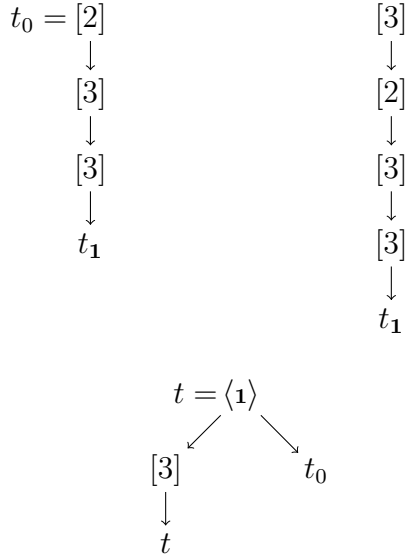
\begin{figure}
\centering
\begin{tikzpicture}
\node[toL] at (-0.3, 0) {$t_0=$};
\node[letter] (l0) at ( 0.0, -0) {$\spri{2}$};
\node[bstate] (l1) at ( 0.0, -1) {$\spri{3}$};
\node[bstate] (l2) at ( 0.0, -2) {$\spri{3}$};
\node[bstate] (l3) at ( 0.0, -3) {$t_{\pI}$};

\draw[transs] (l0) -- (l1);
\draw[transs] (l1) -- (l2);
\draw[transs] (l2) -- (l3);

\node[letter] (r0) at ( 4.0, -0) {$\spri{3}$};
\node[bstate] (r1) at ( 4.0, -1) {$\spri{2}$};
\node[bstate] (r2) at ( 4.0, -2) {$\spri{3}$};
\node[bstate] (r3) at ( 4.0, -3) {$\spri{3}$};
\node[bstate] (r4) at ( 4.0, -4) {$t_{\pI}$};

\draw[transs] (r0) -- (r1);
\draw[transs] (r1) -- (r2);
\draw[transs] (r2) -- (r3);
\draw[transs] (r3) -- (r4);

\node[toL] at (1.7, -5) {$t=$};
\node[letter] (r0) at ( 2.0, -5) {$\spla{\pI}$};
\node[bstate] (rl1) at ( 1.0, -6) {$\spri{3}$};
\node[bstate] (rl2) at ( 1.0, -7) {$t$};

\node[bstate] (rr1) at ( 3.0, -6) {$t_0$};

\draw[transs] (r0) -- (rl1);
\draw[transs] (rl1) -- (rl2);

\draw[transs] (r0) -- (rr1);
\end{tikzpicture}
\caption{An illustration of differences between $\pI$\=/signatures and the $\Rr$\=/rank.}
\label{fig:two-trees}
\end{figure}

\begin{example}
Let $t_\pI$ satisfy $t_\pI=\spri{2}(t_\pI)\in\Win{\pI}{1}{3}$. The $\Rr$\=/rank of the trees $t_0$ and $\spri{3}(t_0)$ depicted in the upper part of Figure~\ref{fig:two-trees} are the same (equal $(0,2)$ assuming a~tuple notation as for signatures). Therefore, the strategy $\Sigma$ of $\pI$ in $\Gg(t)$ that moves always to the left is optimal with respect to the $\Rr$\=/ranks, but losing.

The $\pI$\=/signatures of $t_0$ and $\spri{3}(t_0)$ are respectively $(0,0)$ and $(0,1)$. Thus, the only $\sigA$\=/optimal strategy in $\Gg(t)$ is the one that moves immediately to the right, reaches $t_0$, and wins.
\end{example}

\section{Conclusions}
\label{sec:conclusions}

The main result of this work is the construction of the languages $\lan{P}{\rmin}{\rmax}$ that solve the question of index bounds for unambiguous languages. Although the construction is not direct and relies heavily on an involved theory of signatures, these complications seem to be unavoidable when one wants to recognise languages like $\W{\rmin}{\rmax}$ in an unambiguous way.

The definition of signatures given in the paper seems to be the canonical one, as witnessed by the point\=/wise minimality from Lemma~\ref{lem:sig-exist}. The previous ways of using signatures were mainly focused on their monotonicity and well\=/foundedness, thus it was enough to assume inequalities in the invariants of Lemma~\ref{lem:sig-exist}. Here we are interested in comparing their actual values, therefore we insist on preserving these values via equalities.

Another novelty of this paper is the game $\Cc_P$ that reduces the order of the signatures in two $(\rmin,\rmax)$\=/parity games (with the swap $\sneg$) to another $(\rmin,\rmax)$\=/parity game (also with $\sneg$). Both, a~stronger variant of signatures and the game $\Cc_P$ might be relevant for a~general study of parity games.

{\bf Acknowledgements}. The author would like to thank Szczepan Hummel, Damian Niwi{\'n}ski, and Igor Walukiewicz for fruitful and inspiring discussions on the topic. Moreover, the author is grateful to Bartek Klin, Kamila {\L}yczek, Filip Murlak, Grzegorz Rz{\k a}ca, and the anonymous referees for a~number of editorial suggestions about the paper.

\bibliography{mskrzypczak}

\end{document}